\tikzset{negated/.style={
		decoration={markings,
			mark= at position 0.5 with {
				\node[transform shape] (tempnode) {$\backslash$};
			}
		},
		postaction={decorate}
	}
}
\def\texorpdfstring#1{}
\newcommand{\logspace}{{\mathrm{L}}}
\newcommand{\NL}{{\mathrm{NL}}}
\newcommand{\coNL}{{\mathrm{coNL}}}
\newcommand{\ACzero}{{\mathrm{AC}^0}}
\newcommand{\Alogtime}{{\mathrm{ALogTime}}}
\newcommand{\NC}{{\mathrm{NC}}}
\newcommand{\Ptime}{{\mathrm{P}}}
\newcommand{\NP}{{\mathrm{NP}}}
\newcommand{\coNP}{{\mathrm{coNP}}}
\newcommand{\IDelta}{{\mathrm{I}\Delta}}
\newcommand{\PV}{{\mathrm{PV}}}
\newcommand{\VL}{{\mathrm{VL}}}
\newcommand{\VNL}{{\mathrm{VNL}}}
\newcommand{\VNC}{{\mathrm{VNC}}}
\newcommand{\PSA}{{\mathrm{PSA}}}
\newcommand{\Gtheory}{{\mathrm G}}
\newcommand{\Stheory}{{\mathrm S}}
\newcommand{\Ttheory}{{\mathrm T}}
\newcommand{\Utheory}{{\mathrm U}}
\newcommand{\Sonetwo}{{\Stheory^1_2}}
\newcommand{\eF}{{\mathrm e \cal F}}
\newcommand{\CF}{{\mathrm{CF}}}
\newcommand{\GL}{{\mathrm{GL}}}
\newcommand{\GNL}{{\mathrm{GNL}}}
\newcommand{\SigmaCNF}{{\Sigma\mathrm{CNF}}}
\newcommand{\SigmaKrom}{{\Sigma\mathrm{Krom}}}
\newcommand{\LDT}{{\mathrm{LDT}}}
\newcommand{\LNDT}{{\mathrm{LNDT}}}
\newcommand{\eLDT}{{\mathrm{eLDT}}}
\newcommand{\eLNDT}{{\mathrm{eLNDT}}}
\newcommand{\LK}{{\mathrm{LK}}}
\newcommand{\eLK}{{\mathrm{eLK}}}
\newcommand{\dLK}[1]{\hbox{\rm #1-}{\mathrm{LK}}}
\newcommand{\oneLK}{{\dLK 1}}
\newcommand{\twoLK}{{\dLK 2}}
\newcommand{\DT}{{\mathrm{DT}}}
\newcommand{\NDT}{{\mathrm{NDT}}}
\newcommand{\eDT}{{\mathrm{eDT}}}
\newcommand{\eNDT}{{\mathrm{eNDT}}}
\newcommand{\TreeSys}[1]{{\hbox{\rm Tree-}#1}}
\newcommand{\TreeLDT}{{\TreeSys\LDT}}
\newcommand{\TreeLNDT}{{\TreeSys\LNDT}}
\newcommand{\TreeLK}{{\TreeSys\LK}}
\newcommand{\TreeoneLK}{{\TreeSys\oneLK}}
\newcommand{\TreetwoLK}{{\TreeSys\twoLK}}
\newcommand{\TreeELK}{{\TreeSys\eLK}}
\newcommand{\TreeDLK}[1]{{\TreeSys{\dLK{#1}}}}
\newcommand{\TreeELDT}{{\TreeSys{\eLDT}}}
\newcommand{\Conj}{{\mathrm{Conj}}}
\newcommand{\Disj}{{\mathrm{Disj}}}
\newcommand{\Tms}{{\mathrm{Tms}}}
\newcommand{\Cls}{{\mathrm{Cls}}}
\newcommand{\DTms}{{\mathrm{DTms}}}
\newcommand{\NF}{{\mathrm{NF}}}
\newcommand{\AndDT}{{\hbox{\sc And}}}
\newcommand{\AndNDT}{\AndDT}
\newcommand{\OrDT}{{\hbox{\sc Or}}}
\newcommand{\DTx}{{\mbox{\sc Dt}}}
\newcommand{\Reach}{{\mbox{\it Reach}}}
\def\sequent{{\mbox{\Large $\,\rightarrow\,$}}}
\def\fCenter{\sequent}
\def\liff{\leftrightarrow}
\def\pprime{{\prime\prime}}
\newcommand{\taller}{\raise 0.8ex \hbox{\vphantom{)}}}
\def\srt{{\smalltriangleright}}
\def\veceprime{{\vec e ^{\, \prime}}}
\def\vecepprime{{\vec e ^{\, \pprime}}}
\def\vecpprime{{\vec p ^{\, \prime}}}
\def\vecppprime{{\vec p ^{\, \pprime}}}
\newcommand{\rddots}{{\mathinner{\mkern1mu\raise1pt\vbox{\kern7pt\hbox{.}}\mkern2mu\raise4pt\hbox{.}\mkern2mu\raise7pt\hbox{.}\mkern1mu}}}
\newcommand{\proofdots}{{\ddots\vdots\,\rddots}}
\newcommand{\proofdotsL}[1]{{\proofdots\hbox to0pt{\kern10pt\raisebox{1ex}{$#1$}\hskip 0pt minus 2in}}}
\def\Largelongleftrightarrow{{\mbox{\Large $\longleftrightarrow$}}}
\def\lorLeft{{\mbox{\it $\lor$\kern-1pt -l}}}
\def\lorRight{{\mbox{\it $\lor$\kern-1pt -r}}}
\def\landLeft{{\mbox{\it $\land$\kern-1pt -l}}}
\def\landRight{{\mbox{\it $\land$\kern-1pt -r}}}
\newtheorem{theorem}{Theorem}[section]
\newtheorem{lemma}[theorem]{Lemma}
\newtheorem{proposition}[theorem]{Proposition}
\newtheorem{claim}[theorem]{Claim}
\newtheorem{question}[theorem]{Question}
\theoremstyle{definition}
\newtheorem{definition}[theorem]{Definition}
\newtheorem{example}[theorem]{Example}
\newtheorem{remark}[theorem]{Remark}
\newlist{integerenumerate}{enumerate}{1}
\setlist[integerenumerate]{label=\textnormal{(\arabic*)}, ref=(\arabic*)}
\newlist{alphenumerate}{enumerate}{1}
\setlist[alphenumerate]{label=\textnormal{(\alph*)}, ref=(\alph*)}
\title{Proof complexity of systems of (non-deterministic) decision trees and branching programs}
\author{%
Sam Buss\thanks{Supported in part by Simons Foundation grant 578919} \\[-0.6ex]
{\small Dept.\ of Mathematics } \\[-0.8ex]
{\small UC San Diego} \\[-0.8ex]
{\small \tt sbuss@ucsd.edu}
\and
Anupam Das\thanks{Supported by a a Marie Sk\l{}odowska-Curie fellowship,
\emph{Monotonicity in Logic and Complexity}, ERC project 753431.} \\[-0.6ex]
{\small Dept.\ of Computer Science} \\[-0.8ex]
{\small University of Copenhagen } \\[-0.8ex]
{\small \tt anupam.das@di.ku.dk}
\and
Alexander Knop \\[-0.6ex]
{\small Dept.\ of Mathematics } \\[-0.8ex]
{\small UC San Diego} \\[-0.8ex]
{\small \tt aknop@ucsd.edu}}
\begin{document}
\maketitle

\begin{abstract}
This paper studies propositional proof systems in which
lines are sequents of decision trees or branching programs ---
deterministic and nondeterministic.
The systems LDT and LNDT
are propositional proof systems in which lines represent
deterministic or non-deterministic decision trees.
Branching programs are modeled as decision dags.
Adding extension to LDT and LNDT gives systems
eLDT and eLNDT in which lines represent
deterministic and non-deterministic branching programs,
respectively.

Deterministic and non-deterministic
branching programs correspond to log-space (L) and nondeterministic
log-space (NL).
Thus the systems eLDT and eLNDT are propositional
proof systems that reason with (nonuniform)
L and NL properties.

The main results of the paper are simulation and
non-simulation results for tree-like and dag-like
proofs in the systems LDT, LNDT, eLDT, and eLNDT.
These systems are also compared with Frege
systems, constant-depth Frege systems and extended
Frege systems.  
\end{abstract}

\section{Introduction}

Propositional proof systems are widely studied because of their
connections to complexity classes and their usefulness
for computer-based reasoning. The first connections to computational
complexity arose largely from the work of
Cook and Reckhow \cite{Cook:PV,CookReckhow:proofsstoc,CookReckhow:proofs},
showing a connection to the $\NP$-$\coNP$ question.
These results, building on the work of Tseitin~\cite{Tseitin:derivation}
initiated the study of the relative efficiency of propositional
proof systems. The present paper introduces
propositional proof systems that are closely connected to
log-space ($\logspace$) and nondeterministic log-space ($\NL$).

Our original motivation for this study was to investigate propositional
proof systems corresponding to the first-order bounded
arithmetic theories $\VL$ and $\VNL$ for $\logspace$ and $\NL$,
see~\cite{CookNguyen:book}.
This follows a long line of work defining formal theories of bounded
arithmetic that correspond to computational complexity
classes, as well as to provability in propositional proof
systems. The first results of this type were due (independently)
to Paris and Wilkie~\cite{ParisWilkie:counting}
who gave a translation from $\IDelta_0$ to
constant-depth Frege ($\ACzero$-Frege)
proofs and to Cook~\cite{Cook:PV}
who gave a translation from $\PV$ to
extended Frege ($\eF$) proofs. Since the first-order
bounded arithmetic theory~$\Sonetwo$ is conservative
over the equational theory~$\PV$,
Cook's translation also applies
to the bounded arithmetic theory $S^1_2$~\cite{Buss:bookBA}.
As shown in the table below,
similar propositional translations have since been given
for a range of other theories, including first-order, second-order
and equational theories.
\begin{center}
    \begin{tabular}{c c c l}
        \toprule
        Formal & Propositional & Complexity & \\
        Theories & Proof Systems & Class & \\
        \midrule
        $\PV$, $\Sonetwo$ & $\eF$ & $\Ptime$ & \cite{Cook:PV,Buss:bookBA} \taller \\[0.3ex]
        $\PSA$, $\Utheory^1_2$ & QBF & PSPACE & \cite{Dowd:PSA,Buss:bookBA} \\[0.3ex]
        $\Ttheory^i_2$, $\Stheory^{i+1}_2$ & $\Gtheory_i$, $\Gtheory_{i+1}^*$
              & $\Ptime^{\Sigma_i^{\mathrm p}}$
              & \cite{KrajicekPudlak:quantified,KrajicekTakeuti:indfreeprove,Buss:bookBA}
              \\[0.3ex]
        $\VNC^0$ & Frege ($\cal F$) & $\Alogtime$
              & \cite{CookMorioka:NCone,CookNguyen:book,Arai:AID} \\[0.3ex]
        $\VL$ & $\GL^*$ & $\logspace$ & \cite{Perron:logspace,CookNguyen:book} \\[0.3em]
        $\VNL$ & $\GNL^*$ & $\NL$ & \cite{Perron:thesis,CookNguyen:book} \\
        \bottomrule
    \end{tabular}
\end{center}
The first three theories are first-order theories;
the last three theories are second-order. The last three
theories could also be viewed as
multi-sorted first-order theories, but their formalization as second-order
theories makes it possible for them to work elegantly with
weak complexity classes.
(For an introduction to these and related results, see the books
\cite{Buss:bookBA,CookNguyen:book,Krajicek:book,Krajicek:bookTwo}.)

A hallmark of the propositional translations in the table
above is that the lines in the propositional proofs express
(nonuniform) properties in the corresponding complexity
class. For instance, a line in a Frege proof is a
propositional formula, and the evaluation problem
for propositional formulas is complete for alternating
log-time ($\Alogtime$), cf.~\cite{Buss:BFVP}. Likewise, a line in a
$\eF$ proof is (implicitly) a Boolean circuit,
and the Boolean circuit value problem is well known to be complete
for $\Ptime$, cf.~\cite{Ladner:CirVal}.
In the usual formulation of $\eF$,
the lines only ``implicitly''
express Boolean circuits, since it is necessary to expand
the
definitions of extension variables to form the circuit;
however, Je\v{r}\'{a}bek~\cite{Jerabek:dualweakphp}
made this connection explicit in a propositional proof system
Circuit-Frege $\CF$, in which lines are actually Boolean circuits.

The present paper's main goal is to define alternatives
for the
proof systems $\GL^*$ and $\GNL^*$ corresponding to log-space
and nondeterministic log-space, see \cite{Perron:logspace,Perron:thesis,CookKolokolova:Gradel,CookKolokolova:NL}.  The proof system $\GL^*$
restricts cut formulas to be ``$\SigmaCNF(2)$'' formulas;
the subformula property then implies that proofs contain only
$\SigmaCNF(2)$ formulas when proving $\SigmaCNF(2)$ theorems.
$\GNL^*$ similarly restricts cut formulas to be ``$\SigmaKrom$'' formulas.
(A $\SigmaKrom$ formula has the form
$\exists \vec{z} \varphi(\vec{z}, \vec{x})$, where $\varphi$ is a conjunction
$C_1 \land C_2 \land \cdots \land C_n$ with
each $C_i$ a disjunction of any number of $x$-literals and at most two
$z$-literals.)
$\SigmaCNF(2)$ and $\SigmaKrom$
have expressive power equivalent to
nonuniform $\logspace$ and $\NL$ respectively~\cite{Johannsen:detLogSpace,Gradel:SOfragments},
but they are
are somewhat ad hoc classes of quantified
formulas, and their connections to $\logspace$ and $\NL$
are indirect.
In this paper, we propose new proof systems,
called $\eLDT$ and $\eLNDT$, intended to be alternatives for $\GL^*$ and
$\GNL^*$ respectively. The lines in $\eLDT$ and $\eLNDT$ proofs are
sequents of formulas expressing \emph{branching programs}
and \emph{nondeterministic branching programs}, respectively.
This follows an earlier unpublished suggestion of
S.\ Cook~\cite{Cook:EdinburghSlides}, who gave a system for~$\logspace$ based on branching programs via ``Prover-Liar'' games (see \cite{BussPudlak:howtolie}).
The advantage of our systems is that deterministic and nondeterministic branching
programs correspond directly to nonuniform $\logspace$ and $\NL$ respectively
and do not require the use of quantified formulas. (See \cite{Wegener00}
for a comprehensive introduction to branching programs.)

To design the proof systems $\eLDT$ and $\eLNDT$,
we need to choose representations for
branching programs. For this, we use a formula-based
representation, as this fits well into the customary
frameworks for proof systems. The formulas appearing in
$\eLDT$ and $\eLNDT$ proofs will be descriptions of
\emph{decision trees}.  Decision trees are not as
powerful as branching programs since branching programs
may be dags instead of trees. Accordingly, we also
allow extension variables. The use of extension variables
allows decision trees to express branching programs; this is similar
to the way the extension variables in extended Frege
proofs allow formulas to express circuits.
An example is given in the figure on page~\pageref{fig:bp-to-edt}.

We start in Section~\ref{sec:LDT_LNDT} describing
proof systems $\LDT$ and $\LNDT$ that work with
just deterministic and nondeterministic
decision trees (without extension variables). Deterministic
decision trees are represented by formulas
using a single
``case'' or ``if-then-else'' connective, written in
infix notation $A p B$, which means
``if $p$ is false, then~$A$, else~$B$''.  The condition~$p$
is required to be a literal, but $A$ and $B$ are
arbitrary formulas. The system
$\LDT$ is a sequent calculus system in which all formulas
are decision trees. Nondeterministic decision trees are represented
with formulas that may also use disjunctions, allowing
formulas of the form $A\lor B$. The system
$\LNDT$ is a sequent calculus in which all
formulas are nondeterministic decision trees.

$\LDT$ and $\LNDT$ are weak systems; in fact, they are both
polynomially simulated by depth-2 $\LK$
(the sequent calculus~$\LK$ with all formulas of depth two).
Figure~\ref{fig:simulations} shows the equivalences between
systems as currently established. The equivalences and
separations that concern $\LDT$ and $\LNDT$ are proved in
Section~\ref{sec:NDTsystems}.

Section~\ref{sec:eLDT_eLNDT} introduces
the proof systems $\eLDT$ and $\eLNDT$ for branching
programs and nondeterministic branching programs.
These again are sequent calculus systems. These systems
are obtained from $\LDT$ and $\LNDT$ by adding the
extension rule, thereby effectively changing the
expressive power of formulas from decision trees
to decision diagrams.  (Decision diagrams are of course
the same as a branching programs).

An important
issue is designing these proof systems is how to
handle isomorphic or bisimilar branching programs. Two branching
programs $A$ and~$B$ are isomorphic if there
is an isomorphism (a bijection) between the nodes of the branching
programs.  The most convenient solution perhaps would be to allow
the propositional proof systems to freely replace any branching
program with any isomorphic branching program: for this,
we could allow ``isomorphism axioms'' or ``bisimilarity axioms''
$A \liff B$ whenever the two programs are isomorphic or bisimilar
(respectively). For instance, isomorphism axioms of this type
were used by Je\v{r}\'{a}bek~\cite{Jerabek:dualweakphp} for the reformulation
of extended Frege using Boolean circuits as lines. The problem with using
 isomorphism or bisimilarity axioms is that --- as argued in the next
 paragraph --- the isomorphism and bisimilarity problems for branching
programs are known to be in $\NL$, but they not known to be
in~$\logspace$. In other words, it is open whether valid isomorphism
or bisimilarity axioms are recognizable in log-space. This make the use
of these axioms undesirable, at least for $\eLDT$, as it is
a proof system for log-space.

As a sketch of how to recognize bisimilarity with a
$\NL$ algorithm, let $A$ and $B$ be branching programs.
A ``path'' in either $A$ or $B$ is specified by some sequence
of values $v_1, v_2, v_3,\dots$ of \emph{true} or \emph{false}
(1 or 0): a path is traversed in the obvious way, starting the
source of the branching program, and using the value $v_i$ to decide
how to branch when reaching the $i$-th vertex. (Note this allows a
variable to be given conflicting truth values at different points in
the path.) Then $A$ and $B$ are \emph{bisimilar} provided that
any given path in~$A$ reaches a vertex labelled with
a literal~$p$ or a sink vertex labelled with 1 or~0
if and only if the same path in~$B$ reaches a vertex labelled
with the same literal~$p$ or a sink vertex labelled with the
same value $1$ or~$0$. This is clearly
$\coNL$ verifiable; namely, co-nondeterministically choose
a path to traverse simultaneously in $A$ and~$B$.  Two branching
programs are \emph{isomorphic} provided that they are bisimilar, and
that in addition, any two paths reach distinct nodes in~$A$ if and only if
they reach distinct nodes in~$B$.  This property clearly can
also be checked co-nondeterministically. Since $\NL = \coNL$ (cf.~\cite{Immerman:nspace,Szelepcsenyi:NLCSL}),
these properties are also in~$\NL$.

One way to handle isomorphism and bisimilarity
would be to nonetheless use (say) isomorphism axioms,
but require that they be accompanied by an explicit isomorphism.
In our setting, this might mean giving an explicit
renaming of extension variables that makes the two formulas and the
definitions of their associated extension variables identical.
We instead adopt a more conservative approach, and do not
allow isomorphism axioms. Instead, the equivalence of
isomorphic branching programs (and more generally, of
bisimilar branching programs) is proved explicitly, using
induction on the size of the branching programs.

Since formulas in $\eLDT$ and $\eLNDT$ proofs express nonuniform
$\logspace$ and $\NL$ properties, respectively, they are intermediate
in expressive power between Boolean formulas (expressing $\NC^1$
properties) and Boolean circuits (expressing nonuniform $\Ptime$
properties).
Thus it is not surprising that, as shown in Figure~\ref{fig:simulations},
these two systems are between Frege and extended Frege in strength.
In addition, since $\NL$ properties can be expressed by quasipolynomial
formulas, it is not unexpected that Frege proofs can quasipolynomially
simulate $\eLNDT$, and hence $\eLDT$.  These facts are proved
in Section~\ref{sec:eDTequivs}.

\begin{figure}[!t]
    \begin{center}
    	\begin{tikzpicture}[>=latex, thick]
	        \node (T1F) [draw] at (0, 1.2) {$\TreeDLK{1}$};
	        \node (TLDT) [draw] at (0, 3) {$\TreeLDT$};
	        \node (LDT) [draw] at (0, 4.5) {$\TreeDLK{2} \underset{\text{\tiny Thm~\ref{thm:2LK_LNDT}}}{\Largelongleftrightarrow} \TreeLNDT \underset{\text{\tiny Thm~\ref{thm:tree-lndt-equiv-ldt}}}{\Largelongleftrightarrow} \LDT \overset{\text{\tiny Thm~\ref{thm:ldt-sim-1lk}}}{\underset{\text{\tiny Thm~\ref{thm:1lk-sim-ldt}}}{\Largelongleftrightarrow}} \oneLK$
	        };
	        \node (LNDT) [draw] at (0, 6) { $2$-$\LK \underset{\text{\tiny Thm~\ref{thm:2LK_LNDT}}}{\Largelongleftrightarrow} \LNDT$ };
	        \node (F) [draw] at (0, 7.5) { Frege $\Largelongleftrightarrow \LK \Largelongleftrightarrow \TreeLK$};
	        \node (eLDT) [draw] at (0, 9) {$\eLDT$ };
	        \node (eLNDT) [draw] at (0, 10) {$\eLNDT$ };
	        \node (eF) [draw] at (0, 11) {$\eLK \Largelongleftrightarrow \TreeELK$};
	        \draw[->] (TLDT) to[out=-60, in=60] node[right] {\tiny Thm~\ref{thm:ldt-sim-1lk}}  (T1F);
	        \draw[->] (T1F) to[out=120, in=-120] node[left]
	                    {$\underset{\text{\tiny Thm~\ref{thm:1lk-sim-ldt}}}{\text{~~~qp}}$}  (TLDT);
	        \draw[->] (LDT) to[out=-60, in=60] (TLDT);
	        \draw[->, dashed] (TLDT) to[out=120, in=-120] (LDT);
	        \draw[->] (LNDT) to[out=-60, in=60] (LDT);
	        \draw[->, dashed] (LDT) to[out=120, in=-120] (LNDT);
	        \draw[->] (F) to[out=-60, in=60] (LNDT);
	        \draw[->, dashed] (LNDT) to[out=120, in=-120] (F);
	        \draw[->] (eLDT) to node[right]{\tiny Thm~\ref{thm:ELDT_LK}} (F);
	        \draw[->] (eLNDT) to (eLDT);
	        \draw[->] (eF) to (eLNDT);
	        \draw[->] (F) to[out=150, in=-150] node[left]
	                    {$\underset{\text{\tiny Thm~\ref{thm:LK_ELNDT}}}{\text{~~~qp}}$}  (eLNDT);
	    \end{tikzpicture}
    \end{center}

    \caption{Relations between proof systems.
        $\rightarrow$ means ``polynomially simulates''; $\rightarrow_{qp}$
        means ``quasipolynomially simulates'';  $\dashrightarrow$ means
        ``exponentially separated from''. $d$-$\LK$ is the system of dag-like
        $\LK$ proofs with only depth $d$ formulae occurring (atomic formulae
        have depth $0$)
        By default,
        all proof systems
        allow dag-like proofs, unless they are labeled as ``Tree''.
    }
    \label{fig:simulations}
\end{figure}
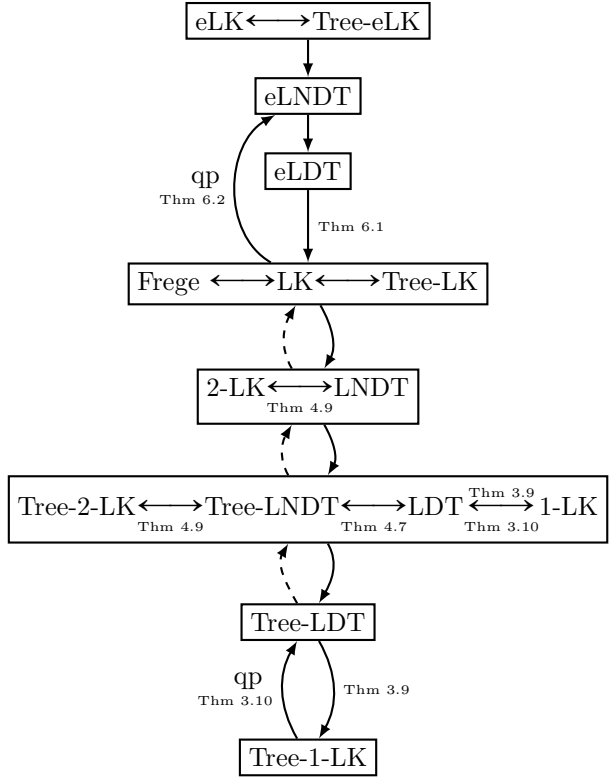

\section{
Decision tree formulas and \texorpdfstring{$\LDT$}{LDT} proofs
}
\label{sec:LDT_LNDT}
\label{sec:DTdefs}

This section describes decision tree ($\DT$) formulas, and the associated
sequent calculus proof system $\LDT$.  All our proof systems are
\emph{propositional} proof systems with variables $x, y, z \dots$ intended to
range over the Boolean values \emph{False} and \emph{True}.  We use $0$ and $1$
to denote the constants \emph{False} and \emph{True}, respectively.
A \emph{literal} is either a propositional variable $x$ or a negated
propositional variable $\overline x$. We use use variables $p,q,r,\dots$ to
range over literals.

The only connective for forming decision tree  formulas ($\DT$ formulas)
is the 3-ary ``case'' function, written in infix notation
as $(A p B)$ where $A$ and~$B$ are formulas and $p$ is
required to be a literal. This informally means
``if $p$ is false, then $A$, else $B$''. The syntax is formalized
by:
\begin{definition}\label{def:DT}
The \emph{decision tree formulas}, or \emph{$\DT$ formulas}
for short, are inductively defined by
\begin{integerenumerate}
\item any literal $p$ is a $\DT$ formula, and
\item if $A$ and $B$ are $\DT$ formulas
and $p$ is a literal, then $(A p B)$ is a $\DT$ formula.
We call $p$ a \emph{decision literal}.
\end{integerenumerate}
\end{definition}
The parentheses in~(2) ensure unique
readability, but we informally write just $ApB$ when the meaning is
clear.

Suppose $\alpha$ is a truth assignment to the variables; the
semantics of $\DT$ formulas is defined by extending $\alpha$ to
be a truth assignment to all $\DT$ formulas by inductively
defining
\begin{eqnarray}
\label{eq:TruthDefnDT}
\alpha(\overline x) & = & 1-\alpha(x) \\[1ex]
\nonumber
\alpha( ApB ) &=&
\begin{cases}
    \alpha(A) \quad & \text{if } \alpha(p) = 0 \\
    \alpha(B)       & \text{otherwise.}
\end{cases}
\end{eqnarray}
It is important that only \emph{literals} $p$
may serve as the decision literals in DT formulas.
Notably, for
$C$ a complex formula, an expression of the form
$( A \, C \, B)$, which evaluates to $A$ if $C$~is
true and to $B$ if $C$~is false, would in general be only a
decision \emph{diagram}, not a decision tree.

Although there is no explicit negation
of $\DT$ formulas, we informally define the negation $\overline A$ of a $\DT$ formula inductively
by letting $\overline {\overline x}$ denote~$x$,
and letting $\overline {A p B}$ denote the formula
$\overline A\, p\, \overline B$.
Of course $\overline A$ is a $\DT$ formula whenever $A$ is,
and $\overline A$ correctly expresses the negation of~$A$.
Notice also that negative decision literals are `syntactic sugar',
since $A \bar{p} B$ is equivalent to $B p A$.
Nonetheless the notation is useful for making later definitions more intuitive.

Our definition of $\DT$ formulas is somewhat different from
the usual definition of decision trees. The more common
definition would allow $0$ and $1$ as atomic formulas
instead of literals~$p$
as in condition~(1) of Definition~\ref{def:DT}.
We call such formulas $0/1$-$\DT$ formulas.
$\DT$ formulas and $0/1$-$\DT$ formulas are
are equivalent in expressive power. The constants
$0$ and~$1$ are equivalent to $p p \overline p$ and
$\overline p p p$, for any literal~$p$.
More generally, any formula $0pA$, $1pA$,
$Ap0$ or $Ap1$ is equivalent to
$p p A$, $\overline p p A$, $A p \overline p$, or $A p p$,
respectively. Conversely,
a literal~$p$, when used as atom, is equivalent to $0p1$.

\begin{remark}
[Expressive power of decision trees]
It is easy to decide the validity or satisfiability
of a $\DT$ formula with a log-space algorithm.
A $\DT$ formula is presented as fully parenthesized,
syntactically correct formula, and it is well-known that
formulas can be efficiently parsed in $\logspace$.
To check satisfiability, for example, one examines
each leaf in the formula tree (each atomic subformula~$p$)
and verifies whether the
path from the root to the leaf, assigning true to the
literal at the leaf, is permitted under
any consistent assignment of truth values to
variables.

The size of a $\DT$ formula~$A$ is the number of occurrences of
atomic formulas in~$A$.  Recall that a (Boolean) CNF
formula is a conjunction of disjunctions of literals;
each such disjunction is called a \emph{clause}. Likewise
a (Boolean) DNF formula is a disjunction of conjunctions
of literals; each such conjunction is called a \emph{term}.
A $\DT$ formula~$A$ of size $n$ can be expressed
as a DNF formula of size $O(n^2)$ with at most $n$ disjuncts.
This is defined formally
as $\Tms(A)$ in Section~\ref{sec:LK_and_LDT}: informally,
$\Tms(A)$ is formed by converting the formula to
a $0/1$-$\DT$ formula, and then forming the disjunction,
taken over all leaves labelled by a~$1$, of the terms
expressing that that leaf is reached.  A dual construction expresses
a $\DT$ formula~$A$ as a CNF, denoted $\Cls(A)$
of size $O(n^2)$ with at most $n$ conjuncts.

It is folklore that the construction can be partially
reversed: namely any Boolean function that is equivalently
expressed by a DNF $\varphi$ and a CNF $\psi$ can be
represented by a $\DT$ formula of size quasipolynomial
in the sizes of $\varphi$ and $\psi$. This bound
is optimal, as \cite{JuknaRSW99} proves a quasipolynomial
lower bound.
\end{remark}

We next define the proof system $\LDT$ for reasoning
about $\DT$ formulas.  Lines in an $\LDT$ proof are sequents,
hence they express disjunctions of $\DT$'s.  Thus lines
in $\LDT$ proofs can express DNF properties: for these,
the validity problem is non-trivial, in fact, $\coNP$-complete.

\begin{definition}
A \emph{cedent}, denoted $\Gamma$, $\Delta$ etc.,
is a multiset of formulas; we often use
commas for multiset union, and write $\Gamma, A$
for the multiset $\Gamma,\{A\}$. A \emph{sequent}
is an expression $\Gamma\sequent \Delta$ where
$\Gamma$ and~$\Delta$ are cedents. $\Gamma$ and $\Delta$
are called the \emph{antecedent} and \emph{succedent},
respectively.
\end{definition}
The intended meaning
of $\Gamma\sequent \Delta$
is that if every formula in $\Gamma$ is true, then some
formula in~$\Delta$ is true.  Accordingly, $\Gamma \sequent \Delta$
is true under a truth assignment~$\alpha$ iff
$\alpha(A) = 0$ for some $A \in \Gamma$ or
$\alpha(A) = 1$ for some $A \in \Delta$. A sequent is
\emph{valid} iff it is true for every truth assignment.

\begin{definition}\label{def:LDT}
The sequent calculus $\LDT$ is a proof system
in which lines are sequents of $\DT$ formulas.
The valid initial sequents (axioms) are, for $p$~any
literal,
\[
p \sequent p\qquad\qquad
p, \overline p \sequent \qquad\qquad
\sequent p, \overline p .
\]

\noindent
The rules of inference are:
\bigskip

\noindent
\def\myExtraSpace{\hspace{4em}}
\begin{tabular}{lc@{\myExtraSpace}c}
{\bf Contraction rules:} &
\Axiom$A , A , \Gamma \fCenter \Delta$
\LeftLabel{\emph{c-l:}}
\UnaryInf$A, \Gamma \fCenter \Delta$
\DisplayProof &
\Axiom$\Gamma \fCenter \Delta, A , A$
\LeftLabel{\emph{c-r:}}
\UnaryInf$\Gamma \fCenter \Delta, A$
\DisplayProof \\[4ex]
{\bf Weakening rules:} &
  \Axiom$\Gamma \fCenter \Delta$
  \LeftLabel{\emph{w-l:}}
  \UnaryInf$A, \Gamma \fCenter \Delta$
  \DisplayProof &
  \Axiom$\Gamma \fCenter \Delta$
  \LeftLabel{\emph{w-r:}}
  \UnaryInf$\Gamma \fCenter \Delta, A$
  \DisplayProof \\[4ex]
{\bf Cut rule: } & \multicolumn{2}{c}{
\Axiom$\Gamma \fCenter\Delta, A$
\Axiom$A, \Gamma \fCenter \Delta$
\LeftLabel{\emph{cut:}}
\BinaryInf$\Gamma \fCenter \Delta$
\DisplayProof } \\[4ex]
{\bf Decision rules:} & \multicolumn{2}{c}{
\Axiom$A,\Gamma \fCenter \Delta, p$
\Axiom$p, B, \Gamma \fCenter \Delta$
\LeftLabel{\emph{dec-l:}}
\BinaryInf$ApB, \Gamma \fCenter \Delta$
\DisplayProof } \\[4ex]
& \multicolumn{2}{c}{
\Axiom$\Gamma \fCenter \Delta, A, p$
\Axiom$p, \Gamma \fCenter \Delta, B$
\LeftLabel{\emph{dec-r:}}
\BinaryInf$\Gamma \fCenter \Delta, ApB$
\DisplayProof }
\end{tabular}
\bigskip

\noindent
Proofs are, by default, dag-like.
I.e.\ a \emph{proof} of a sequent $S$ in $\LDT$ is a sequence
$(S_0, \dots , S_n)$ such that $S$ is $S_n$ and each $S_k$ is either an
initial sequent or is the conclusion of an inference step whose premises
occur amongst $(S_i)_{i<k}$. The subsystem where proofs are restricted
to be tree-like (i.e.\ trees of sequents composed by inference steps)
is denoted $\TreeLDT$.

The \emph{size} of a proof is the sum of the sizes of the formulas
occurring in the proof.
\end{definition}
The inference rules that are new to $\LDT$ are the two
decision rules, \emph{dec-l} and \emph{dec-r}.  Since
$ApB$ is equivalent to
$(A \lor p) \land (B \lor \overline p)$,
the lower sequent of a \emph{dec-r} is true (under some
fixed truth assignment) iff both upper sequents are true
under the same assignment. This property of \emph{dec-r}
inferences is
called ``invertibility''; in particular, it means
that the \emph{dec-r} rule is sound.
Similarly, since $ApB$ is also equivalent to
$(A\land \overline p)\lor (B \land p)$, the
\emph{dec-l} rule is also sound and invertible.

\begin{remark}
[Cut-free completeness]
\label{rem:cut-free-completeness}
The invertibility properties also imply that the
cut-free fragment of $\LDT$ is complete.  To prove this
by induction on the complexity of sequents,
start with a valid sequent $\Gamma \sequent \Delta$;
choose any non-atomic formula $ApB$ in $\Gamma$ or $\Delta$,
and apply the appropriate decision rule \emph{dec-l} or
\emph{dec-r} that introduces this formula.
The upper sequents of this inference are also valid. Since
they have logical complexity strictly less then the logical
complexity of $\Gamma\sequent\Delta$, and thus, arguing
by induction, they have cut-free proofs. The base case
of the induction is when $\Gamma\sequent\Delta$ contains
only atomic formulas; in this case, it can be inferred from an initial
sequent with weakenings.
Note that this shows in fact, that any valid
sequent can be proved in $\LDT$ using only
decision rules, weakenings, and initial sequents.
The system also enjoys a `local' cut-elimination procedure, via standard techniques, but that is beyond the scope of this work.
\end{remark}

\begin{proposition}\label{prop:LDTidentity}
    The following have polynomial size, cut-free,
    $\TreeLDT$ proofs:
    \begin{alphenumerate}[nolistsep]
        \begin{multicols}{2}
            \item \label{item:LDTidentity-A-A}
                $A \sequent A$
            \item $\sequent A, \overline A$
            \item $A, \overline A \sequent$
            \item $A \sequent p, ApB$
            \item $p, B \sequent ApB$
            \item $ApB \sequent A,p$
            \item $ApB, p \sequent B$
        \end{multicols}
    \end{alphenumerate}
\end{proposition}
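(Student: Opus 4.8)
The plan is to prove all seven items by induction on the structure of the principal formula(s), in each case using only the decision rules, the weakenings, and the initial sequents, so that the resulting proofs are automatically cut-free, tree-like, and (as I check at the end) of polynomial size. The base cases are immediate: when the relevant formula is a literal $p$, items \ref{item:LDTidentity-A-A}, (b), (c) are exactly the three initial sequents $p \sequent p$, $\sequent p, \overline p$ and $p, \overline p \sequent$, while (d)--(g) are obtained from $p \sequent p$ by weakening. The key observation organising the argument is that \ref{item:LDTidentity-A-A} is the lynchpin: once $A \sequent A$ is available for a formula and its immediate subformulas, items (d)--(g) each follow by a \emph{single} decision inference, and conversely the inductive step for \ref{item:LDTidentity-A-A} is assembled precisely from (d)- and (e)-shaped premises on the immediate subformulas.

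Concretely, to prove \ref{item:LDTidentity-A-A} for $A = CpD$ I would apply \emph{dec-l} to the antecedent occurrence of $CpD$, whose two premises $C \sequent CpD, p$ and $p, D \sequent CpD$ are exactly the instances of (d) and (e) for this formula. Each of these is in turn derived by one \emph{dec-r} inference: for (d) the premises are $C \sequent C, p, p$ (obtained from the inductively available $C \sequent C$ by two applications of \emph{w-r}) and $p, C \sequent p, D$ (from the axiom $p \sequent p$ by weakenings); and symmetrically for (e), using $D \sequent D$. Thus \ref{item:LDTidentity-A-A} for $A$ reduces to \ref{item:LDTidentity-A-A} for the immediate subformulas $C$ and $D$, each used once. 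Items (f) and (g) are handled the same way but with \emph{dec-l} applied directly: $ApB \sequent A, p$ has premises $A \sequent A, p, p$ (from \ref{item:LDTidentity-A-A} for $A$) and $p, B \sequent A, p$ (from $p \sequent p$), while $ApB, p \sequent B$ has premises $A, p \sequent B, p$ (from $p \sequent p$) and $p, B, p \sequent B$ (from \ref{item:LDTidentity-A-A} for $B$).

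For (b), writing $A = CpD$ and $\overline A = \overline C\, p\, \overline D$, I would apply \emph{dec-r} first to $CpD$ and then to $\overline A$ inside each resulting premise; unwinding, the four leaves are $\sequent C, \overline C, p, p$ and $p, p \sequent D, \overline D$ (which are (b) for $C$ and $D$, by the induction hypothesis) together with two sequents of the form $p \sequent \dots, p, \dots$ coming from the axiom $p \sequent p$ and weakenings. Item (c) is entirely dual, applying \emph{dec-l} to both $CpD$ and $\overline C\, p\, \overline D$ and bottoming out at the axiom $p, \overline p \sequent$. Note that throughout I never need the contraction or cut rules: duplicated decision literals are simply carried along and absorbed by the decision inferences, matching the observation in Remark~\ref{rem:cut-free-completeness} that valid sequents are provable using only decision rules, weakenings and initial sequents.

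I do not expect any genuine obstacle here; the content is routine once the correct decision inference is chosen at each step. The two points that require care are purely bookkeeping: (i) getting the multiset of side formulas and the duplicated decision literals exactly right, so that each premise is literally an instance of the induction hypothesis or of an axiom-plus-weakening; and (ii) verifying the size bound. For the latter, each inductive step adds only a constant number of decision and weakening inferences and consumes each immediate subformula's subproof exactly once, so the number of inferences is linear in the size $n$ of the formula; since every sequent appearing has size $O(n)$, the total proof size is $O(n^2)$, which is polynomial as required.
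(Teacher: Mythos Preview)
Your proposal is correct and follows essentially the same approach as the paper: induction on the structure of~$A$, with the inductive step for \ref{item:LDTidentity-A-A} unwinding via one \emph{dec-l} and one \emph{dec-r} into four leaves, two handled by the induction hypothesis on the immediate subformulas and two by the axiom $p\sequent p$ with weakenings; the paper merely applies \emph{dec-r} before \emph{dec-l} (and carries side cedents $\Gamma,\Delta$, which is immaterial given weakening), and likewise derives (d)--(g) from \ref{item:LDTidentity-A-A} by a single decision inference. Your size analysis ($O(n)$ inferences, each sequent of size $O(n)$, hence $O(n^2)$ total) matches the paper's exactly.
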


\begin{proof}
To prove (a), we show by induction
on the complexity of~$A$ that $\Gamma, A \sequent A, \Delta$ has a
polynomial size, cut-free proof. In the base case, $A$~is
a literal~$p$, and this is an axiom. For the induction step,
$A$~has the form $BpC$, we use
\begin{center}
	{\small
    \Axiom$B, \Gamma \fCenter \Delta, B, p, p$
    \Axiom$p, C, \Gamma \fCenter \Delta, B, p$
    \BinaryInf$\Gamma, BpC \fCenter \Delta, B, p$
    \Axiom$B, p, \Gamma \fCenter \Delta, C, p$
    \Axiom$p, C, p, \Gamma \fCenter \Delta, C$
    \BinaryInf$p, \Gamma, BpC \fCenter \Delta, C$
    \BinaryInf$\Gamma, BpC \fCenter BpC, \Delta$
    \DisplayProof}
\end{center}
The first and fourth upper sequents are handled by
the induction hypothesis applied to $B$ and~$C$.
The second and third upper sequents obtained from axioms by weakenings.
By inspection, the resulting $\TreeLDT$ proof
has $O(n)$ lines each with $O(n)$ many symbols,
where $n$ is the size of~$A$.

Parts (b) and~(c) are proved similarly.
Parts (d)-(g) are now easy to prove with a single \emph{dec-l} or
\emph{dec-r} inference and invoking part~(a).
\end{proof}

\section{Comparing $\DT$ proof systems and $\LK$ proof systems}\label{sec:LK_and_LDT}

$\LK$ is the usual Gentzen sequent calculus for Boolean formulas
over the basis $\land$ and $\lor$. The \emph{Boolean formulas}
are defined inductively by
\begin{integerenumerate}
    \item Any literal $p$ is a Boolean formula, and
    \item If $A$ and $B$ are Boolean formulas, then
        so are $(A\lor B)$ and $(A \land B)$.
\end{integerenumerate}
The proof system $\LK$ has the same initial sequents (axioms)
as $\LDT$, its inference rules are
the contraction rules \emph{c-l} and \emph{c-r},
the weakening rules \emph{w-l} and \emph{w-r}, the cut rule, and
the following Boolean rules:
\begin{center}
\def\myExtraSpace{\hspace*{3em}}
\begin{tabular}{c@{\myExtraSpace}c}
\multicolumn{1}{l}{\bf Boolean rules:} \\[1ex]
\Axiom$A,B,\Gamma \fCenter \Delta$
\LeftLabel{\emph{$\landLeft$:}}
\UnaryInf$A\land B, \Gamma \fCenter \Delta$
\DisplayProof &
\Axiom$\Gamma \fCenter \Delta, A$
\Axiom$\Gamma \fCenter \Delta, B$
\LeftLabel{\emph{$\landRight$:}}
\BinaryInf$\Gamma \fCenter \Delta, A\land B$
\DisplayProof \\[4ex]
\Axiom$A,\Gamma \fCenter \Delta$
\Axiom$B,\Gamma \fCenter \Delta$
\LeftLabel{\emph{$\lorLeft$:}}
\BinaryInf$A\lor B, \Gamma \fCenter \Delta$
\DisplayProof &
\Axiom$\Gamma \fCenter \Delta, A, B$
\LeftLabel{\emph{$\lorRight$:}}
\UnaryInf$\Gamma \fCenter \Delta, A\lor B$
\DisplayProof \label{def:lorRules}
\end{tabular}
\end{center}

\begin{definition}
A \emph{clause} is a disjunction of literals; a \emph{term}
is a conjunction of literals. If $\vec p$ is
a vector of literals, we write
$\bigvee \vec p$ to denote any disjunction of the literals~$\vec p$,
taken in the indicated order. In other words, $\bigvee p_1$
denotes $p_1$; and $\bigvee \vec p$ denotes
any formula of the form $(\bigvee \vecpprime)\lor(\bigvee \vecppprime)$
where $\vecpprime$ and $\vecppprime$ denote
$p_1,\dots, p_k$ and $p_{k-1},\dots, p_\ell$ for some $1\le k \le \ell$.
The notation $\bigwedge \vec p$ is defined similarly.
\end{definition}

\begin{definition}
A Boolean formula is \emph{depth one} if it is either a clause or a term.
$\oneLK$ is the fragment of $\LK$ in which all formulas
appearing in sequents are depth one formulas. $\TreeoneLK$ is the
same system with the restriction that proofs are tree-like.
\end{definition}
Although the notations $\bigvee \vec p$ and $\bigwedge\vec p$ are ambiguous
about the nesting of disjunctions or conjunctions, this makes no
difference in our applications since, if
$A$ and $B$ are both of the form $\bigvee \vec p$ but
with different orders of applications of $\lor$'s,
then there are polynomial size, cut-free $\TreeoneLK$
proofs
of $A \sequent B$ and $B \sequent A$.

Later theorems
will compare the proof theoretic strengths
of various fragments and extensions of $\LDT$ to fragments of
$\LK$. Since these theories use different languages,
we need to establish translations between cedents of $\DT$ formulas
and (depth one) Boolean formulas.

\begin{definition}\label{prop:ConjDisj}
For a (nonempty) sequence of literals $\vec p$ we define the DT formulas $\Conj(\vec p) $ and $\Disj (\vec p) $ by induction on the length of $\vec p$ as follows:
\[
    \begin{array}{rcl}
         \Conj(p) & := & p \\
         \Conj(p, \vec p) & := & (pp\Conj(\vec p))
    \end{array}
    \qquad
    \begin{array}{rcl}
        \Disj(p) & := & p \\
        \Disj(p, \vec p) & := & (\Disj(\vec p) p p)
    \end{array}
\]
\end{definition}
\noindent
In other words, if $\vec p = (p_1$, \dots, $p_\ell)$, for $\ell > 1$, we have:
    \begin{eqnarray*}
        \Conj(\vec p) &=& (p_1 p_1 ( p_2 p_2 ( \cdots ( p_{\ell-2} p_{\ell - 2} ( p_{\ell-1} p_{\ell-1} p_\ell) ) \cdots ) ) ) \\[0.3ex]
        \Disj(\vec p) &=& ( ( ( \cdots ( ( p_\ell p_{\ell-1} p_{\ell-1}) p_{\ell-2} p_{\ell-2} ) \cdots ) p_2 p_2 ) p_1 p_1 ).
    \end{eqnarray*}
It is not hard to verify that $\Conj$ and $\Disj$ correctly
express the conjunction and disjunction of the literals $\vec p$.
This is borne out by the next proposition.

\begin{proposition}\label{prop:DTconjdisj}
The following sequents have polynomial size, cut-free $\TreeLDT$ proofs.
\begin{alphenumerate}[nolistsep]
\begin{multicols}{2}
    \item $\Conj(\vec p, \vec q) \sequent \Conj(\vec p)$
    \item $\Conj(\vec p, \vec q) \sequent \Conj(\vec q)$
    \item $\Conj(\vec p), \Conj(\vec q) \sequent \Conj(\vec p, \vec q)$
    \item $\Disj(\vec p) \sequent \Disj(\vec p, \vec q)$
    \item $\Disj(\vec q) \sequent \Disj(\vec p, \vec q)$
    \item $\Disj(\vec p, \vec q) \sequent \Disj(\vec p), \Disj(\vec q)$
\end{multicols}
\end{alphenumerate}
\end{proposition}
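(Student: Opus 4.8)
The plan is to prove each of the six sequents by induction on the length of the first vector $\vec p$, assembling the proof directly from the decision rules, weakenings, and the identities of Proposition~\ref{prop:LDTidentity}; crucially, no cut is ever needed, so the resulting proofs are cut-free $\TreeLDT$ proofs. Throughout I write $\vec p = p_1, \vec p\,'$ and exploit the recursive shape of the definitions: $\Conj(p_1, \vec r) = (p_1\,p_1\,\Conj(\vec r))$ peels the leading literal off into the outermost decision with the recursion in the \emph{else}-branch, while $\Disj(p_1, \vec r) = (\Disj(\vec r)\,p_1\,p_1)$ does so with the recursion in the \emph{then}-branch; in both cases the two remaining vectors are concatenated inside that branch. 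Consequently $\Conj(\vec p, \vec q)$ and $\Conj(\vec p)$ (resp.\ $\Disj(\vec p,\vec q)$ and $\Disj(\vec p)$) share their outermost decision literal $p_1$ and differ only within the recursive branch, which is exactly where the induction hypothesis is applied.

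For the three $\Conj$ statements I would always decompose an antecedent occurrence of a $\Conj$ formula with \emph{dec-l} and build a succedent occurrence with \emph{dec-r}, both acting on the leading literal $p_1$. For (a), $\Conj(\vec p,\vec q)\sequent\Conj(\vec p)$: apply \emph{dec-l} to the antecedent, so the first premise $p_1 \sequent \Conj(\vec p), p_1$ closes from the axiom $p_1\sequent p_1$ by weakening, and the second premise $p_1, \Conj(\vec p\,', \vec q) \sequent \Conj(\vec p)$ is obtained by a \emph{dec-r} on $\Conj(\vec p) = (p_1\,p_1\,\Conj(\vec p\,'))$ whose nontrivial branch $p_1, p_1, \Conj(\vec p\,', \vec q) \sequent \Conj(\vec p\,')$ is the induction hypothesis after weakening in the two copies of $p_1$. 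Statement (b), $\Conj(\vec p,\vec q)\sequent\Conj(\vec q)$, is simpler: a single \emph{dec-l} reduces it to the induction hypothesis $\Conj(\vec p\,',\vec q)\sequent\Conj(\vec q)$ (the base case using Proposition~\ref{prop:LDTidentity}\ref{item:LDTidentity-A-A}), both premises closing by weakening.

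The genuinely delicate case is (c), $\Conj(\vec p),\Conj(\vec q)\sequent\Conj(\vec p,\vec q)$, and its $\Disj$-analogue (f). Here I would introduce the succedent $\Conj(\vec p,\vec q) = (p_1\,p_1\,\Conj(\vec p\,',\vec q))$ by \emph{dec-r}; its first premise $\Conj(\vec p),\Conj(\vec q)\sequent p_1,p_1$ follows from Proposition~\ref{prop:LDTidentity}(f) (instantiated with the literal $A=p_1$, so that subproof is small) by weakening, while its second premise $p_1,\Conj(\vec p),\Conj(\vec q)\sequent\Conj(\vec p\,',\vec q)$ is handled by a \emph{dec-l} on $\Conj(\vec p)=(p_1\,p_1\,\Conj(\vec p\,'))$ whose nontrivial branch $p_1,\Conj(\vec p\,'),p_1,\Conj(\vec q)\sequent\Conj(\vec p\,',\vec q)$ is exactly the induction hypothesis after weakening in the two copies of $p_1$. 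The main obstacle is precisely this step: the tempting move of first deriving $\Conj(\vec p\,')$ from $\Conj(\vec p)$ and $p_1$ (via Proposition~\ref{prop:LDTidentity}(g)) and then composing with the hypothesis $\Conj(\vec p\,'),\Conj(\vec q)\sequent\Conj(\vec p\,',\vec q)$ would require a \emph{cut} on $\Conj(\vec p\,')$, which we must avoid; nesting the \emph{dec-r} and \emph{dec-l} as above threads the leading literal through so that the hypothesis is applied directly, keeping the proof cut-free.

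Finally, statements (d)--(f) are proved by the same style of induction, with $\Disj$ replacing $\Conj$: one still decomposes antecedent $\Disj$ formulas with \emph{dec-l} and builds succedent $\Disj$ formulas with \emph{dec-r}, the only differences being that the leading literal $p_1$ now sits at the outermost decision with the recursion in the \emph{then}-branch, and that Proposition~\ref{prop:LDTidentity}(e) is invoked in the critical case (f) where (c) used (f). These statements can equivalently be read as the De Morgan duals of (a)--(c) via $\overline{\Conj(\vec r)} \equiv \Disj(\overline{\vec r})$, valid up to the syntactic-sugar equivalence $A\,\overline p\,B \equiv B\,p\,A$, but the direct induction avoids that wrinkle. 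In every case each level of the recursion contributes only a constant number of decision and structural inferences together with at most one polynomial-size appeal to Proposition~\ref{prop:LDTidentity}, and there are $O(|\vec p|)$ levels, so the total proof has polynomial size and is tree-like and cut-free by construction.
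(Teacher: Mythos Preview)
Your proposal is correct and follows essentially the same approach as the paper: induction on the length of~$\vec p$, peeling off the leading literal with a \emph{dec-l}/\emph{dec-r} pair and appealing to Proposition~\ref{prop:LDTidentity} for the base cases and trivial branches. The paper's proof is a one-sentence sketch of exactly this; your version spells out the nesting of \emph{dec-r} inside \emph{dec-l} (or vice versa) that keeps cases (c) and (f) cut-free, which is the only point where any care is needed.
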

\begin{proof}
All six parts of the proposition are readily proved by
induction on the length of $\vec p$, applying
a \emph{dec-l} and \emph{dec-r} inference, and appealing
to the induction hypothesis. The base cases are handled with
the aid of Proposition~\ref{prop:LDTidentity}(a).
\end{proof}

For the converse direction of simulating $\LDT$ (and its supersystems) by
$\LK$, we need to express a DT formula~$A$ as Boolean formulas in both CNF
and DNF forms. For this we define $\Tms(A)$ as a multiset of terms
(i.e., a multiset of conjunctions)
and $\Cls(A)$ as a multiset of clauses (i.e., a multiset of disjunctions) so
that $A$ is equivalent to both the DNF $\bigvee \Tms(A)$
and the CNF $\bigwedge \Cls(A)$.
\begin{definition}\label{def:TmsCls}
Let $A$ be a $\DT$-formula. The \emph{terms} and \emph{clauses} of~$A$
are the
multisets $\Tms(A)$ and $\Cls(A)$ inductively defined by letting
$\Tms(p)$ and $\Cls(p)$ both equal~$p$, and letting
\begin{eqnarray}
\Tms(BpC) & := &
      \{ \overline p \land D : D \in \Tms(B) \} \cup
      \{ p \land D : D \in \Tms(C) \} \\[1ex]
\Cls(BpC) & := &
      \{ p \lor D : D \in \Cls(B) \} \cup
      \{ \overline p \lor D : D \in \Cls(C) \} .
\end{eqnarray}
The conjunctions and disjunctions are associated from right to left.
\end{definition}

It is clear from the definition that
the DNF $\bigvee \Tms(A)$ and the CNF $\bigwedge \Cls(A)$ are both
equivalent to~$A$.
\begin{proposition}
\label{prop:sigma-pi-props}
    For $\DT$ formulas $A$ and $B$,
    there are polynomial size, cut-free $\TreeLK$-proofs of:
    \begin{alphenumerate}
        \item
            $C\sequent D$, for each $C \in \Tms(A)$ and $D\in \Cls(A)$.
        \item
            \begin{enumerate}[label={\rm (\roman*)}, ref=(\roman*)]
                \item
                    $\Cls (ApB) \sequent D,p $, for each $D \in \Cls (A)$;
                \item
                    $p,\Cls (ApB) \sequent D$, for each $D \in \Cls(B)$.
                \item $\Cls(A) \sequent D,p$, for each $D \in \Cls (ApB)$.
                \item $p,\Cls(B) \sequent D$, for each $D \in \Cls(ApB)$.
            \end{enumerate}
        \item
            \begin{enumerate}[label={\rm (\roman*)}, ref=(\roman*)]
                \item $C \sequent p,\Tms (ApB)$,
    			    for each $C \in \Tms(A)$;
                \item $p,C \sequent \Tms(ApB)$,
                    for each $C\in \Tms(B)$.
                \item $C \sequent p, \Tms(A)$, for each $C \in \Tms(ApB)$.
                \item $p,C \sequent \Tms (B)$, for each $C \in \Tms(ApB)$.
            \end{enumerate}
    \end{alphenumerate}
\end{proposition}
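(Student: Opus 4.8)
The plan is to establish all the listed sequents with short, cut-free $\TreeLK$ derivations whose only nontrivial ingredient is the family of identity sequents $F \sequent F$ for clauses and terms $F$. Such identity sequents have polynomial-size, cut-free $\TreeLK$ proofs by the standard expansion of the logical axiom (decomposing the outermost disjunction or conjunction on both sides down to the literal axioms $p \sequent p$), and I would take these as available. Everything else is driven by the recursive shape of Definition~\ref{def:TmsCls}, recalling that $\Tms(ApB) = \{\overline p \land D : D \in \Tms(A)\} \cup \{p \land D : D \in \Tms(B)\}$ and dually for $\Cls(ApB)$. Throughout, the semantic content is transparent: part (a) merely records $C \models A \models D$, while parts (b) and (c) record the obvious one-step relationships between the CNF/DNF of $ApB$ and those of its immediate subformulas, so the work lies entirely in the bookkeeping of a few short derivations.

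For part (a) I would argue by induction on the structure of $A$. The base case $A = p$ is the axiom $p \sequent p$. For $A = A_0\, p\, A_1$, a term $C \in \Tms(A)$ has the form $\overline p \land E$ with $E \in \Tms(A_0)$ or $p \land E$ with $E \in \Tms(A_1)$, and a clause $D \in \Cls(A)$ has the form $p \lor F$ with $F \in \Cls(A_0)$ or $\overline p \lor F$ with $F \in \Cls(A_1)$, giving four cases. When the term and clause come from the same subformula, I start from the induction hypothesis $E \sequent F$, weaken in the decision literal on each side, and then rebuild the conjunction on the left and the disjunction on the right by one left-conjunction and one right-disjunction inference. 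In the two mixed cases the decision literal itself supplies the match: e.g.\ for $\overline p \land E \sequent \overline p \lor F$ I start from the axiom $\overline p \sequent \overline p$, weaken, and decompose, without invoking the induction hypothesis. Each step adds $O(1)$ inferences over formulas of size $O(|A|)$, so the resulting proof is polynomial.

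Parts (b) and (c) are not inductive: each is a single short derivation that isolates one matching clause or term and decomposes it. For instance, in (b)(i) the clause $p \lor D$ already belongs to $\Cls(ApB)$, so from the identity $D \sequent D$ (and the axiom $p \sequent p$) a single left-disjunction inference yields $p \lor D \sequent D, p$, after which I weaken in the remaining clauses of $\Cls(ApB)$ on the left. The items (iii) and (iv) of (b), and likewise of (c), split into two sub-cases according to which half of $\Cls(ApB)$ (resp.\ $\Tms(ApB)$) the distinguished clause (resp.\ term) comes from; in each sub-case I either locate the copy of the underlying $F$ (resp.\ $E$) in the opposite cedent and appeal to its identity sequent, or exploit a complementary pair $p, \overline p$ arising from the decision literal, and then build the single $\lor$ or $\land$ connective and weaken. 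Part (c) is the exact dual of (b), obtained by interchanging antecedent and succedent, $\land$ and $\lor$, and $\Tms$ and $\Cls$.

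The main obstacle is organizational rather than conceptual: keeping the eightfold case analysis (two dual statements, each with four items, several of which bifurcate) straight while ensuring every derivation stays cut-free and of polynomial size. The one reusable lemma to pin down first is the polynomial identity sequent $F \sequent F$ for depth-one $F$; once that is in hand, each item reduces to a two- or three-inference derivation and the size bounds are immediate.
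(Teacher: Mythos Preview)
Your proposal is correct and follows essentially the same approach as the paper: part~(a) by structural induction on~$A$ with the four-way case split you describe, and parts~(b) and~(c) by unwinding the definition of $\Cls(ApB)$ or $\Tms(ApB)$ to locate the relevant clause or term, then combining the identity sequent $D\sequent D$ (or $C\sequent C$) with a literal axiom via a single $\lor$ or $\land$ inference and weakenings. The paper's writeup is terser---it only spells out (b)(i) as a representative example---but your more explicit handling of the sub-cases in (iii) and~(iv) is exactly what ``the other cases are similar'' is gesturing at.
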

Part~(a) of the lemma is proved by induction on the
complexity of $A$.  Parts (b) and~(c)
are trivial once the definitions are unwound. For example,
(b.i)~follows from the fact that $\Cls(ApB)$ contains the formula $p\lor D$. This allows (b.i) to be derived from the two sequents
$p\sequent p$ and $D\sequent D$. The former is an axiom, and
the latter has a tree-like cut-free proof by
Proposition~\ref{prop:LDTidentity}\ref{item:LDTidentity-A-A}.
The other cases are similar.

\begin{proposition}\label{prop:ClsImpliesTms}
There are polynomial size atomic-cut $\TreeLK$ proofs
and polynomial size cut-free $\LK$ proof of the sequents
$\Cls(A)\sequent \Tms(A)$ for $\DT$ formulas~$A$.
\end{proposition}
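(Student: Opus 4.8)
The plan is to argue by induction on the complexity of the $\DT$ formula~$A$, producing the atomic-cut tree-like proof directly and then extracting the cut-free dag-like proof from it. The base case is immediate: when $A=p$ is a literal we have $\Cls(p)=\Tms(p)=p$, so $\Cls(A)\sequent\Tms(A)$ is the axiom $p\sequent p$.

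For the inductive step $A=BpC$, the heart of the argument is a single cut on the \emph{literal}~$p$, reducing the goal $\Cls(BpC)\sequent\Tms(BpC)$ to the two premises
\[
\Cls(BpC)\sequent\Tms(BpC),p \qquad\text{and}\qquad p,\Cls(BpC)\sequent\Tms(BpC).
\]
I would treat the first (the ``$p$ false'' case); the second is symmetric with $B$ replaced by $C$. The idea is to transport the induction hypothesis $\Cls(B)\sequent\Tms(B)$ across the two families of bridging sequents furnished by Proposition~\ref{prop:sigma-pi-props}: from (b.i), $\Cls(BpC)\sequent D,p$ for each clause $D\in\Cls(B)$, and from (c.i), $C\sequent p,\Tms(BpC)$ for each term $C\in\Tms(B)$. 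Cutting each clause $D\in\Cls(B)$ against (b.i) rewrites the antecedent $\Cls(B)$ of the induction hypothesis into $\Cls(BpC)$ (modulo an added $p$ in the succedent and a contraction of the duplicated context $\Cls(BpC)$), yielding $\Cls(BpC)\sequent\Tms(B),p$; then cutting each term $C\in\Tms(B)$ against (c.i) rewrites $\Tms(B)$ into $\Tms(BpC)$, giving exactly $\Cls(BpC)\sequent\Tms(BpC),p$. The bridging sequents are cut-free by Proposition~\ref{prop:sigma-pi-props}, and the induction hypothesis is used only once per premise; since the formula tree of a $\DT$ formula is genuinely a tree, each subformula occurs once, so inlining the subproofs keeps the total size polynomial in the size of~$A$.

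Two points then remain. First, the transport cuts are on clauses $D$ and terms $C$, which are not atomic; but clauses and terms are depth-one, and the relevant $\lor$- and $\land$-rules are invertible, so each such cut can be replaced by short invertibility derivations together with atomic cuts on the constituent literals of $D$ (resp.\ $C$). This stays tree-like and polynomial, and leaves only atomic cuts, establishing the atomic-cut $\TreeLK$ claim. Second, for the cut-free $\LK$ (dag-like) claim I would eliminate the atomic cuts: in the dag-like setting the standard reduction for an atomic cut formula can be carried out with only polynomial overhead, since pushing the cut towards the axioms where its literal is principal merely duplicates subproofs, and such duplication is absorbed by sharing in a dag.

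The main obstacle I anticipate is size control, and it appears in exactly these last two steps. The cascade of transport cuts, and their reduction to atomic cuts, must not blow up — this relies crucially on the tree structure of $\DT$ formulas so that each induction hypothesis is consumed a bounded number of times. Dually, the passage from the atomic-cut tree-like proof to a cut-free dag-like proof must be argued to remain polynomial by genuinely exploiting sharing, rather than incurring the exponential cost of unrestricted cut-elimination; verifying that only atomic cut formulas arise, so that this cheaper elimination applies, is the delicate part.
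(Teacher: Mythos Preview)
Your inductive skeleton is the same as the paper's: split on the decision literal~$p$ via a single atomic cut, and reduce to the induction hypotheses for~$B$ and~$C$. The difference is in how you upgrade $\Cls(B)\sequent\Tms(B)$ to the premise $\Cls(BpC)\sequent\Tms(BpC),p$. You go through Proposition~\ref{prop:sigma-pi-props}, cutting against the bridging sequents $\Cls(BpC)\sequent D,p$ and $C\sequent p,\Tms(BpC)$; these cuts are on clauses and terms, so you must then argue a further reduction to atomic cuts via invertibility. The paper does this in one step with no detour: starting from $\Cls(B)\sequent\Tms(B)$, it applies $\lorLeft$ (paired with weakened instances of $p\sequent p$) to turn each antecedent clause $D$ into $p\lor D$, and $\landRight$ (paired with weakened instances of $\sequent p,\overline p$) to turn each succedent term $E$ into $\overline p\land E$, directly obtaining $\{p\lor D:D\in\Cls(B)\}\sequent\{\overline p\land E:E\in\Tms(B)\},p$. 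No cuts are introduced beyond the single atomic cut on~$p$ that glues the $B$-side and $C$-side together.

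So your ``main obstacle'' --- reducing the depth-one transport cuts to atomic ones while controlling size in a tree-like proof --- is self-inflicted. It is probably fixable (structural invertibility on cut-free bridging proofs and on the recursively built IH), but you do not actually carry it out, and it is not as immediate as you suggest: invoking invertibility by cutting against $D\sequent q_1,\ldots,q_m$ reintroduces a cut on~$D$, so you would need genuine proof-internal invertibility, applied through the whole recursively constructed subproof. The paper's construction simply never creates those cuts. For the cut-free dag-like half, your plan (eliminate the remaining atomic cuts with dag sharing) matches the paper, which either does it directly or cites the general atomic-cut-to-cut-free linear-size construction of~\cite{Buss:cutElimInSitu}.
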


\begin{proof}
We prove the tree-like case by giving a recursive construction.
Assume $A$ is $BpC$. We claim that there is a polynomial
size tree-like $\LK$ derivation~$\pi_0$ of the sequent
\begin{equation}\label{eq:ClsImpliesTms_B}
\{ p \lor D : D \in \Cls(B) \} \sequent \{ \overline p \lor D : D \in \Tms(B) \}, p
\end{equation}
which uses a single instance $\Cls(B)\sequent \Tms(B)$ as a non-logical
initial sequent. Indeed, $\pi_0$ is easily constructed by combining
$\Cls(B)\sequent \Tms(B)$ with initial sequents $p \sequent p$
and $\sequent p, \overline p$ using $\lorLeft$ and $\lorRight$
inferences. Similarly, there is a polynomial size $\TreeLK$
proof of
\begin{equation}\label{eq:ClsImpliesTms_C}
p, \{ \overline p \lor D : D \in \Cls(C) \} \sequent \{ p \lor D : D \in \Tms(C) \}
\end{equation}
which uses a single instance $\Cls(C)\sequent \Tms(C)$ as a non-logical
initial sequent. Combining (\ref{eq:ClsImpliesTms_B}) and (\ref{eq:ClsImpliesTms_C})
with a cut on~$p$ gives a tree-like $\LK$ derivation of
$\Cls(A) \sequent \Tms(A)$ which uses single instances of the sequents
$\Cls(B) \sequent \Tms(B)$ and $\Cls(C)\sequent \Tms(C)$ as non-logical
initial sequents. Proceeding recursively gives the desired
polynomial size atomic-cut $\TreeLK$ proof of $\Cls(A)\sequent \Tms(A)$.

It is straightforward to give (dag-like) cut-free $\LK$
polynomial size proof of $\Cls(A)\sequent \Tms(A)$, and this is
omitted. Alternatively, \cite{Buss:cutElimInSitu} gives a
general construction that, given a tree-like $\LK$ proof
in which all cuts are atomic, forms a linear size dag-like
$\LK$ proof.
\end{proof}

Proposition~\ref{prop:ClsImpliesTms} can be extended to show that
there are quasipolynomial size cut-free $\TreeLK$ proofs of
$\Cls(A) \sequent \Tms(A)$, but it is open whether polynomial size
is possible.

The next definition shows how to compare proof complexity between proof
systems that work with DT formulas and ones that work with Boolean formulas.

\begin{definition}\label{def:simulations}
    Let $P$ be a proof system for sequents of Boolean formulas (or at least,
    sequents of depth one Boolean formulas),
    and $Q$ be a proof system for sequents of DT formulas.
    We say that $P$ \emph{polynomially simulates}~$Q$ if
    there is a polynomial time procedure which, given a $Q$-proof
    of
    \begin{equation}\label{eq:fromDT}
        A_0, \dots , A_{m-1} \sequent B_0 , \dots , B_{n-1} ,
    \end{equation}
    where the $A_i$'s and $B_i$'s are $\DT$-formulas,
    produces a $P$-proof of
    \begin{equation}\label{eq:toLK}
        \Cls(A_0), \dots, \Cls (A_{m-1}) \sequent \Tms(B_0), \dots , \Tms(B_{n-1}) .
    \end{equation}
    The system $Q$ \emph{polynomially simulates} $P$ if there is
    a polynomial time procedure which, given a $P$-proof of
    \begin{equation}\label{eq:fromLK}
        \bigvee \vec a_0, \dots , \bigvee \vec a_{m-1} \sequent \bigwedge \vec b_0, \dots , \bigwedge \vec b_{n-1},
    \end{equation}
    where the $\vec a_i$'s and $\vec b_i$'s are sequences of literals,
    produces a $Q$-proof of
    \begin{equation}\label{eq:toDT}
        \Disj(\vec a_0), \dots , \Disj(\vec a_{m-1})\sequent \Conj(\vec b_0), \dots , \Conj(\vec b_{n-1}) .
    \end{equation}
    The systems $P$ and $Q$ are \emph{polynomially equivalent}
    if they polynomially simulate each other.
    (\ref{eq:toLK}) is called the \emph{Boolean translation} of (\ref{eq:fromDT}).
    (\ref{eq:toDT}) is called the \emph{$\DT$-translation} of (\ref{eq:fromLK}).
    \emph{Quasipolynomial simulation and equivalence} are defined in the same way, but
    using quasipolynomial time (time $2^{\log^{O(1)}n}$) procedures.\footnote{It turns out that all stated quasipolynomial simulations in this work (Theorems~\ref{thm:1lk-sim-ldt} and \ref{thm:LK_ELNDT}) take time $n^{O(\log n)} = 2^{O(\log^2 n)}$.}
\end{definition}

\subsection{$\oneLK$ and $\LDT$}

\begin{theorem}\label{thm:ldt-sim-1lk}
$\LDT$ polynomially simulates $\oneLK$. $\TreeLDT$ polynomially simulates $\TreeoneLK$.
\end{theorem}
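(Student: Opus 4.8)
The plan is to simulate a $\oneLK$ proof line by line, using the translation $*$ that sends a clause $\bigvee \vec p$ to $\Disj(\vec p)$ and a term $\bigwedge \vec p$ to $\Conj(\vec p)$. Writing $S^*$ for the sequent obtained by applying $*$ to each formula of a depth one sequent $S$, note that a bare literal $p$ is sent to $p$ under either reading, since $\Disj(p) = \Conj(p) = p$, so $*$ is well defined, polynomial time computable, and increases each formula's size by at most a constant factor (indeed $\Conj(\vec p)$ and $\Disj(\vec p)$ have $2\ell - 1$ atom occurrences when $|\vec p| = \ell$). On the endsequent (\ref{eq:fromLK}), which has clauses in the antecedent and terms in the succedent, $*$ produces exactly the $\DT$-translation (\ref{eq:toDT}). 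Hence it suffices to show that each initial sequent and each inference of $\oneLK$ is mapped, under $*$, to a polynomial size $\LDT$ derivation of the translated conclusion from the translated premises; appending these blocks in order yields the desired $\LDT$ proof.

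The initial sequents $p \sequent p$, $p,\overline p \sequent$ and $\sequent p,\overline p$ are fixed by $*$ and remain $\LDT$ axioms. The structural rules and the cut rule act only on whole formula occurrences, so their translations are the identically named $\LDT$ rules applied to the translated formulas (e.g.\ a cut on $C$ becomes a cut on $C^*$). The only nontrivial cases are the four Boolean rules, and each is discharged by a bounded number of cuts against the appropriate parts of Proposition~\ref{prop:DTconjdisj}. For instance, an $\landLeft$ inference introducing $A \land B = \bigwedge(\vec p,\vec q)$ into the antecedent requires deriving $\Conj(\vec p,\vec q),\Gamma^* \sequent \Delta^*$ from $\Conj(\vec p),\Conj(\vec q),\Gamma^* \sequent \Delta^*$, which is obtained by cutting the translated premise against the sequents $\Conj(\vec p,\vec q)\sequent \Conj(\vec p)$ and $\Conj(\vec p,\vec q)\sequent \Conj(\vec q)$ of parts (a) and (b). Dually, $\landRight$ is simulated using part (c), $\lorRight$ using parts (d) and (e), and $\lorLeft$ using part (f); in each case a fixed number of cuts, together with weakenings and contractions to align the contexts, suffices. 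Since Proposition~\ref{prop:DTconjdisj} supplies polynomial size cut-free $\TreeLDT$ proofs of all these auxiliary sequents and the per-inference blocks are polynomial in the sizes of the formulas involved, the whole translation has polynomial size.

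For the tree-like statement, the same construction applies: all auxiliary derivations invoked (from Propositions~\ref{prop:LDTidentity} and~\ref{prop:DTconjdisj}) are tree-like, and the simulation plugs the translated subproof of each premise directly into the block that simulates the inference. Thus a $\TreeoneLK$ proof is mapped to a $\TreeLDT$ proof, with total size still bounded by the sum of the polynomial per-inference blocks, so no blow-up from duplication occurs beyond what is already present in the original tree.

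I expect the only real work to be the bookkeeping for the Boolean rules, namely inserting the weakenings and contractions needed so that the cuts against Proposition~\ref{prop:DTconjdisj} match contexts and leave the intended conclusion after merging duplicate copies of $\Conj(\vec p,\vec q)$ or $\Disj(\vec p,\vec q)$. This is routine case analysis, and no single step presents a genuine obstacle beyond carrying it out carefully for each of the four rules.
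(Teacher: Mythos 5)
Your proposal is correct and follows essentially the same route as the paper's own proof: a line-by-line translation of terms and clauses into $\Conj$ and $\Disj$, with structural and cut inferences carrying over directly and the four Boolean rules repaired by cuts against Proposition~\ref{prop:DTconjdisj}, tree-likeness being preserved since each translated premise is used exactly once. Your bookkeeping is in fact slightly more careful than the paper's (you cite the correct parts (a), (b), (c), (d)--(e), (f) for the respective rules, where the paper's text contains citation slips), but the argument is the same.
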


\begin{proof}
Suppose $\pi$ is a $\oneLK$ proof.  Every formula
in~$\pi$ is either a term $\bigwedge \vec a$ or a clause
$\bigvee \vec a$, where $\vec a$ is a vector of literals.
We modify $\pi$ by replacing each such formula
by $\Conj(\vec a)$ or $\Disj(\vec a)$ respectively.
The initial sequents and the contraction, weakening and cut
inferences in~$\pi$ become valid initial sequents or
contraction, weakening and cut inferences for $\LDT$.

An $\landLeft$ inference in~$\pi$ of the form
\begin{prooftree}
\Axiom$\bigwedge \vec a, \bigwedge \vec b ,\Pi\fCenter \Delta$
\LeftLabel{\emph{$\landLeft$:}}
\UnaryInf$\bigwedge \vec a \land \bigwedge \vec b ,\Pi\fCenter \Delta$
\end{prooftree}
is replaced by
\begin{equation}\label{eq:andLeftSim}
\hbox{
\Axiom$\Conj( \vec a ), \Conj(\vec b), \Pi \fCenter \Delta$
\UnaryInf$\Conj(\vec a, \vec b), \Pi \fCenter \Delta$
\DisplayProof }
\end{equation}
This is not a valid $\LDT$ inference. To fix this, note that by
parts (a) and~(c) of Proposition~\ref{prop:DTconjdisj},
the cedents $\Conj(\vec a, \vec b) \sequent \Conj(\vec a)$
and $\Conj(\vec a, \vec b) \sequent \Conj(\vec a)$
have polynomial-size (cut-free) $\TreeLDT$ proofs.
Using two cut inferences with these sequents
gives a valid $\LDT$ derivation of the lower
sequent of~(\ref{eq:andLeftSim}) from the upper sequent.

An $\landRight$ inference in~$\pi$ of the form
\begin{prooftree}
\Axiom$\Pi \fCenter \Delta, \bigwedge \vec a$
\Axiom$\Pi \fCenter \Delta, \bigwedge \vec b$
\LeftLabel{\emph{$\landRight$:}}
\BinaryInf$\Pi \fCenter \Delta, \bigwedge \vec a \land \bigwedge \vec b$
\end{prooftree}
is replaced by
\begin{equation}\label{eq:andRightSim}
\hbox{
\Axiom$\Pi \fCenter \Delta, \Conj(\vec a)$
\Axiom$\Pi \fCenter \Delta, \Conj(\vec b)$
\BinaryInf$\Pi \fCenter \Delta, \Conj(\vec a, \vec b)$
\DisplayProof }
\end{equation}
The sequent $\Conj(\vec a), \Conj(\vec b) \sequent \Conj(\vec a, \vec b )$
has a polynomial size (cut-free) $\TreeLDT$ proof by
Proposition~\ref{prop:DTconjdisj}(e). Cutting the two upper
sequents of~(\ref{eq:andRightSim}) against this gives a valid
$\TreeLDT$ derivation of the
lower sequent.

Dual constructions allow $\lorLeft$ and $\lorRight$ inferences in~$\pi$
to be converted into valid $\TreeLDT$ derivations. The result is a valid
$\LDT$ proof~$\pi^\prime$ of the DT-translation of the final line
of~$\pi$. By construction, $\pi^\prime$ has size polynomially bounded by the
size of~$\pi$.
Since the upper sequents of (\ref{eq:andLeftSim}) and~(\ref{eq:andRightSim})
were used only once when forming the $\TreeLDT$ derivations simulating
inferences of~$\pi$, the $\LDT$ proof~$\pi^\prime$ is tree-like
whenever $\pi$ is tree-like.
\end{proof}

A converse result holds too, but we have only a quasipolynomial
simulation in the tree-like case. It is open whether this can be improved
to a polynomial simulation.

\begin{theorem}\label{thm:1lk-sim-ldt}
$\oneLK$ polynomially simulates $\LDT$.
$\TreeoneLK$ quasipolynomially
simulates $\TreeLDT$.
\end{theorem}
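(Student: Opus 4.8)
The plan is to simulate the given $\LDT$ proof inference by inference, translating each line. Following Definition~\ref{def:simulations}, a $\DT$ formula $A$ occurring in an antecedent is replaced by the clause multiset $\Cls(A)$ and one occurring in a succedent by the term multiset $\Tms(A)$; since every clause and every term is a depth-one Boolean formula, each translated sequent is a legitimate $\oneLK$ sequent. Initial sequents, contractions, weakenings, and cuts on literals translate directly (a contraction or weakening on $A$ becomes several contractions or weakenings on the formulas of $\Cls(A)$ or $\Tms(A)$). For the two decision rules I would use Proposition~\ref{prop:sigma-pi-props}(b),(c): to simulate a \emph{dec-r} introducing $ApB$, first rewrite the succedent of the left premise from $\Tms(A),p$ into $\Tms(ApB),p$ by cutting each $C\in\Tms(A)$ against the sequents $C\sequent p,\Tms(ApB)$, then rewrite the right premise's succedent $\Tms(B)$ into $\Tms(ApB)$ using the sequents $p,C\sequent\Tms(ApB)$, and finally cut on the literal $p$; \emph{dec-l} is dual, using part~(b). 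Crucially each premise is used only once here, so these steps are safe for both the dag-like and tree-like settings and cost only polynomial size.

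The only genuinely problematic rule is cut. A cut on a $\DT$ formula $A$ has premises translating to $\Gamma'\sequent\Delta',\Tms(A)$ and $\Cls(A),\Gamma'\sequent\Delta'$, where $\Gamma'=\Cls(\Gamma)$ and $\Delta'=\Tms(\Delta)$, and we must derive $\Gamma'\sequent\Delta'$. The difficulty is that $A$ is presented as the DNF $\Tms(A)$ in one succedent but as the CNF $\Cls(A)$ in the other antecedent, and these two representations must be bridged. Using Proposition~\ref{prop:sigma-pi-props}(a), for each clause $D\in\Cls(A)$ I would cut the terms of $\Tms(A)$ out of the first premise against the sequents $C\sequent D$ with $C\in\Tms(A)$, obtaining $\Gamma'\sequent\Delta',D$; cutting these, one per clause, against the second premise then yields $\Gamma'\sequent\Delta'$. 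In the dag-like setting the first premise is reused once per clause $D\in\Cls(A)$ at no asymptotic cost, so the whole simulation is polynomial, proving the first assertion.

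The main obstacle is the tree-like case. The bridging step reuses one premise $|\Cls(A)|$ times, and in full $\LK$ this reuse would be avoidable: one could collect $\Tms(A)$ into the single DNF $\bigvee\Tms(A)$ by $\lor$-right inferences, collect $\Cls(A)$ into the single CNF $\bigwedge\Cls(A)$ by $\land$-left inferences, bridge them with one instance of $\bigvee\Tms(A)\sequent\bigwedge\Cls(A)$ (provable from Proposition~\ref{prop:sigma-pi-props}(a)), and cut twice, using each premise only once. In $\oneLK$ these intermediate depth-two formulas are forbidden, so the reuse appears unavoidable, and a naive tree-like simulation duplicates premises multiplicatively, giving only an exponential bound.

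To keep the tree-like simulation quasipolynomial I would control how often this reuse is nested. The cut simulation blows up each premise derivation by a factor of at most $|\Cls(A)|\le n$, and these factors compound once per cut along a root-to-leaf path. I would therefore first balance the given $\TreeLDT$ proof to height $O(\log n)$ using the standard divide-and-conquer proof-balancing technique (splitting the proof at a median sequent and re-gluing with cuts) at polynomial cost; after balancing, every path carries only $O(\log n)$ cuts, so the compounded overhead is at most $n^{O(\log n)}$, which is quasipolynomial and matches the $n^{O(\log n)}$ bound recorded in the footnote to Definition~\ref{def:simulations}. The delicate points are verifying that proof-balancing can be carried out within $\TreeLDT$ at polynomial cost and that the per-cut reuse factor is correctly bounded; an alternative to the latter is to first reduce every cut to an atomic cut by recursion on the cut formula (a cut on $BpC$ reducing to cuts on the strictly smaller $B$ and $C$ together with a literal cut on $p$, via the invertibility conversions of Proposition~\ref{prop:sigma-pi-props}(b),(c)), since atomic cuts translate with no reuse at all.
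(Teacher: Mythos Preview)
Your dag-like simulation is correct and matches the paper's argument essentially step for step; the paper handles the decision rules by directly unfolding the structure of $\Cls(ApB)$ with $\lorLeft$ inferences rather than by cutting against Proposition~\ref{prop:sigma-pi-props}(c), but the effect is identical and each premise is indeed used once.

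The tree-like argument, however, has a real gap. Proof-balancing to height $O(\log n)$ requires packaging an arbitrary $\LDT$ sequent $\Pi\sequent\Lambda$ as a single cut formula so that the two balanced halves can be joined by one cut. In $\LK$ that formula is $\bigwedge\Pi\to\bigvee\Lambda$, but in $\LDT$ the natural candidate $\bigvee(\overline\Pi\cup\Lambda)$ is an $\NDT$ formula, and there is no reason to expect a polynomial-size $\DT$ formula computing it (a $\DT$ formula has a polynomial-size CNF, whereas a disjunction of $\DT$ formulas need not). Indeed, the availability of exactly this packaging is what makes $\TreeLNDT$ as strong as dag-like $\LDT$ in Theorem~\ref{thm:tree-lndt-equiv-ldt}; you cannot invoke it inside $\TreeLDT$. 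Your fallback of reducing a cut on $BpC$ to cuts on $B$, $C$, and $p$ also fails in the tree-like setting: inverting \emph{dec-r} on one premise and \emph{dec-l} on the other (via cuts with Proposition~\ref{prop:LDTidentity}(d)--(g)) uses each premise \emph{twice}, and iterating this down the cut formula multiplies the premise subproofs by $2^{h}$ where $h$ is the height of the cut formula, which may be linear in~$n$.

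The paper's device is simpler and sidesteps both problems. Your bridging construction derives $\Gamma'\sequent\Delta',D$ for each $D\in\Cls(A)$, reusing the first premise $|\Cls(A)|$ times and the second premise once. There is a dual construction: derive $C,\Gamma'\sequent\Delta'$ for each $C\in\Tms(A)$, reusing the \emph{second} premise $|\Tms(A)|$ times and the first once. At each cut one chooses whichever construction reuses the \emph{smaller} of the two premise subproofs; that subproof then has size at most $|\pi|/2$. Combined with the purely additive bound at decision inferences, this yields a recurrence of the shape $S(n)\le S(a)+S(b)+n^{O(1)}\cdot S(n/2)$ with $a+b<n$, which solves directly to $S(n)=n^{O(\log n)}$ with no preprocessing of~$\pi$ required.
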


\begin{proof}
Suppose $\pi$ is an $\LDT$ proof, possibly tree-like.
We need to convert $\pi$ into a $\oneLK$ proof~$\pi^\prime$.
As a first step, each sequent in $\pi$ is replaced
by its Boolean translation as defined in~(\ref{eq:toLK}).
Namely, every $\DT$ formula~$A$ in the antecedent,
of a sequent in~$\pi$ is replaced
by the cedent $\Cls(A)$;
and every $\DT$ formula~$A$ in a succedent is replaced by the
cedent $\Tms(A)$.
Since $\Cls(p)$ and $\Tms(p)$ are both equal to~$p$,
the Boolean translation of an axiom in $\pi$ is
a valid $\LK$ axiom. Likewise, any contraction
or weakening inference in $\pi$ is readily replaced valid
$\LK$ inferences after forming the Boolean translations.
The decision rules and cut rules in~$\pi$, however,
need to be fixed up to make $\pi^\prime$ a valid
$\LK$-proof.

First consider a \emph{dec-l} inference in~$\pi$
\begin{equation}\label{eq:LDTbyoneLKorig}
\hbox{\Axiom$A, \Gamma \fCenter \Delta, p$
\Axiom$p, B, \Gamma \fCenter \Delta$
\LeftLabel{\emph{dec-l:}}
\BinaryInf$ApB, \Gamma \fCenter \Delta$
\DisplayProof}
\end{equation}
The Boolean translation of this gives
\begin{equation}\label{eq:LDTbyoneLK}
\hbox{\Axiom$\Cls(A), \Gamma^* \fCenter \Delta^*, p$
\Axiom$p, \Cls(B), \Gamma^* \fCenter \Delta^*$
\BinaryInf$ApB, \Gamma^* \fCenter \Delta^*$
\DisplayProof}
\end{equation}
where $\Gamma^*\sequent\Delta^*$ is the Boolean translation
of $\Gamma \sequent \Delta$.
Let $\Cls(A)$ equal $D_1,\dots, D_\ell$,
and $\Cls(B)$ equal $E_1,\dots, E_k$, so that
that $\Cls(ApB)$ equals the union of
$\{ p\lor\penalty10000 D_i\}_{i\le \ell}$
and $\{ \overline p \lor\penalty10000 E_i \}_{i \le k}$.
Starting with the upper left sequent of~(\ref{eq:LDTbyoneLK}),
we form an $\ell$ step tree-like derivation
\begin{equation}\label{eq:pClsA}
\hbox{\Axiom$p \fCenter p$
\Axiom$\Cls(A), \Gamma^* \fCenter \Delta^*, p$
\LeftLabel{$\ell$ many \emph{$\lorLeft$'s:}}
\doubleLine
\BinaryInf$\{ p\lor D_i\}_{i\le \ell}, \Gamma^* \fCenter \Delta^*, p$
\DisplayProof}
\end{equation}
This derivation uses $\ell$ instances of the axiom $p\sequent p$
and $\ell$ inferences of the form
\begin{prooftree}
\Axiom$p \fCenter p$
\Axiom$\{ p\lor D_i\}_{i<j}, D_j, \{D_i\}_{i > j}, \Gamma^* \fCenter \Delta^*, p$
\LeftLabel{\emph{$\lorLeft$:}}
\BinaryInf$\{ p\lor D_i\}_{i<j}, p \lor D_j, \{D_i\}_{i > j}, \Gamma^* \fCenter \Delta^*, p$
\end{prooftree}
A similar $k$ step tree-like $\LK$ proof derives
\begin{equation}\label{eq:pClsB}
\hbox{\Axiom$p, \overline p \fCenter $
\Axiom$p, \Cls(B), \Gamma^* \fCenter \Delta^*$
\LeftLabel{$k$ many \emph{$\lorLeft$'s:}}
\doubleLine
\BinaryInf$p, \{\overline p\lor E_i\}_{i\le k}, \Gamma^* \fCenter \Delta^*$
\DisplayProof}
\end{equation}

\noindent
Combining (\ref{eq:pClsA}) and~(\ref{eq:pClsB}) with a
cut on the atomic formula~$p$ gives the lower
sequent, $\Cls(ApB) \Gamma \sequent \Delta$,
of (\ref{eq:LDTbyoneLK}) as desired.
This gives
a tree-like $\LK$-derivation simulating
(\ref{eq:LDTbyoneLK}) of size polynomially bounded
by the size of the lower sequent of~(\ref{eq:LDTbyoneLKorig}).

The case of a \emph{dec-r} inference in~$\pi$ is
handled dually; we omit the argument.

Now consider a cut inference in~$\pi$:
\begin{equation}\label{eq:LDTbyoneLKcutBefore}
\hbox{%
\Axiom$ \Gamma \fCenter \Delta, A$
\Axiom$ A, \Gamma \fCenter \Delta$
\BinaryInf$ \Gamma \fCenter \Delta$
\DisplayProof}
\end{equation}
The Boolean translation of this is
\begin{equation}\label{eq:LDTbyoneLKcut}
\hbox{\Axiom$ \Gamma^* \fCenter \Delta^*, \Tms(A)$
\Axiom$ \Cls(A), \Gamma^* \fCenter \Delta^*$
\BinaryInf$ \Gamma^* \fCenter \Delta^*$
\DisplayProof}
\end{equation}
Again let $\Cls(A)$ be $\{D_i\}_{i\le \ell}$;
and let $\Tms(A)$ be $\{F_i\}_{i\le m}$.
By Lemma~\ref{prop:sigma-pi-props}(a),
there are short cut-free $\TreeLK$ proofs
for each $F_i\sequent D_j$. The strategy for
converting (\ref{eq:LDTbyoneLKcut}) a
valid $\LK$-derivation is to repeatedly
cut with these sequents.

There are two ways to do this.
The first construction starts by deriving, for each~$i$,
the clause $F_i, \Gamma^* \sequent \Delta^*$
by using $\ell$ cut inferences
combining the sequents $F_i\sequent D_j$ (for $j\le \ell$)
against the upper right sequent of~(\ref{eq:LDTbyoneLKcut}).
Then, combining these sequents with $m$ cuts against
the upper left sequent of~(\ref{eq:LDTbyoneLKcut})
gives the desired sequent $\Gamma^* \sequent \Delta^*$.

The second, alternative, construction is dual.
It starts by deriving, for each~$j$,
the clause $\Gamma^* \sequent\penalty10000 \Delta^*,D_j$
by using $m$ cuts inferences
combining the sequents $F_i\sequent\penalty10000 D_j$ (for $i\le m$)
against the upper left sequent of~(\ref{eq:LDTbyoneLKcut}).
Then, combining these sequents with $\ell$ cuts against
the upper right sequent of~(\ref{eq:LDTbyoneLKcut})
gives the desired sequent $\Gamma^* \sequent \Delta^*$.

Either of these constructions gives immediately a polynomial-size
$\oneLK$ derivation simulating the
inference (\ref{eq:LDTbyoneLKcut}). The first construction
is not tree-like since it
uses the upper right sequent of (\ref{eq:LDTbyoneLKcut})
$m$ times. Likewise, the second construction used the upper left sequent
$\ell$ times. But in either case, this yields a dag-like
derivation, completing the polynomial
simulation of $\LDT$ by $\oneLK$.

The same constructions can work for the tree-like case, but this
requires a more careful size analysis and
gives only a quasipolynomial simulation.
If $\pi$ ends with a \emph{dec-l} and \emph{dec-r}
inference, let $\pi_0$ and~$\pi_1$
be the subderivations of~$\pi$ that end with the
upper left and right sequents (respectively) of
the inference~(\ref{eq:LDTbyoneLKorig}).
We use $\pi^*$, $\pi_0^*$ and $\pi_1^*$ to denote the
$\TreeoneLK$ proofs obtainable by the constructions above.
As $\pi$ ends with a decision inference, inspection of the construction
above shows
\[
|\pi^*| ~\le~ |\pi_0^*| + |\pi_1^*| + n^{O(1)}.
\]

Now suppose that $\pi$ ends with the cut inference~(\ref{eq:LDTbyoneLKcutBefore}),
and let $\pi_0$ and~$\pi_1$
be the subderivations of~$\pi$ that end with the
upper left and right sequents of~(\ref{eq:LDTbyoneLKcutBefore}).
If $|\pi_1|\le |\pi_0|$, then $|\pi_1| < |\pi|/2$; in this case, use the first construction that uses $\pi_0^*$ once and $\pi_1^*$ $m$~times,
to obtain a tree-like $\pi^*$ of size bounded by
$|\pi_0^*| + O(m \cdot |\pi_1^*|)$.  Dually, if
$|\pi_0|\le |\pi_1|$, then $|\pi_0|<|\pi|/2$
and the second construction yields $\pi^*$ of size bounded by
$|\pi_1^*| + O(\ell \cdot |\pi_0^*|)$.

Let $S(n)$ be the minimal size $\TreeoneLK$ proof required to simulate a $\TreeLDT$ proof~$\pi$ of size~$n$, namely $|\pi^*| \le S(|\pi|)$.
Combining the above size bounds into a single (rather crude)
estimate and letting $S(0)=0$ gives, for each $n$,
values $a$ and $b$ such that $a+b<n$ and
\[
S(n) ~\le~ S(a)+S(b)+ n^{O(1)} S(n/2).
\]
From this $S(n) = n^{O(\log n)}$ follows immediately,
giving the desired quasipolynomial simulation.
\end{proof}

\section{Nondeterministic decision trees and \texorpdfstring{$\LNDT$}{LNDT}}\label{sec:NDTsystems}

This section defines nondeterministic decision tree ($\NDT$) formulas,
and the associated sequent calculus $\LNDT$. The $\NDT$ formulas have
two kinds of connectives; the 3-ary case function $ApB$ and the Boolean
or gate ($\lor$). Formally,
\begin{definition}\label{def:NDT}
The \emph{nondeterministic decision tree formulas}, or \emph{$\NDT$ formulas}
for short, are inductively defined by
\begin{integerenumerate}
\item Any literal $p$ is a $\NDT$ formula, and
\item If $A$ and $B$ are $\NDT$ formulas
and $p$ is a variable, then $(A p B)$ is a $\NDT$ formula.
\item If $A$ and $B$ are $\NDT$ formulas, then
$(A\lor B)$ is an $\NDT$ formula.
\end{integerenumerate}
\end{definition}

A nondeterministic gate in a decision tree means a gate
which is accepting exactly when at least one of its children
is accepting. The corresponds exactly to an $\lor$ gate, which
yields \emph{True} exactly when at least one input is \emph{True}.
One of our motivations in defining
$\LNDT$ that is will serve as a foundation for our later
definition $\eLNDT$, which will capture a logic
for nondeterministic branching programs, and hence a logic
for nonuniform NL.

\begin{definition}\label{def:LNDT}
The sequent calculus $\LNDT$ is a proof system
in which lines are sequents of $\NDT$ formulas.
The valid initial sequents (axioms) and rules are the same
as those of $\LDT$ (Definition~\ref{def:DT}),
along with the two $\lor$ inferences, $\lorLeft$ and $\lorRight$
of $\LK$ as described
on page~\pageref{def:lorRules}.
\end{definition}

For $\alpha$ a 0-1-truth assignment, the
semantics of $\NDT$ formulas is defined extending the
definition of the semantics of $\DT$ formulas, in equations~\ref{eq:TruthDefnDT},
to include
\[
\alpha( A\lor B) = \begin{cases}
    1 \quad & \hbox{if $\alpha(A) = 1$ or $\alpha(B) = 1$} \\
    0       & \text{otherwise.}
\end{cases}
\]
It is straightforward to verify that $\LNDT$ is
implicationally sound and implicationally complete
for sequents of $\NDT$ formulas.

An important fact for $\NDT$ formulas is that we can, without
loss of much generality, require the $\lor$'s to be used only
as topmost connectives. This is formalized by the following definitions
and theorem.
\begin{definition}
\label{def:NDTnormalform}
An $\NDT$~$A$ is in \emph{normal form} if it has
the form $\bigvee_{i<n} A_i$
where each $A_i$ is a $\DT$ formula, i.e.,
each $A_i$ is $\vee$-free.
\end{definition}
As we show below, the fact that $\NDT$ are formulas (not circuits) means that there is a polynomial time procedure to transform a
a $\NDT$ formula to normal form.
\begin{definition}\label{def:NDTnormalFormFormula}
We extend the definition of the multiset $\Tms(A)$ to $\NDT$ formulas~$A$,
by inductively defining
\begin{eqnarray*}
\Tms(B p C) & := &
     \{ \overline p \land D : D \in \Tms(B) \} \cup
  \{ p \land D : D \in \Tms(C) \} \\[1ex]
\Tms(B \lor C ) & := & \Tms(B) \cup \Tms(C).
\end{eqnarray*}
The multiset $\DTms(A)$
is defined to be the set
of $\DT$ formulas
\[
\DTms(A) ~=~ \{ \Conj(\vec p) : {\textstyle\bigwedge \vec p \in \Tms(A)} \}.
\]
Equivalently, $\DTms(B\lor C) = \DTms(B) \cup \DTms(C)$ and
\[
\DTms(BpC) ~=~ \{ (\overline p\, \overline p\, D) : D \in \DTms(B) \} \cup
               \{ (p p D) : D \in \DTms(C) \}.
\]
The normal form of an $\NDT$ formula~$A$ is defined
to equal $\NF(A) := \bigvee \DTms(A)$.  The
disjunction consists of binary $\lor$ gates
applied the members of $\DTms(A)$. For convenience,
the disjunctions are ordered to respect the
structure of the formula~$A$. In particular,  $\NF(A \lor B)$
is just $\NF(A)\lor \NF(B)$.
\end{definition}

The next proposition formalizes the intuition that $\NF(A)$
is equivalent to~$A$.

\begin{proposition}\label{prop:ndtNF}
    The following have polynomial size, cut-free
    $\TreeLNDT$ proofs:
    \begin{multicols}{2}
        \begin{alphenumerate}[nolistsep]
            \item $\NF(A)\sequent p, \NF(ApB)$
            \item $p , \NF(B)\sequent \NF(ApB)$
            \item $\NF(ApB) \sequent \NF(A), p$
            \item $p, \NF(ApB) \sequent \NF(B)$
            \item $\NF(A) \sequent \NF(A\lor B)$
            \item $\NF(B) \sequent \NF(A\lor B)$
            \item $\NF(A \lor B) \sequent \NF(A), \NF(B)$
        \end{alphenumerate}
    \end{multicols}
\end{proposition}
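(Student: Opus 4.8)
The plan is to prove all seven parts by a single uniform strategy: peel off the top-level $\lor$-structure of each $\NF(\cdot)$ formula and reduce to a family of small \emph{per-disjunct} sequents at the level of individual $\DT$ formulas. Recall that $\NF(A) = \bigvee \DTms(A)$, that the disjuncts of $\NF(ApB)$ are exactly the $(\overline p\,\overline p\,D)$ for $D \in \DTms(A)$ together with the $(ppE)$ for $E \in \DTms(B)$, and that $\NF(A\lor B) = \NF(A)\lor\NF(B)$. Whenever an $\NF$ formula sits in an antecedent I would decompose it with repeated $\lorLeft$ inferences into its $\DTms$ disjuncts; whenever an $\NF$ formula sits in a succedent I would assemble it from its disjuncts by weakening in the unused disjuncts and then applying repeated $\lorRight$ inferences. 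Since $\NF$ is defined so that its disjunction is associated to respect the structure of the formula, these $\lor$-chains match the intended bracketing exactly.

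After this decomposition each part reduces to proving, for each relevant disjunct, a short $\DT$-level sequent. For (a) these are $D \sequent p,(\overline p\,\overline p\,D)$ for $D\in\DTms(A)$; for (b), $p,E\sequent (ppE)$ for $E\in\DTms(B)$; for (c), $(\overline p\,\overline p\,D)\sequent D,p$ and $(ppE)\sequent p$; and for (d), $p,(\overline p\,\overline p\,D)\sequent{}$ and $p,(ppE)\sequent E$. Each of these is derived by a single \emph{dec-l} or \emph{dec-r} inference whose two premises are discharged by the axioms, by weakenings, and by the identity sequent $D\sequent D$ supplied by Proposition~\ref{prop:LDTidentity}\ref{item:LDTidentity-A-A}. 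For example, $D\sequent p,(\overline p\,\overline p\,D)$ follows by a \emph{dec-r} with decision literal $\overline p$, whose premises $D\sequent p,\overline p,\overline p$ and $\overline p,D\sequent p,D$ come respectively from the axiom $\sequent p,\overline p$ and from $D\sequent D$, each with weakenings. Parts (e)--(g) are even simpler: after decomposing the top $\lor$ of $\NF(A\lor B)$ they reduce entirely to reflexivity, and pushing the decomposition all the way down to individual disjuncts $D\in\DTms(A)\cup\DTms(B)$ lets me discharge everything with $D\sequent D$ from Proposition~\ref{prop:LDTidentity}\ref{item:LDTidentity-A-A}, thereby avoiding any need for a separate reflexivity lemma at the $\NDT$ level.

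For the size bound I would note that $|\DTms(A)|$ is bounded by the number of atoms of $A$, so each $\NF$ formula has size $O(n^2)$ and each proof uses $O(n)$ per-disjunct gadgets of size $O(n)$ together with $\lor$-chains of length $O(n)$; the total is polynomial, every inference used is cut-free, and the whole construction is tree-like. The only points requiring real care --- and hence the closest thing to an obstacle --- are the bookkeeping of the negated decision literal $\overline p$ when applying the decision rules to formulas of the form $(\overline p\,\overline p\,D)$, and matching the associativity of the assembled disjunctions to the definition of $\NF$. Neither is deep, but both must be handled precisely to keep the derivations valid, cut-free, and polynomial-size.
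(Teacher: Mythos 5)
Your proposal is correct and follows essentially the same route as the paper's proof: reduce each part to per-disjunct sequents such as $D \sequent p, (\overline p\,\overline p\,D)$, derive each by a single decision inference whose premises come from axioms and Proposition~\ref{prop:LDTidentity}\ref{item:LDTidentity-A-A} via weakenings, and then reassemble with a tree of $\lorLeft$, $\lorRight$ and weakening inferences (with parts (e)--(g) reducing to reflexivity of the disjuncts). The only difference is that you spell out the per-disjunct sequents for (b)--(d) explicitly, which the paper dismisses as ``similar''.
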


\begin{proof}
We first prove (a); parts (b)-(d) are similar.
For each formula $D$ in $\DTms(A)$, the
sequent $D\sequent D$ has a polynomial size cut-free
$\TreeLDT$ proof by Proposition~\ref{prop:LDTidentity}(a).
From this, derive in $\LDT$,
\begin{center}
\Axiom$\fCenter p, \overline p$
\doubleLine
\LeftLabel{\emph{w-l, w-r:}}
\UnaryInf$D\fCenter \overline p,\overline p, p$
\Axiom$D \fCenter D$
\doubleLine
\LeftLabel{\emph{w-l, w-r:}}
\UnaryInf$\overline p, D \fCenter p, D$
\LeftLabel{\emph{dec-r:}}
\BinaryInf$D \fCenter p, (\overline p \, \overline p\, D)$
\DisplayProof
\end{center}
Combining all the sequents $D \sequent p, (\overline p \, \overline p\, D)$
with a tree of $\lorLeft$, $\lorRight$
and weakening inferences gives the desired
sequent $\NF(A)\sequent p, \NF(ApB)$.

To prove (e)-(g), note again that for each $D\in\Tms(A\lor B)$,
there is a polynomial size, cut-free proof of $D\sequent D$.
Then each of (e)-(g) can be derived by
combining (some of) these sequents with a tree of $\lor$
and weakening inferences.
\end{proof}

We write $\LNDT^\NF$ to denote the proof
system $\LNDT$ restricted to use sequents
containing only $\NDT$ formulas in normal form.

\begin{theorem}\label{thm:LNDT_NF_wlog}
Suppose $\Gamma\sequent\Delta$ contains only normal form
$\NDT$ formulas. Suppose $\pi$ is an
$\LNDT$ (respectively, a $\TreeLNDT$) proof of $\Gamma\sequent\Delta$.
Then $\Gamma \sequent \Delta$ has an
$\LNDT^\NF$ (respectively, a $\TreeLNDT^\NF$) proof~$\pi^\prime$
of size polynomially bounded by the size of~$\pi$.
\end{theorem}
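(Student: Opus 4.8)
The plan is to transform the given proof $\pi$ into a normal-form proof $\pi'$ by replacing every $\NDT$ formula $A$ occurring in $\pi$ with its normal form $\NF(A)$, and then repairing the inferences that are no longer valid. First I would observe that this replacement is harmless for the structural machinery: since $\NF(p) = p$ for literals, every axiom translates to itself; and since contraction, weakening and cut act on whole formulas, each such inference of $\pi$ becomes a legal $\LNDT^\NF$ inference once $A$ is uniformly replaced by $\NF(A)$ (a cut on $A$ becomes a cut on the normal-form formula $\NF(A)$). The only inferences needing repair are the decision rules and the $\lor$-rules.

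For the $\lor$-rules there is essentially nothing to do: because $\NF(A\lor B) = \NF(A)\lor\NF(B)$ by definition, a \lorRight{} (resp.\ \lorLeft) inference of $\pi$ translates to a single \lorRight{} (resp.\ \lorLeft) inference whose sequents remain in normal form (a disjunction of $\DT$ formulas). For a decision inference, say a \emph{dec-r} deriving $\Gamma \sequent \Delta, ApB$ from $\Gamma \sequent \Delta, A, p$ and $p, \Gamma \sequent \Delta, B$, the translated premises are $\NF(\Gamma) \sequent \NF(\Delta), \NF(A), p$ and $p, \NF(\Gamma) \sequent \NF(\Delta), \NF(B)$. I would cut these against the equivalences $\NF(A) \sequent p, \NF(ApB)$ and $p, \NF(B) \sequent \NF(ApB)$ supplied by Proposition~\ref{prop:ndtNF}(a),(b), followed by a cut on $p$ and contractions, to reach $\NF(\Gamma) \sequent \NF(\Delta), \NF(ApB)$; the \emph{dec-l} case is dual, using parts (c),(d). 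Every sequent appearing here is a disjunction of $\DT$ formulas, hence in normal form, and each premise and each equivalence proof is used exactly once, so tree-likeness is preserved and the size grows by only a polynomial factor (note $|\NF(A)| = O(|A|^2)$).

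Applying these repairs throughout yields an $\LNDT^\NF$ (resp.\ $\TreeLNDT^\NF$) proof of the formula-wise normal form $\NF(\Gamma) \sequent \NF(\Delta)$, whereas we want $\Gamma \sequent \Delta$ itself. Since $\NF$ does not fix normal-form formulas (a $\DT$ formula is not literally equal to its DNF normal form), the last step is to convert back. For this I would establish, by induction on the structure of a $\DT$ formula $B$, that $B \sequent \NF(B)$ and $\NF(B) \sequent B$ have polynomial-size $\TreeLNDT^\NF$ proofs: the base case $B = p$ is trivial, and the inductive step $B = CpD$ applies a single \emph{dec-l} (resp.\ \emph{dec-r}) whose premises are obtained by cutting the induction hypotheses against Proposition~\ref{prop:ndtNF}(a)--(d). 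Combining these with the $\lor$-rules and Proposition~\ref{prop:ndtNF}(e)--(g) gives $A \sequent \NF(A)$ and $\NF(A) \sequent A$ for every normal-form $A = \bigvee_i A_i$; all of these sequents are in normal form since each $A_i$ is a single $\DT$ formula. Finally, cutting these equivalences against $\NF(\Gamma) \sequent \NF(\Delta)$ --- once per formula of $\Gamma$ and $\Delta$ --- recovers $\Gamma \sequent \Delta$ within $\LNDT^\NF$ (resp.\ $\TreeLNDT^\NF$), at polynomial cost.

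I expect the main obstacle to be bookkeeping rather than conceptual: ensuring that every intermediate sequent genuinely stays in normal form (a disjunction of $\DT$ formulas), and that the repairs use each premise only once so that tree-likeness is preserved, while keeping the cumulative size polynomial. The back-translation lemma $A \liff \NF(A)$ for normal-form $A$ is the one place where a genuine induction (rather than a local patch) is required, and obtaining its normal-form and tree-like versions simultaneously is the most delicate part.
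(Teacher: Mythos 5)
Your proposal is correct and follows essentially the same route as the paper: replace every formula by its normal form, observe that axioms, structural rules, cuts and $\lor$-rules survive the translation (using $\NF(A\lor B)=\NF(A)\lor\NF(B)$), and repair the decision rules by cutting against Proposition~\ref{prop:ndtNF}(a)--(d) followed by a cut on~$p$. The one point where you go beyond the paper is your final back-translation step: the paper's proof stops once the translated proof of $\NF(\Gamma)\sequent\NF(\Delta)$ is obtained, leaving implicit the passage back to $\Gamma\sequent\Delta$ itself (note that $\NF$ is not the identity even on normal-form formulas), whereas you explicitly supply the equivalences $A\sequent\NF(A)$ and $\NF(A)\sequent A$ by induction and cut them in --- a legitimate and worthwhile completion of the argument rather than a divergence from it.
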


\begin{proof}
As a first step towards forming $\pi^\prime$, replace
every formula~$A$ in~$\pi$ with $\NF(A)$.  Axioms in~$\pi$
are unchanged. Contraction inferences, weakening inferences,
and cut inferences in~$\pi$ remain valid inferences. Likewise,
since $\NF(A\lor B)$ equals $\NF(A) \lor \NF(B)$, the
$\lor$ inferences in~$\pi$ remain valid.  However, the
\emph{dec-r} and \emph{dec-l} may no longer be valid and need
to be fixed up.  Consider a \emph{dec-r} inference in~$\pi$:
\[
\hbox{
\Axiom$\Pi \fCenter \Lambda, A, p$
\Axiom$p, \Pi \fCenter \Lambda, B$
\LeftLabel{\emph{dec-r:}}
\BinaryInf$\Pi \fCenter \Lambda, ApB$
\DisplayProof }
\]
This is transformed to
\begin{equation}\label{eq:LNDT_NF_wlog_Pf}
\hbox{
\Axiom$ \Pi^* \fCenter \Lambda^*, \NF(A), p$
\Axiom$p, \Pi^* \fCenter \Lambda^*, \NF(B)$
\doubleLine
\BinaryInf$\Pi^* \fCenter \Lambda^*, \NF(ApB)$
\DisplayProof }
\end{equation}
where $\Pi^*$ and $\Lambda^*$ are the cedents obtained
after replacing each formula by its normal form.
Applying cuts with the formulas (a) and~(b)
of Proposition~\ref{prop:ndtNF} and then a cut on~$p$
gives
\[
\small
\hbox{
\Axiom$ \Pi^* \fCenter \Lambda^*, \NF(A), p$
\Axiom$\NF(A)\fCenter p, \NF(ApB)$
\doubleLine
\BinaryInf$\Pi^* \fCenter \Lambda^*, \NF(ApB), p$
\Axiom$p, \Pi^* \fCenter \Lambda^*, \NF(B)$
\Axiom$p , \NF(B)\fCenter \NF(ApB)$
\doubleLine
\BinaryInf$p, \Pi^* \fCenter \Lambda^*, \NF(ApB)$
\BinaryInf$\Pi^* \fCenter \Lambda^*, \NF(ApB)$
\DisplayProof}
\]
This turns (\ref{eq:LNDT_NF_wlog_Pf}) into a $\LNDT$ derivation.
\end{proof}

\subsection{$\LDT$ and tree-like $\LNDT$ are equivalent}
\label{sec:LdtTreeLndt}

Next we turn to the relative complexity of $\LDT$ and $\LNDT$.
Naturally the latter subsumes the former,
but this can be strengthened as follows.\footnote{%
This also refines the known polynomial equivalence between
$\oneLK $ and $\TreeSys{\dLK{2}}$,
cf.\ Figure~\ref{fig:simulations}.}

\begin{theorem}\label{thm:tree-lndt-equiv-ldt}
$\TreeLNDT$ is polynomially equivalent to $\LDT$ over $\DT$-sequents.
\end{theorem}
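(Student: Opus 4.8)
The plan is to prove the two simulations separately: the (easier) direction that $\LDT$ polynomially simulates $\TreeLNDT$ on $\DT$-sequents, and the direction that $\TreeLNDT$ polynomially simulates $\LDT$. The latter is the conceptual heart, since it is a dag-to-tree simulation where naive unfolding is exponential; the point of the $\lor$ connective is precisely to avoid that blow-up, which is also why tree-likeness is essential here (dag-like $\LNDT$ is exponentially stronger, per Figure~\ref{fig:simulations}).

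For $\LDT \succeq \TreeLNDT$, suppose $\pi$ is a $\TreeLNDT$ proof of a $\lor$-free sequent $S$. First I would apply Theorem~\ref{thm:LNDT_NF_wlog} to assume $\pi$ is in normal form, so that every formula is a disjunction $\bigvee_t \Conj(t)$ of $\DT$-formulas (Definition~\ref{def:NDTnormalFormFormula}) and all $\lor$'s are topmost. A succedent occurrence of $\bigvee_t \Conj(t)$ is then simply replaced by the multiset $\{\Conj(t)\}_t$, which is sound because a succedent is read disjunctively and which turns every $\lorRight$ into a trivial regrouping. An antecedent disjunction, by contrast, cannot be turned into a multiset, but since $S$ is $\lor$-free every such formula must eventually be removed by a cut, and in a tree-like proof this happens exactly once. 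For each such cut I would use Proposition~\ref{prop:ndtNF}(g) to split the topmost $\lor$'s on the succedent side (recovering $\Pi \sequent \Lambda, \Conj(t_1),\dots,\Conj(t_r)$) and replace the single cut on $\bigvee_t \Conj(t)$ by a cascade of $\DT$-cuts against the case premises $\Conj(t_s),\Gamma \sequent \Delta$ produced by the corresponding $\lorLeft$, merging contexts. Since each $\lor$-cut is replaced by at most (formula size) many $\DT$-cuts and the case-subproofs are each used once, this stays polynomial and yields a dag-like $\LDT$ proof of $S$.

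For $\TreeLNDT \succeq \LDT$, let $(S_0,\dots,S_m)$ be a dag-like $\LDT$ proof with $S_k = \Gamma_k \sequent \Delta_k$. To each line I attach its \emph{characteristic} $\NDT$-formula $\mathrm{cf}(S_k) := \bigvee_{A\in\Gamma_k}\overline A \lor \bigvee_{B\in\Delta_k} B$, using that $\DT$-formulas are closed under the definable negation $\overline{(\cdot)}$. The idea is to thread these characteristics through a \emph{single} tree-like proof, maintaining the invariant that, after processing $S_0,\dots,S_k$, the succedent holds exactly the characteristics of the still-\emph{live} lines (those used by some later step). When $S_k$ is inferred from earlier $S_i,S_j$ by a rule, soundness of that rule yields a short tree-like $\LNDT$ proof of $\mathrm{cf}(S_i),\mathrm{cf}(S_j)\sequent \mathrm{cf}(S_k)$ — provable from the identity and negation sequents of Proposition~\ref{prop:LDTidentity}, the decision rules, and the $\lor$ rules — and cutting this against the running succedent consumes $\mathrm{cf}(S_i),\mathrm{cf}(S_j)$ and introduces $\mathrm{cf}(S_k)$. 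A premise still needed later is first duplicated by a weakening so that a copy survives the cut, while a premise at its last use is simply consumed. Since a line is kept live only while it still has a future use and $S_m$ is used nowhere, after processing $S_m$ the succedent is exactly $\sequent \mathrm{cf}(S_m)$; from this I would recover $\Gamma_m \sequent \Delta_m$ by splitting the topmost $\lor$'s (Proposition~\ref{prop:ndtNF}) and moving each negated disjunct $\overline A$ back to the antecedent via the sequents of Proposition~\ref{prop:LDTidentity}.

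The main obstacle is the size bookkeeping in this second direction. Naively accumulating $\mathrm{cf}(S_0),\dots,\mathrm{cf}(S_m)$ leaves the earlier characteristics as ``garbage'' in the succedent that cannot be deleted, since each $\mathrm{cf}(S_i)$ is a tautology and hence not refutable; the entire construction therefore hinges on consuming each characteristic exactly at its last use through the duplicate-then-cut mechanism, and on verifying that the running succedent stays polynomially bounded (by the number of live lines times the maximal characteristic size) so that the final tree-like proof is polynomial. A secondary nuisance is that the context-matching form of the cut rule forces some routine weakening and contraction when two premises are consumed at once; I would handle this by cutting one premise at a time against a common enlarged context and contracting afterwards, which adds only polynomial overhead.
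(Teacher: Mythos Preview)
Your direction $\TreeLNDT \succeq \LDT$ is essentially the paper's argument: form the characteristic formula $A_i = \bigvee(\overline{\Gamma_i}\cup\Delta_i)$ of each line, give short $\TreeLNDT$ proofs of the local step sequents $A_j,A_k\sequent A_i$, and assemble them by cuts. Your forward ``liveness'' bookkeeping is correct but unnecessary: the paper simply proves $A_1,\dots,A_m,\Gamma_m\sequent\Delta_m$ in one shot (Claim~\ref{clm:disjunctions}) and then cuts away $A_m,A_{m-1},\dots,A_1$ in order, so no tracking of which characteristics are still needed is required. Both organisations are polynomial and tree-like.

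Your direction $\LDT \succeq \TreeLNDT$ has a genuine gap. You correctly reduce to normal form and flatten succedent disjunctions into multisets, and you are right that every antecedent $\lor$-formula must eventually be cut. But the $\lorLeft$ that introduces $\bigvee_s \Conj(t_s)$ need not sit directly above the cut that removes it: there is in general a segment of the tree in between, during which the disjunction is a side formula (and, via the shared-context binary rules, it is also carried into every off-path premise subtree along that segment). Your ``case premises $\Conj(t_s),\Gamma\sequent\Delta$'' are stated with the \emph{cut} context $\Gamma,\Delta$, but what $\lorLeft$ actually gives you has the \emph{introduction} context; bridging the two means replaying that whole segment (and its off-path subtrees) once per disjunct $t_s$. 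When such $\lorLeft$/cut threads are nested, these replays multiply, and the construction is no longer polynomial. Your sentence ``the case-subproofs are each used once'' counts only the $\lorLeft$ premises, not the intervening material.

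The paper avoids this by a different mechanism: it does not try to keep antecedent disjunctions in place at all. Instead, each antecedent formula $\bigvee\Pi_i$ is turned into an \emph{extra hypothesis} $\sequent\Pi_i$, and one proves by induction on the tree that the flattened succedent $\sequent\mathcal{F}(\Delta)$ is derivable in (dag-like) $\LDT$ from the hypothesis set $\mathcal{H}(\Gamma)$. The $\lorLeft$ case is then handled by the additive $\srt$ operation (pad every succedent of a sub-derivation with a fixed cedent), which combines the two sub-derivations with hypotheses $\sequent\Pi$ and $\sequent\Lambda$ into one with hypothesis $\sequent\Pi,\Lambda$; the cut case is handled the same way. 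Because $\srt$ only adds a cedent to each line rather than duplicating sub-derivations, and because the target system is dag-like, the size stays polynomial.
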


\begin{proof}
We first show $\TreeLNDT$ polynomially simulates $\LDT$. Suppose $\pi$ is
an $\LDT$-proof (possibly dag-like) with $m$ sequents
$\Gamma_i\sequent\Delta_i$ for $i=1,\dots,m$. Define
$\overline \Gamma$ to be the multiset of formulas $\overline F$
for $F\in\Gamma$.  Let $A_i$ be $\bigvee(\overline\Gamma\cup\Delta)$,
namely a tree of (binary) disjunctions of the formulas
in $\overline\Gamma \cup\Delta$. (The disjunctions may be applied
in any order.) Clearly, each $A_i$ is
a $\NDT$-formula.

The next claim  will help us work with
disjunctions.
\begin{claim}\label{clm:disjunctions}
Let $\Pi, \Lambda,\Gamma, \Delta$ be cedents.
Suppose that for each formula $F\in \Pi$,
the formula $F\in \Lambda\cup \Delta\cup \overline\Gamma$.
(If there are multiple occurrences
of $F$ in~$\Pi$ it is not required to have multiple occurrences of~$F$
in $\Lambda\cup \Delta\cup \overline\Delta$.)
Let $\cal H$ (``hypotheses'') be the set containing the cedents
$F\sequent F$ such that $F\in \Pi\cap(\Lambda\cup \Delta)$
and the cedents $\overline F, F \sequent$ for $F\in(\Pi\cap\overline\Gamma)$.
Then the sequent
\[
\Gamma, \bigvee \Pi \sequent \bigvee\Lambda, \Delta
\]
has
a polynomial size, tree-like, cut-free proof from (a subset of) the initial sequents~$\cal H$.
using only $\lor$ inferences and weakenings.
\end{claim}

To understand the claim, note that the assumption is that any $F$ in~$\Pi$
also appears in $\Lambda$ or $\Delta$ or negated in~$\Gamma$.
The proof of the claim is by a simple application of
$\lorLeft$ and $\lorRight$ rules.

Returning to the proof of Theorem~\ref{thm:tree-lndt-equiv-ldt},
consider some $A_i$. If $\Gamma_i\sequent \Delta_i$ is an axiom,
then $A_i$ has
the form $p \lor \overline p$.  Clearly there is a short cut-free
$\TreeLNDT$ proof of $\sequent A_i$.  If
$\Gamma_i\sequent\Delta_i$ is inferred from
$\Gamma_j\sequent\Delta_j$ by a \emph{unary} inference (with $j<i$),
then by inspection of the contraction and weakening rules,
$(\overline \Gamma_j \cup \Delta_j)\subseteq\overline \Gamma_i \cup \Delta_i$.
Thus, by the claim,
there is a polynomial size, cut-free $\TreeLNDT$-proof of
$A_j\sequent A_i$,
since $A_i$ is $\bigvee(\overline\Gamma_i \cup \Delta_i)$
and $A_j$ is $\bigvee(\overline\Gamma_j \cup \Delta_j)$.

Finally, suppose $\Gamma_i\sequent\Delta_i$ is inferred
by a binary inference from
$\Gamma_j\sequent\Delta_j$ and $\Gamma_k\sequent\Delta_k$
(with $j,k<i$).
We will prove that the sequent$A_j, A_k\sequent A_i$
has a polynomial size tree.
Suppose $A_i$ is inferred by a cut inference,
\[
\hbox{
\Axiom$\Gamma_i \fCenter\Delta_i, C$
\Axiom$C, \Gamma_i \fCenter, \Delta_i$
\LeftLabel{\emph{cut:}}
\BinaryInf$\Gamma_i \fCenter \Delta_i$
\DisplayProof }
\]
Then $A_j$ is $\bigvee(\Delta_i\cup\{C\}\cup\overline\Gamma_i)$
and $A_i$ is $\bigvee(\Delta_i\cup\overline\Gamma_i)$
and the Claim~\ref{clm:disjunctions} and
Proposition~\ref{prop:LDTidentity} imply that
$A_j \sequent A_i, C$ has a polynomial size cut-free
proof.  Similarly, $\overline C, A_k \sequent A_i$ has
polynomial size, cut-free proof. Using a cut on~$C$,
gives a proof of $A_j, A_k\sequent A_i$.
Second, suppose $A_i$ is inferred by a
\emph{dec-l} inference
\[
\hbox{
\Axiom$A,\Gamma_i^\prime \fCenter \Delta_i, p$
\Axiom$p, B, \Gamma_i^\prime \fCenter \Delta_i$
\LeftLabel{\emph{dec-l:}}
\BinaryInf$ApB, \Gamma_i^\prime \fCenter \Delta_i$
\DisplayProof }
\]
where $\Gamma_i$ is $ApB,\Gamma_i^\prime$, and the upper left
and right sequents are
$\Gamma_j\sequent\Delta_j$ and $\Gamma_k\sequent\Delta_k$,
respectively.
Since $A_j$ is $\bigvee \{ \overline A, \overline \Gamma_i, \Delta_i, p \}$ and
$A_k$ is $\bigvee \{ \overline B, \overline p, \overline \Gamma_i, \Delta_i \}$
and $A_i$ is $\bigvee \{ \overline {ApB}, \overline \Gamma_i, \Delta_i \}$,
Claim~\ref{clm:disjunctions} and Proposition~\ref{prop:LDTidentity}
give polynomial size, cut-free $\TreeLNDT$ proofs of
$A, A_j \sequent A_i, p$ and
$p, B, A_k \sequent A_i$.  Applying a \emph{dec-l} rule
gives a polynomial size $\TreeLNDT$ of $A_j,A_k\sequent A_i$.
The third case where $A_i$ is inferred by a
\emph{dec-l} inference is similar, and again we
obtain a polynomial size $\TreeLNDT$ of $A_j,A_k\sequent A_i$.

We have shown that for each $i\le m$, there is are (up to two) values
$j,k<\penalty10000 i$
such that the sequent $A_j, A_k\sequent A_i$ has a polynomial
size, $\TreeLNDT$ proof, where the formulas $A_j$ and $A_k$ are
possibly omitted.  We can now complete the proof of the first half of
Theorem~\ref{thm:tree-lndt-equiv-ldt}. By Claim~\ref{clm:disjunctions},
there is a polynomial size $\TreeLNDT$ proof of
$A_1,\dots, A_m, \Gamma_m \sequent \Delta_m$. Cutting
with the sequents $A_j, A_k\sequent A_i$
for $i=m, m{-}1, \dots, 2, 1$, we derive
successively $A_1,\dots, A_\ell, \Gamma_m \sequent \Delta_m$
for $\ell = m, \dots, 2, 1$.  With $\ell=0$, a polynomial size
$\TreeLNDT$ proof of $\Gamma_m \sequent \Delta_m$, the endsequent
of~$\pi$.  This completes the proof that $\TreeLNDT$ polynomially simulates
$\LDT$.

To prove the second part of Theorem~\ref{thm:tree-lndt-equiv-ldt},
suppose $\pi$ is a $\TreeLNDT$ proof.
By Theorem~\ref{thm:LNDT_NF_wlog}, we may assume that every
formula in~$\pi$ is in normal form. That is,
each sequent $\Gamma\sequent \Delta$ in~$\pi$
has the form
\[
\bigvee \Pi_1, \dots, \bigvee \Pi_k
    \sequent \bigvee \Lambda_1, \dots, \bigvee \Lambda_\ell
\]
where each $\Pi_i$ and $\Lambda_j$ is a multiset of
$\DT$-formulas.  We shall prove that there
is a polynomial size $\DT$ derivation~$\pi^\prime$ of the
sequent
\begin{equation}\label{eq:flattenDelta}
\sequent \Lambda_1, \dots, \Lambda_\ell
\end{equation}
from the extra hypotheses $\sequent \Pi_i$. The proof is by induction
on the number of lines in the proof~$\pi$.
If $\pi$ is
just an axiom, then this is trivial.  Otherwise the argument splits
into cases depending on the final inference of~$\pi$.

For a more compact notation, we write $\mathcal{F}(\Delta)$ to
denote the succedent in~(\ref{eq:flattenDelta}) (``$\cal F$'' for
``flatten'').  And we write $\mathcal{H}(\Gamma)$ to denote the
set of sequents $\sequent \Lambda_i$ (``{$\cal H$}'' for ``hypotheses'').

If $\pi$ ends with a weakening or contraction inference,
the argument is essentially
trivial. For instance, if $\pi$ ends with a \emph{c-l} inference
\begin{center}
\Axiom$A , A , \Gamma \fCenter \Delta$
\LeftLabel{\emph{c-l:}}
\UnaryInf$A, \Gamma \fCenter \Delta$
\DisplayProof
\end{center}
then the induction hypothesis gives a $\LDT$ proof~$\pi_0^\prime$
of $\mathcal{F}(\Delta)$ from the hypotheses $\mathcal{H}(A, A, \Gamma)$.
But $\mathcal{H}(A, A, \Gamma)$ is equal to $\mathcal{H}(A, \Gamma)$,
we can just take $\pi^\prime$ to be~$\pi_0^\prime$.  The case
where $\pi$ ends with a \emph{w-l} inference
is handled similarly, since $\mathcal{H}(A, \Gamma)$ is a superset
of $\mathcal{H}(\Gamma)$.  If $\pi$ ends with a \emph{c-r} inference or
a \emph{w-r} inferences, we form $\pi^\prime$
by adding the same kind of inference to the end of the
$\LDT$ deduction~$\pi_0^\prime$ given by the induction hypothesis.

Suppose the final inference of~$\pi$ is a cut inference
\begin{center}
\Axiom$\Gamma \fCenter \Delta, A$
\Axiom$A, \Gamma \fCenter \Delta$
\LeftLabel{\emph{cut:}}
\BinaryInf$\Gamma \fCenter \Delta$
\DisplayProof
\end{center}
The cut formula~$A$ is an $\NDT$ formula, hence it is of the form
$\bigvee \Lambda$ for some cedent $\Lambda$ of $\DT$ formulas,
and ${\cal F}(A) = \Lambda$.

The two upper sequents of the cut have (disjoint since tree-like) $\TreeLNDT$ proofs
$\pi_0$ and~$\pi_1$.
The induction hypothesis gives an $\LDT$ proof~$\pi_0^\prime$
of the sequent $\sequent \mathcal{F}(\Delta), \Lambda$ from
the hypotheses $\mathcal{H}(\Gamma)$
and an $\LDT$ proof~$\pi_1^\prime$ of $\sequent \mathcal{F}(\Delta)$ from
the hypotheses $\sequent \Lambda$ and $\mathcal{H}(\Gamma)$.  We modify
$\pi_1^\prime$ to form a new $\LDT$ derivation, denoted
$\pi_1^\prime \srt \mathcal{F}(\Delta)$,
which is formed from $\pi_1^\prime$ by replacing each
sequent $\Pi \sequent \Xi$ in~$\pi_1^\prime$ with
$\Pi \sequent \Xi, \mathcal{F}(\Delta)$, and then
fixing up initial sequents to be validly derived by adding weakening inferences as needed.
This forms $\pi_1^\prime \srt \mathcal{F}(\Delta)$ as a $\LDT$-proof of
$\sequent \mathcal{F}(\Delta), \mathcal{F}(\Delta)$ from the hypotheses {$\cal H$}
and $\sequent \mathcal{F}(\Delta), \Lambda$.  We form the desired proof
$\pi^\prime$ by concatenating $\pi_0^\prime$ and
$\pi_1^\prime \srt \mathcal{F}(\Delta)$ and concluding with contraction inferences:
\begin{center}
\alwaysNoLine
\AxiomC{$\mathcal{H}(\Gamma)$}
\def\extraVskip{3pt}
\UnaryInfC{$\proofdotsL{\pi_0^\prime}$}
\UnaryInf$\fCenter \mathcal{F}(\Delta), \Lambda$
\UnaryInfC{$\proofdotsL{\pi_1^\prime \srt \mathcal{F}(\Delta)}$}
\UnaryInf$\fCenter \mathcal{F}(\Delta), \mathcal{F}(\Delta)$
\LeftLabel{\emph{c-r:}}
\doubleLine
\UnaryInf$\fCenter \mathcal{F}(\Delta)$
\DisplayProof
\end{center}
This yields $\pi^\prime$ as a polynomial size $\LDT$ proof
of $\sequent \mathcal{F}(\Delta)$ from the hypotheses~$\cal H$.

Now suppose the final inference of~$\pi$ is an $\lorRight$ inference
\begin{center}
\Axiom$\Gamma \fCenter \Delta, A, B$
\LeftLabel{\emph{$\lorRight$:}}
\UnaryInf$\Gamma \fCenter \Delta, A\lor B$
\DisplayProof
\end{center}
The $\NDT$ formulas $A$ and $B$ are equal
to $\bigvee \Pi$ and $\bigvee \Lambda$ where
$\Pi$ and~$\Lambda$ are cedents of $\DT$ formulas.
The induction hypothesis gives an $\LDT$ proof~$\pi_0^\prime$
of $\sequent \mathcal{F}(\Delta), \Pi, \Lambda$ from the hypotheses $\mathcal{H}(\Gamma)$.
The desired proof~$\pi^\prime$ is just equal to~$\pi_0$.

Now suppose the final inference of~$\pi$ is an $\lorLeft$ inference
\begin{center}
\Axiom$A, \Gamma \fCenter \Delta$
\Axiom$B, \Gamma \fCenter \Delta$
\LeftLabel{\emph{$\lorLeft$:}}
\BinaryInf$ A\lor B, \Gamma \fCenter \Delta$
\DisplayProof
\end{center}
The $\NDT$ formulas $A$ and $B$ are again equal
to $\bigvee \Pi$ and $\bigvee \Lambda$. The induction
hypothesis gives an $\LDT$ proof~$\pi_0^\prime$ of
$\sequent \mathcal{F}(\Delta)$ from the hypotheses
$\sequent \Pi$ and $\cal H(\Gamma)$,
and gives an $\LDT$ proof~$\pi_1^\prime$ of
$\sequent \mathcal{F}(\Delta)$ from the hypotheses
$\sequent \Lambda$ and~${\cal H}(\Gamma)$. We must produce an $\LDT$ proof~$\pi^\prime$
of $\sequent \mathcal{F}(\Delta)$ from the hypotheses
$\sequent\penalty 10000 \Pi, \Lambda$ and~$\mathcal{H}(\Gamma)$.  We form
$\pi_0^\prime \srt \Lambda$ by adding $\Lambda$ to the
antecedent of each sequent in~$\pi_0^\prime$, and then fixing
up all initial sequents with weakening inferences, except leaving the
initial sequents $\sequent \Pi, \Lambda$ as is. This makes
$\pi_0\srt\Lambda$ an $\LDT$ derivation of
$\sequent \mathcal{F}(\Delta), \Lambda$ from the hypotheses
$\sequent \Pi, \Lambda$ and $\mathcal{H}(\Gamma)$.  We similarly form
$\pi_1^\prime \srt \mathcal{F}(\Delta)$ to be a $\LDT$ proof
of $\sequent \mathcal{F}(\Delta), \mathcal{F}(\Delta)$ from the
hypotheses $\sequent \mathcal{F}(\Delta), \Lambda$ and $\mathcal{H}(\Gamma)$.
Putting these together as:
\begin{center}
\alwaysNoLine
\AxiomC{$\sequent \Pi,\Lambda$  \quad ${\cal H}(\Gamma)$}
\kernHyps{-8pt}
\def\extraVskip{3pt}
\UnaryInfC{$\proofdotsL{\pi_0^\prime \srt \Lambda}$}
\UnaryInf$\fCenter \mathcal{F}(\Delta), \Lambda$
\UnaryInfC{$\proofdotsL{\pi_1^\prime \srt \mathcal{F}(\Delta)}$}
\UnaryInf$\fCenter \mathcal{F}(\Delta), \mathcal{F}(\Delta)$
\LeftLabel{\emph{c-r:}}
\doubleLine
\UnaryInf$\fCenter \mathcal{F}(\Delta)$
\DisplayProof
\end{center}
forms the desired $\LDT$ proof of $\sequent \mathcal{F}(\Delta)$
from the hypotheses $\sequent \Pi,\Lambda$ and~${\cal H}(\Gamma)$.

Now suppose the final inference of~$\pi$ is a \emph{dec-r} inference
\begin{center}
\Axiom$\Gamma \fCenter \Delta, A, p$
\Axiom$p, \Gamma \fCenter \Delta, B$
\LeftLabel{\emph{dec-r:}}
\BinaryInf$\Gamma \fCenter \Delta, ApB$
\DisplayProof
\end{center}
$A$ and $B$ are $\DT$ formulas. The induction
hypothesis gives an $\LDT$ proof~$\pi_0^\prime$
of $\sequent \mathcal{F}(\Delta), A, p$ from
the hypotheses $\mathcal{H}(\Gamma)$ and an $\LDT$ proof~$\pi_1^\prime$
of $\sequent \mathcal{F}(\Delta), B$ from the hypotheses
$\sequent p$ and $\mathcal{H}(\Gamma)$.
We form an $\LDT$ proof~$p \srt \pi_1^\prime$ by adding
$p$ to each antecedent, replacing the hypothesis $\sequent p$
with the axiom $p\sequent p$, and adding weakenings to fix
up the other initial sequents. The desired
$\LDT$ proof~$\pi^\prime$ is formed as:
\begin{center}
\alwaysNoLine
\def\extraVskip{3pt}
\AxiomC{$\mathcal{H}(\Gamma)$}
\UnaryInfC{$\proofdotsL{\pi_0^\prime}$}
\UnaryInf$\fCenter \mathcal{F}(\Delta), A, p$
\AxiomC{$\mathcal{H}(\Gamma)$}
\UnaryInfC{$\proofdotsL{p \srt \pi_1^\prime}$}
\UnaryInf$p \fCenter \mathcal{F}(\Delta), B$
\singleLine
\LeftLabel{\emph{dec-r:}}
\BinaryInf$\fCenter \mathcal{F}(\Delta), ApB$
\DisplayProof
\end{center}

Finally  suppose the final inference of~$\pi$ is a \emph{dec-l} inference
\begin{center}
\Axiom$A,\Gamma \fCenter \Delta, p$
\Axiom$p, B, \Gamma \fCenter \Delta$
\LeftLabel{\emph{dec-l:}}
\BinaryInf$ApB, \Gamma \fCenter \Delta$
\DisplayProof
\end{center}
where $A$ and $B$ are again $\DT$ formulas, and the induction
hypothesis gives an $\LDT$ proof~$\pi_0^\prime$ of
$\sequent \mathcal{F}(\Delta), p$ from the hypotheses
$\sequent A$ and $\mathcal{H}(\Gamma)$ and an $\LDT$ proof~$\pi_1^\prime$
of $\sequent \mathcal{F}(\Delta)$ from the hypotheses
$\sequent p$ and $\sequent B$ and $\mathcal{H}(\Gamma)$.
We need to form an $\LDT$ proof of $\sequent \mathcal{F}(\Delta)$ from the
hypothesis $\sequent ApB$ and $\mathcal{H}(\Gamma)$.
From Proposition \ref{prop:LDTidentity}(f,g), there
are short $\LDT$ proofs of $ApB\sequent A,p$ and
$ApB,p \sequent B$. Similarly to the previous cases,
we form an $\LDT$ proof $\pi_0^\prime \srt p$ of
$\sequent \mathcal{F}(\Delta), p$ from the hypotheses
$\sequent A, p$ and $\mathcal{H}(\Gamma)$. We also
form an $\LDT$ proof~$p \srt \pi_1^\prime$ of
$p \sequent \mathcal{F}(\Delta)$ from the hypotheses
$p \sequent B$ and $\mathcal{H}(\Delta)$. Combining all these
with cuts gives the desired $\LDT$ proof~$\pi$ as:
\begin{center}
\def\extraVskip{3pt}

\Axiom$\fCenter ApB$
\AxiomC{$\proofdotsL{\parbox[b]{2cm}{\raggedright Prop.\\ \ref{prop:LDTidentity}(f)}}$}
\noLine
\kernHyps{-10pt}
\UnaryInf$ApB\fCenter A, p$
\LeftLabel{\emph{cut:}}
\BinaryInf$\fCenter A,p$
\noLine
\UnaryInfC{$\proofdotsL{\pi_0^\prime \srt p}$}
\noLine
\UnaryInf$\fCenter \mathcal{F}(\Delta), A, p$

\Axiom$\fCenter ApB$
\AxiomC{$\proofdotsL{\parbox[b]{2cm}{\raggedright Prop.\\ \ref{prop:LDTidentity}(g)}}$}
\noLine
\kernHyps{-10pt}
\UnaryInf$ApB, p \fCenter B$
\LeftLabel{\emph{cut:}}
\BinaryInf$p \fCenter B$
\noLine
\UnaryInfC{$\proofdotsL{p \srt \pi_1^\prime}$}
\noLine
\UnaryInf$p, \fCenter \mathcal{F}(\Delta)$

\LeftLabel{\emph{cut:}}
\BinaryInf$\fCenter\mathcal{F}(\Delta)$
\DisplayProof
\end{center}

It is not hard to verify that proof $\pi^\prime$ is constructible from~$\pi$ in polynomial time.
That completes the proof of Theorem~\ref{thm:tree-lndt-equiv-ldt}.
\end{proof}

\subsection{Equivalence of \texorpdfstring{$\LNDT$}{LNDT} and \texorpdfstring{$\twoLK$}{2-LK}}
\label{sec:2LK_LNDT}

A Boolean formula is \emph{depth two} if it is depth one, or
if it is a conjunction of clauses or a disjunction of terms.
$\twoLK$ is the fragment of $\LK$ in which all formulas
appearing in sequents are depth two formulas. $\TreetwoLK$ is the
same system with the restriction that proofs are tree-like.

\begin{theorem}\label{thm:2LK_LNDT}
$\LNDT$ and $\twoLK$ are polynomially equivalent.
$\TreeLNDT$ and $\TreetwoLK$ are polynomially equivalent.
\end{theorem}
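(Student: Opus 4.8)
The plan is to establish the two simulation directions separately, using the DNF representation $\bigvee\Tms(A)$ of an $\NDT$ formula as the bridge to depth-two Boolean formulas. I would treat the tree-like and dag-like cases uniformly: each individual inference will be simulated by a bounded gadget that uses each translated premise \emph{exactly once}, so that tree-likeness is preserved automatically and the blow-up is polynomial.

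For ``$\twoLK$ simulates $\LNDT$'' I would translate every $\NDT$ formula $A$, wherever it occurs, to the single depth-two formula $\bigvee\Tms(A)$ (a disjunction of terms, hence a legal depth-two formula on either side of a sequent). Axioms, contractions and weakenings translate trivially. Since $\Tms(B\lor C)=\Tms(B)\cup\Tms(C)$, the $\lorLeft$ and $\lorRight$ inferences become, after cheaply reassociating the top disjunction, the corresponding Boolean $\lor$-inferences. The \emph{dec-l} and \emph{dec-r} inferences are simulated using the sequents of Proposition~\ref{prop:sigma-pi-props}(c) — which, although stated for $\DT$ formulas, extend verbatim to $\NDT$ subformulas since they only use the identity $\Tms(ApB)=\{\overline p\land D:D\in\Tms(A)\}\cup\{p\land D:D\in\Tms(B)\}$ — together with one atomic cut on $p$. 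The key point is that a \emph{cut} on an $\NDT$ formula $A$ becomes a cut on $\bigvee\Tms(A)$, which is depth two and therefore permitted in $\twoLK$. Each gadget consumes its premises once, giving a polynomial, tree-like-preserving simulation.

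The harder direction is ``$\LNDT$ simulates $\twoLK$'', and here the central obstacle is genuine: $\NDT$ formulas compute exactly the polynomial-size DNFs (every normal form $\bigvee\DTms(A)$ is a DNF, and conversely), so a \emph{CNF} occurring in a succedent cannot be represented by any single small $\NDT$ formula. I would resolve this with a polarity-sensitive, De Morgan translation $\dagger$ of depth-two sequents into $\NDT$ sequents: a term becomes $\Conj$, a clause becomes $\Disj$, a DNF $\bigvee_i T_i$ becomes the single normal-form $\NDT$ formula $\bigvee_i\Conj(\vec{p}_i)$; an \emph{antecedent} CNF is split into its clauses (sound, since the antecedent is conjunctive); and a \emph{succedent} CNF $\bigwedge_j C_j$ is moved to the antecedent as the $\NDT$ formula $\bigvee_j\overline{C_j}$, i.e.\ its De Morgan negation, again a DNF. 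On the endsequents relevant to the equivalence — the $\NDT$-translations, which contain only DNFs — $\dagger$ agrees with the intended translation, while seamlessly absorbing the CNFs that arise at intermediate lines. I would then induct on the $\twoLK$ proof, verifying that each rule yields a short $\LNDT$ derivation of $(\text{conclusion})^\dagger$ from the $(\text{premise})^\dagger$'s.

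The duality makes most cases immediate: building a succedent CNF by $\landRight$ turns, under $\dagger$, into an $\lorLeft$ on the corresponding antecedent $\NDT$ formula (using Proposition~\ref{prop:LDTidentity}(c), $A,\overline A\sequent$, to bridge clause conjuncts to their negated terms), and dually $\lorRight$, $\landLeft$ and $\lorLeft$ reduce either to splitting or to a single $\LNDT$ $\lor$-inference. I expect \emph{cut} to be the main technical point: when the cut formula is a CNF it appears on one side as the single antecedent $\NDT$ formula $\overline A=\bigvee_j\overline{C_j}$ and on the other as its split clauses $C_1,\dots,C_m$; using the valid sequents $\sequent\overline A,C_j$ one cuts the $C_j$ away one at a time (the ``clause'' premise being used exactly once), accumulates and contracts the resulting copies of $\overline A$, and finishes with a single cut against the other premise — so each $\twoLK$ premise is used once and tree-likeness survives. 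The dual DNF case of cut is easier (recombine the split terms into the single $\NDT$ formula by $\lorRight$, then cut). Since every inference is simulated by a polynomial gadget reusing each premise once, this is a polynomial simulation that specialises to a polynomial simulation of $\TreetwoLK$ by $\TreeLNDT$, completing the equivalence.
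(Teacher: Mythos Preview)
Your proposal is correct and follows essentially the same strategy as the paper: translate $\NDT$ formulas to their DNF $\bigvee\Tms(A)$ for the $\twoLK$-simulates-$\LNDT$ direction, and handle CNFs by De Morgan duality (negating them to the other side) for the converse. The paper's sketch differs only in cosmetic ways—it first invokes the normal-form theorem before translating in the first direction, and in the second direction it uniformly moves \emph{all} CNFs to the opposite side rather than splitting antecedent CNFs into their clauses—but your more explicit, inference-by-inference gadget analysis (especially the careful treatment of cut on a CNF) is a faithful elaboration of the same idea, and your observation that each gadget consumes its premises once is exactly what is needed to get the tree-like case simultaneously.
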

The equivalence between $\LNDT$ and $\twoLK$ is even
stronger than is required by Definition~\ref{def:simulations}.
In fact, \emph{any} $\LNDT$ proof can be faithfully
translated into a $\twoLK$ proof.  For the converse,
we sketch below how any $\twoLK$ proof in which the final sequent is
contains only disjunctions of conjunctions
can be faithfully translated to a $\LNDT$ proof. This means essentially
that \emph{any} $\twoLK$ proof can be faithfully translated to a $\LNDT$
proof, since any conjunctions of disjunctions can be moved
to the other side of the sequent where they become
disjunctions of conjunctions.

\begin{proof}(Sketch)
Suppose $\pi$ is a $\LNDT$ proof. By Theorem~\ref{thm:LNDT_NF_wlog},
every formula in~$\pi$ may be assumed to be a normal form $\NDT$ formula. To convert
$\pi$ to a $\twoLK$ proof~$\pi^\prime$, we first replace every formula $\bigvee A_i$
in~$\pi$ with the depth two Boolean formula
$\bigvee \Tms(A_i)$. Axioms and contraction, weakening,
cut and $\lor$ inferences in~$\pi$ remain valid inferences
in~$\pi^\prime$. Decision rules \emph{dec-l} and \emph{dec-r}
in~$\pi$ are easily fixed to be valid derivation in $\pi^\prime$
using axioms $\overline p,p \sequent$ and $\sequent \overline p, p$,
cuts on~$p$, and $\land$ and $\lor$ inferences.  The resulting
$\twoLK$ proof~$\pi^\prime$ has size linearly bounded by the size
of~$\pi$. In addition, if $\pi$ is tree-like, then so is~$\pi^\prime$.

Conversely, suppose $\pi$ is a $\twoLK$ proof, and that every formula
in the conclusion of~$\pi$ is a disjunction of conjunctions of literals.
We may assume w.l.o.g.\ that every
formula in~$\pi$ is a disjunction of conjunctions of literals, since
any conjunction of disjunctions can be negated and moved to the other side of
the cedent as a disjunction of conjunctions.
We thus can transform $\pi$ into $\pi^\prime$
by replacing every formula $\bigvee A_i$ in~$\pi$,
where the $A_i$'s are conjunctions of literals,
with the $\NDT$ formula $\bigvee \Conj(A_i)$.
The axioms and the contraction, weakening, cut and $\lor$
inferences in~$\pi$ remain valid after this transformation.
The $\land$ rules in~$\pi$ can be fixed to be valid derivations
in~$\pi$ using the derivations of Proposition~\ref{prop:ConjDisj}(a,c,e)
and cuts on formulas $\Conj(\vec p)$ and $\Conj(\vec q)$ for
$\vec p$ and~$\vec q$ vectors of literals.
\end{proof}

\section{Proof systems for branching programs}\label{sec:eLDT_eLNDT}

\subsection{Formulas and proofs with extension variables}

We now describe the propositional proof systems $\eLDT$ and $\eLNDT$ which
reason about deterministic and nondeterministic branching programs.\footnote{%
These systems could equally well be called LBP and LNBP, using ``BP'' for
``branching programs'', but the
notations $\eLDT$ and $\eLNDT$ indicate that branching programs are
represented with decision trees incorporating extension variables.}
Formulas can now include extension variables, which will be denoted
by the letter~$e$, or with a subscript as $e_1$, $e_2$, etc..
It is important that the extension variables~$e$ are new variables
that are distinct from the variables underlying literals~$p$.

The purpose of extension variables is to serve as abbreviations for
more complex formulas. Thus, proofs that use extension variables will
be accompanied by a set of extension axioms $\{e_i\liff A_i\}_{i<n}$,
where each formula~$A_i$ may use any literals~$p$ but is restricted
to use only the extension variables $e_j$ for $j<i$. The intent is
that $e_i$ is an abbreviation for the formula~$A_i$.

\begin{definition}\label{def:eDTformula}
    The \emph{extended decision tree} formulas, or $\eDT$ formulas for short,
    are inductively defined
    \begin{integerenumerate}
        \item Any literal $p$ is an $\eDT$ formula.
        \item Any extension variable $e$ is an $\eDT$ formula.
        \item If $A$ and $B$ are $\eDT$ formulas and $p$ is a literal, then
            $(A p B)$ is a $\DT$ formula.
    \end{integerenumerate}
\end{definition}
In particular, a decision
            literal~$p$ in a formula $ApB$ is \emph{not} allowed to be an extension variable.
The intuition is that the extension variables may `name' nodes in a branching program.
\begin{definition}\label{def:eNDTformula}
The \emph{extended nondeterministic decision tree} formulas,
or $\eNDT$ formulas for short,
are inductively defined by the closure conditions (1)-(3) above
(with ``$\eDT$'' replaced with ``$\eNDT$'') and:
\begin{integerenumerate}[start=4]
    \item If $A$ and $B$ are $\eNDT$ formulas,
        then $(A\lor B)$ is an $\eNDT$ formula.
\end{integerenumerate}
\end{definition}

\begin{definition}
    The \emph{extended Boolean formulas} are defined inductively by
    \begin{integerenumerate}
        \item Any literal $p$ is a extended Boolean formula.
        \item Any extension variable~$e$
            is an extended Boolean formula.
        \item If $A$ and $B$ are extended Boolean formulas, then
            so are $(A\lor B)$ and $(A \land B)$.
    \end{integerenumerate}
\end{definition}

The notation $\{e_i \liff A_i\}_{i<n}$ is used to indicate that
$e_0,\dots, e_{n-1}$ are extension variables and that the only
extension variables allowed to appear in~$A_i$ are
$e_0,\dots,e_{i-1}$.  The sequents
\[
e_i \sequent A_i \qquad\hbox{and}\qquad A_i\sequent e_i
\]
are called the \emph{extension axioms}.

The $\eDT$, $\eNDT$ and $\eLK$ formulas have truth semantics
only relative to a set of extension axioms $\{e_i \liff A_i\}_{i<n}$.
Namely, for $\alpha$ a truth assignment, the definition of truth
is extended by setting $\alpha(e_i) = \alpha(A_i)$.

\begin{definition}
An $\eLDT$ proof is a pair $(\pi, \{e_i \liff A_i\}_{i<n})$
where each $A_i$ is an $\eDT$ formula, all formulas in~$\pi$ are
$\eDT$ formulas, and the permitted initial sequents and rules of
$\DT$ plus the
extension axioms of $\{e_i \liff A_i\}_{i<n}$ are allowed
as initial sequents in~$\pi$.

The $\eLNDT$ proofs are defined similarly, but with $\eLNDT$ formulas~$A_i$
and using the $\eLNDT$ inference rules.  Similarly, $\eLK$ proofs are defined
by letting the $A_i$ be $\eLK$ formulas and using the $\LK$ inference rules.
\end{definition}

Clearly the $\eLK$ proof system is equivalent to
the usual extended Frege proof system: in conjunction
with a set of extension axioms, an extended Boolean
formula represents a Boolean circuit over the de Morgan
connectives $\land, \lor, \lnot$.

Note that all formulas in an $\eLDT$, $\eLNDT$ or $\eLK$ proof are based on the
a single set of extension axioms $\{e_i \liff A_i\}_{i<n}$.

\smallskip

Let us discuss how the extended formulas we have introduced may be used to represent bona fide branching programs.
A (deterministic) branching program is a directed acyclic graph~$G$ such that (a)~$G$ has a
unique source node, (b)~sink nodes in~$G$ are labelled
with either $0$ or $1$, (c)~all other nodes are labelled
with a literal~$p$ and have two outgoing edges,
one labelled $0$ and the other $1$. A deterministic
branching program~$G$ can be converted into an equivalent
$\eDT$ formula
with associated extension
axioms $\{e_i \liff A_i\}_{i < n}$ by introducing
an extension variable~$e_i$ for every 
internal 
node in the
branching program. 
Conversely,
as is described in more detail below, any
$\eDT$ formula~$A$ with extension axioms
$\{e_i \liff A_i\}_{i < n}$ can be straightforwardly
transformed into a linear size deterministic branching program.
For this, the nodes in the branching program correspond to
the extension variables $e_i$ and the subformulas of the
formulas~$A_i$.

Nondeterministic branching programs are defined similarly
to deterministic branching programs, but further allowing the
internal nodes of~$G$ to be labelled with ``$\lor$'' as well as literals (in this case the labelling of its outgoing edges is omitted). The semantics is that an $\lor$-node
is accepting provided at least one of its children
is accepting. It is straightforward to convert
a nondeterministic branching program into an
$\eLNDT$ formula with associated extension axioms,
and vice versa.

A similar construction yields the well-known fact
that extended Boolean formulas are as expressive
as Boolean circuits.

\begin{example}
Consider the following branching program, which returns $1$ just if at least two out of the four input variables $w,x,y,z$ are $1$.
\begin{equation*}
    \begin{tikzpicture}
 \node (0) at (0,0) {$w$};
 \node (1-0) at (-1,1) {$x$};
 \node (1-1) at (1,1) {$x$};
 \node (2-0) at (-2,2) {$y$};
 \node (2-1) at (0,2) {$y$};
 \node[rectangle,draw] (2-2) at (2,2) {$1$};
 \node[rectangle,draw] (3-0) at (-3,3) {$0$};
 \node (3-1) at (-1,3) {$z$};
 \node[rectangle,draw] (3-2) at (1,3) {$1$};
 \node[rectangle,draw] (4-0) at (-2,4) {$0$};
 \node[rectangle,draw] (4-1) at (0,4) {$1$};
 \draw[->,dotted] (0) to (1-0);
 \draw[->] (0) to (1-1);
 \draw[->,dotted] (1-0) to (2-0);
 \draw[->] (1-0) to (2-1);
 \draw[->,dotted] (1-1) to (2-1);
 \draw[->] (1-1) to (2-2);
 \draw[->,dotted] (2-0) to (3-0);
 \draw[->] (2-0) to (3-1);
 \draw[->,dotted] (2-1) to (3-1);
 \draw[->] (2-1) to (3-2);
 \draw[->,dotted] (3-1) to (4-0);
 \draw[->] (3-1) to (4-1);
\end{tikzpicture}\label{fig:bp-to-edt}
\end{equation*}
Edges labelled with $0$ are here dotted (and always left outgoing) while edges labelled $1$ are here solid (and always right outgoing).
In this particular case, the branching program is \emph{ordered} (or an \emph{OBDD}), i.e.\ variables occur in the same order on each branch. The program also happens to compute a monotone Boolean function.

To express the branching program above in $\eLDT$, we introduce extension variables for each inner node of the program as follows. Write $e_{ij}$ for the $j$th node of the $i$th layer, where $i,j$ ranging from $0$ onwards, and introduce the following extension axioms:\footnote{Formally, we are writing $0$ and $1$ as shorthand for $ p p \bar p$ and $\bar p p p $ respectively, for some/any literal $p$.}
\[
\begin{array}{rcl}
    e_{10} & \liff & e_{20}xe_{21} \\
    e_{11} & \liff & e_{21}x1 \\
    e_{20} & \liff & 0ye_{31} \\
    e_{21} & \liff & e_{31}y1 \\
    e_{31} & \liff & 0z1
\end{array}
\]
Now the branching program is represented as the eDT formula $e_{10}we_{11}$.
Notice that the orderedness of the branching program is reflected in its $\eLDT$ representation:
writing $(x_0,x_1,x_2,x_3) $ for $(w,x,y,z)$, we have that $x_i$ is the root of the formula that any $e_{ij}$ abbreviates.

Other representations of this branching program are possible,
for instance by renaming the extension variables or by partially unwinding the graph.
In both these two latter cases, the $\eDT$ representation obtained
will be provably equivalent to the one above, by polynomial-size proofs
in $\eLDT$, by virtue of Lemma~\ref{lem:extNamesEquiv} later.
\end{example}

\subsection{Foundational issues}

The fact that extension variables cannot be used as
decision literals is a significant limitation on the
expressiveness of $\DT$ formulas. Recall for instance that
the conjunction of $p_1$ and~$p_2$ can be expressed
with the $\DT$ formula $\Conj(p_1, p_2)$, namely
$(p_1 p_1 p_2)$. However, it is not permitted
to form $(e_1 e_1 e_2)$; in fact, it is not possible
to express the conjunction $e_1 \land e_2$ without
taking the extension axioms defining $e_1$ and $e_2$ into
account.  In fact, if we could write the conjunction of $e_1$
and~$e_2$ by a generic formula $A(e_1, e_1)$, then we could
introduce a new extension variable representing
$A(e_1, e_2)$.  This would imply that $\eDT$ formulas are as expressive
as extended Boolean formulas; in other words, that
deterministic branching programs would be as expressive as
Boolean circuits.  This is a non-uniform analogue of
$\logspace = \Ptime$ (i.e., log-space equals polynomial time), and of course is an
open question.

Nonetheless, for any given extension variables $e$ and~$e^\prime$,
there is a formula $\AndDT(e,e^\prime)$ expressing the conjunction
of $e$ and~$e^\prime$ by changing the underlying set of extension axioms.  The intuition is that we start with the
branching program~$G$ for~$e$, but now with sink nodes
labelled with $0$ or~$1$ instead of with variables. To form the branching
program for $e\land e^\prime$, we take (an isomorphic copy) of the
branching program~$G^\prime$ for~$e^\prime$, and modify
$G$ by replacing each sink node labelled
with~$1$ with the source node of~$G^\prime$ (in other words,
each edge directed into a sink~``$1$'' is modified to instead
point to the root of~$G^\prime$).

More formally, suppose $A$ and $B$ are $\eDT$ formulas defined
over a set of extension axioms $\{e_i \liff A_i\}_{i < n}$; we
wish to construct an $\eDT$ formula $\AndDT(A,B)$.  (Exactly the
same construction forms an $\eNDT$ formula $\AndNDT(A,B)$ from
$\eNDT$ formulas $A$ and~$B$.) We would wish to define $C[1/B]$
to be the result of replacing every ``$1$'' in~$C$ with~$B$, but
of course, ``$1$'' is not a permitted atom. Instead, we note
that every atomic formula~$p$ in~$C$ is equivalent to
$(p p p)$ and to $(p p 1)$. Likewise, each atomic formula~$p$
is equivalent to $(0 p p )$.

\begin{definition}
Let $C$ be an $\eDT$ or $\eNDT$
formula. $C[0/B]$
is the formula obtained by
replacing (in parallel) each occurrence of a literal~$p$ as a leaf in~$C$
with the formula $(B \, p \, p)$.
Similarly,
$C[1/B]$ is the formula obtained by replacing
each occurrence of a literal~$p$ as a leaf in~$C$
with the formula $(p \, p\, B)$.
\end{definition}
The point of $C[0/B]$ is that $(B \, p \, p)$
evaluates to 1 if $p$~is true, and to~$B$  otherwise.
Thus, the intent is that $C[0/B]$ is equivalent $C\lor B$.
Likewise, we want $C[1/B]$ to be equivalent $C\land B$.
However, these equivalences hold only if the substitutions are
applied not just in~$C$ but instead
throughout the definitions of the extension axioms used in~$C$.
This is done with the following definition.

\begin{definition}\label{def:andOrDT}
Let $\cal A$ be a set of extension axioms $\{e_i \liff A_i\}_{i < n}$.
Another set of extension axioms $\mathcal{A}[1/B]$ is defined as follows.
First, let $\{e_i^\prime\}_i$ be
a set of \emph{new} extension variables.
Define $A_i[\veceprime/\vec e]$ to be the result of
replacing each $e_j$ in~$A_i$ with $e_j^\prime$. Let $A_i^\prime$
be $(A_i[\veceprime/\vec e])[1/B]$. Then $\mathcal{A}[1/B]$ is
the set of extension axioms
$\{e^\prime_i \liff A^\prime_i\}_{i < n} \cup \mathcal{A}$.
The set $\mathcal{A}[0/B]$ is defined similarly: letting
$\vecepprime$ be another set of new extension variables,
defining $A_i^\pprime$ to be $(A_i[\vecepprime/\vec e])[0/B]$, and
letting $\mathcal{A}[0/B]$ be the set of extension axioms
$\{e^\pprime_i \liff A^\pprime_i\}_{i < n} \cup \mathcal{A}$.

Finally, if $A$ and $B$ are $\eDT$ or $\eNDT$ formulas
defined using extension axioms~$\cal A$, then
$\AndDT(A,B)$ is by definition $A[1/B]$ relative to the extension
axioms $\mathcal{A}[1/B]$.  The formula $\OrDT(A,B)$ for disjunction is defined similarly,
namely, it is equal to $A[0/B]$ relative to the extension
axioms $\mathcal{A}[0/B]$.
\end{definition}

Note the two formulas $\AndDT(A,B)$ and $\OrDT(A,B)$ introduced
\emph{different} sets of new extension variables. This allows us
to use both $\AndDT(A,B)$ and $\OrDT(A,B)$ without any clashes
between extension variables.
More generally, we will adopt the
convention that the new extension variables are uniquely determined by the
formula being constructed. In other words, for instance, $e_i^\prime$ could have instead been
designated $e_{i,(A\land B)}$. When measuring proof size, we also need to count the sizes
of the subscripts on the extension variables. This clearly however only increases proof size
polynomially.

There are two other sources of growth of size in forming $\AndDT(A,B)$ and $\OrDT(A,B)$.
The first is that formula sizes increase since copies of $B$ is substituted
in at many places in $A$ and $\cal A$: this potentially gives a quadratic blowup in
proof size. We avoid this quadratic blowup in proof size, by always taking $B$ to
be a single variable (namely, an extension variable). The construction
of $\AndDT(A,B)$ or $\OrDT(A,B)$
also introduces many new extension variables, namely it potentially doubles the number of
variables. To control this, we will ensure that the constructions of $\AndDT(\cdot, \cdot)$
and $\OrDT(\cdot, \cdot)$ are nested only logarithmically.

\begin{example}
Consider the formula $\AndDT(p_1, \AndDT(p_2,p_3))$, which
is a translation of the Boolean formula $p_1 \land( p_2 \land p_3)$ to
a $\DT$ formula.  To form $\AndDT(p_2,p_3)$, start with $(p_2 p_2 1)$
and substitute $p_3$ for ``$1$'', to obtain $(p_2 p_2 p_3)$. Then
$\AndDT( p_1, \AndDT(p_2,p_3) )$ is obtained by forming
$(p_1 p_1 1)$ and replacing ``$1$'' with $\AndDT(p_2,p_3)$ to obtain
$(p_1 p_1 ( p_2 p_2 p_3)$. It is also the same as $\Conj(p_1,p_2,p_3)$.  A similar construction shows
that $\OrDT(p_1, \OrDT(p_2, p_3)$ is equal to $((p_3 p_2 p_2)p_1 p_1)$.
This is a translation of the Boolean formula $p_1 \lor (p_2\lor p_3)$
to a $\DT$ formula, and is equal to $\Disj(p_1, p_2, p_3)$.
\end{example}

\begin{example}
	\label{example:bp-to-edt-formula-th42}
Let $A$ be the formula $(p_1 p_2 ( e_1 p_3 e_2))$ and
$B$~be the formula $(q_1 q_2 e_2)$ in the context of
the extension axioms~$\cal A$
\begin{equation}\label{eq:calAexample}
  e_1 \liff (r_1 \overline{r_2} e_2) \qquad\qquad
  e_2 \liff (\overline{s_1} s_2 s_3),
\end{equation}
where $p_i,q_i,r_i,s_i$ are literals. The formula $A[0/B]$ is formed
as follows.  First $\mathcal{A}(\veceprime/\vec e)$ equals
\[
  e_1^\prime \liff (r_1 \overline{r_2} e_2^\prime) \qquad\qquad
  e_2^\prime \liff (\overline{s_1} s_2 s_3)
\]
Then $\mathcal{A}[0/B]$ contains the extension axioms
of $\cal A$ as shown in (\ref{eq:calAexample}) plus the
extension axioms
\[
  e_1^\prime \liff ((B r_1 r_1) \overline{r_2} e_2^\prime) \qquad\qquad
  e_2^\prime \liff ((B\, \overline{s_1}\, \overline{s_1}) s_2 (B s_3 s_3)).
\]
Finally, $A[0/B]$ is the $\DT$ formula
$((B p_1 p_1) p_2 (e_1^\prime p e_2^\prime))$,
namely,
\[
(((q_1 q_2 e_2) p_1 p_1) p_2 (e_1^\prime p e_2^\prime)),
\]
relative to the four extension axioms in $\mathcal{A}[0/B]$.
\end{example}

\subsection{Truth conditions and renaming of extension variables}
We show that, despite the delicate renaming of variables required for notions such as $A[0/B]$ and $\AndDT(A,B)$, for $\DT$ (respectively $\NDT$) formulas $A,B$, we may nonetheless realise their basic truth conditions by small $\eLDT$ (respectively $\eLNDT$) proofs:

\begin{lemma}\label{lem:eDTAndOr}
Let $A$ and $B$ be $\eDT$ formulas (respectively, $\eNDT$ formulas)
relative to extensions axioms~$\cal A$. Then, the sequents {\rm (a)-(c)} below
have polynomial size, cut free $\eLDT$ proofs (respectively, $\eLNDT$ proofs)
relative to the extension axioms $\mathcal{A}[0/B]$. The same holds for
the sequents {\rm (d)-(f)} relative to $\mathcal{A}[1/B]$.
\begin{multicols}{2}
\begin{alphenumerate}[nolistsep]
\item $B \sequent A[0/B]$
\item $A \sequent A[0/B]$
\item $A[0/B] \sequent A, B$
\item $A[1/B] \sequent B$
\item $A[1/B] \sequent A$
\item $A, B \sequent A[1/B]$
\end{alphenumerate}
\end{multicols}
\end{lemma}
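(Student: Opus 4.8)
The plan is to prove all six sequents simultaneously, generalising $A$ to range over an arbitrary $\eDT$ (respectively $\eNDT$) subformula, and to exploit the evident duality between the two triples: the argument for (a)--(c) concerning $[0/B]$ and $\lor$ transposes verbatim, swapping the roles of $0$/$1$, $\lor$/$\land$ and left/right, into an argument for (d)--(f) concerning $[1/B]$ and $\land$, so I would write out only the former. I would run the induction on the well-founded measure that orders formulas $C$ lexicographically by $(\mathrm{idx}(C), |C|)$, where $\mathrm{idx}(C)$ is the largest index $j$ of an extension variable $e_j$ occurring in $C$ (and $-1$ if none occurs). This measure strictly decreases both on passing to an immediate subformula and on passing from a leaf $e_j$ to the body $A_j$ of its defining axiom, since $A_j$ mentions only $e_{j'}$ with $j'<j$; the latter is exactly what lets the induction reach inside the extension axioms of Definition~\ref{def:andOrDT}.

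First I would dispatch the base case where $C$ is a literal $p$, so that $C[0/B] = (B\,p\,p)$. Each of $B \sequent (B\,p\,p)$, $p \sequent (B\,p\,p)$ and $(B\,p\,p) \sequent p, B$ has a short cut-free derivation consisting of a single \emph{dec-r} (respectively \emph{dec-l}) inference above the axioms $p \sequent p$, $\sequent p,\overline p$ together with weakenings. Next, the genuinely structural cases are handled purely by the decision and $\lor$ rules together with the induction hypothesis, using that the substitution commutes with the connectives, i.e.\ $(C_1\,q\,C_2)[0/B] = (C_1[0/B]\,q\,C_2[0/B])$ and $(C_1 \lor C_2)[0/B] = C_1[0/B] \lor C_2[0/B]$. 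For the case connective, to prove (b) I would introduce the succedent occurrence by \emph{dec-r} and then each antecedent occurrence by \emph{dec-l}, reducing to the induction hypotheses $C_i \sequent C_i[0/B]$ plus the axiom $q \sequent q$, all under weakenings; (a), (c) and the $\lor$ cases (relevant only for $\eNDT$) are analogous, using $\lorLeft$ and $\lorRight$. The key point is that all of these steps are cut-free, so the only place where difficulty can concentrate is the extension-variable leaf.

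That leaf is the main obstacle. When $C = e_j$ we have $C[0/B] = e_j^{\prime\prime}$, the renamed copy supplied by $\mathcal{A}[0/B]$, and one must connect $e_j$ to $e_j^{\prime\prime}$. Here the induction hypothesis for the body delivers the three sequents for $A_j$ — for instance $A_j \sequent A_j^{\prime\prime}$ for (b), where $A_j^{\prime\prime}$ is $A_j[0/B]$ — and one combines these with the four extension axioms $e_j \sequent A_j$, $A_j \sequent e_j$, $e_j^{\prime\prime} \sequent A_j^{\prime\prime}$, $A_j^{\prime\prime} \sequent e_j^{\prime\prime}$ to pass between the named nodes and their definitions. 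The delicate work is (i) to arrange these transitions through $A_j$ and $A_j^{\prime\prime}$ so that the derivation stays within the intended inferences and respects the cut-free discipline of the statement, and (ii) to keep everything polynomial. For (ii) I would rely on the paper's standing convention that $B$ is a single (extension) variable, so the leaf substitution $p \mapsto (B\,p\,p)$ enlarges each formula by only a constant factor, together with the observation that each induction step contributes only $O(|C|)$ lines; summing over the subformulas of $A$ and over the bodies $A_j$ of the extension axioms then gives the claimed polynomial bound. The $\eNDT$ case requires no new ideas beyond the $\lor$ rule already treated above.
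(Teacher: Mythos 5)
Your proposal follows essentially the same route as the paper's (very terse) proof: a simultaneous induction over the subformulas of $A$ and of the bodies $A_i$ of the extension axioms, with base case $(B\,p\,p)$, structural cases handled directly by the decision and $\lor$ rules, and duality disposing of (d)--(f); your lexicographic measure merely makes precise the well-foundedness that the paper leaves implicit.

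The one step that cannot be carried out as you describe it is exactly the one you flagged as delicate: there is no way to ``respect the cut-free discipline'' at an extension-variable leaf. To relate $e_j$ to its renamed copy $e_j''$ (say, to prove $e_j \sequent e_j''$ for part (b)) you must chain $e_j \sequent A_j$, the induction hypothesis $A_j \sequent A_j''$, and $A_j'' \sequent e_j''$, and those chainings are cuts on $A_j$ and $A_j''$. This is unavoidable, not a failure of ingenuity: $e_j$ and $e_j''$ are atoms that are forbidden from serving as decision literals, so in a cut-free derivation every compound (or extension-variable) formula occurring in the antecedent of an initial sequent must survive as a subformula of some antecedent formula of the endsequent, and dually for succedents; since the only initial sequents mentioning $e_j$ or $e_j''$ are the extension axioms and identity, no cut-free derivation of the atomic sequent $e_j \sequent e_j''$ exists once $A_j$ is present. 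So you should simply perform those two cuts; this is what the paper's sketch does implicitly (its phrase ``the inductive cases are trivial'' silently covers the extension-variable leaves), and it means the lemma's ``cut free'' has to be read modulo cuts on the extension axioms --- the genuinely cut-free part of the argument is confined to the literal leaves and the structural cases, exactly as you worked them out. Two smaller remarks: your appeal to the convention $L(B)=1$ is not needed for this lemma, since even for compound $B$ the substitution only causes a quadratic blowup, which is still polynomial; and your size accounting (one derivation per subformula and per body $A_j$, each reused at every occurrence of $e_j$) yields a dag-like proof, which is consistent with --- and necessary for --- the paper's remark immediately after Lemma~\ref{lem:eDTAndOr} that these proofs seem to be inherently dag-like.
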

\begin{proof}
[Proof sketch]
    Parts (a)-(c) are  proved by showing inductively that if $C$ is a
    subformula of $A[0/B]$ or a subformula of any $A_i^\prime$ in~$\mathcal{A}[0/B]$,
    then $C \sequent A, B$ and $B \sequent C$ and $A \sequent C$ have
    short $\eLDT$ (resp., $\eLNDT$) proofs. The base cases are just the
    cases where $C$ is is the form $(B \, p \, p)$. The inductive cases
    are trivial.
    A similar argument proves cases (d)-(f).
\end{proof}

The proofs of Lemma~\ref{lem:eDTAndOr} seem to be inherently dag-like, and we
do not know if the lemma holds for $\TreeELDT$.

As discussed above, we assume that the choice of
new extension variables $\veceprime$ or $\vecepprime$ depends
explicitly on what formula $\AndDT(A,B)$ and $\OrDT(A,B)$ is being formed.
In other words, each $e^\prime_i$ or $e_i^\pprime$
is a variable $e_{i,\AndDT(A,B)}$ or
$e_{i,\OrDT(A,B)}$.
In the proof of Theorem~\ref{thm:ELDT_LK}, this means
that the translations of distinct occurrences of the same
Boolean formula use the same extension variables.
However, this is not strictly
necessary, as $\eLDT$ can prove the equivalence of formulas after a change
in extension variables:
\begin{lemma}\label{lem:extNamesEquiv}
Suppose $A$ is a $\DT$ formula w.r.t.\ extension axioms
$\mathcal{A} = \{ e_i \liff A_i \}_i$, and that the extension
variables $\vec f$ are distinct from the extension variables~$\vec e$.
Let $B$ equal $A[\vec f/ \vec e]$ w.r.t.\ the extension axioms
$\mathcal{B} = \{ f_i \liff A_i[\vec{f} / \vec{e}]\}_i$. Then
$\eLDT$ has a polynomial size, cut free (dag-like) proofs
of $A\sequent B$ and $B \sequent A$ relative to the extension
axioms $\mathcal{A} \cup \mathcal{B}$.
\end{lemma}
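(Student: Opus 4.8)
The plan is to reduce the statement to the atomic equivalences $e_i \sequent f_i$ and $f_i \sequent e_i$, established by induction on~$i$, and then to lift these to arbitrary formulas by a structural \emph{congruence} principle. Concretely, I would first prove the following sublemma by structural induction on an $\eDT$ formula~$C$ built over $e_0,\dots,e_{i-1}$: given short proofs of the leaf equivalences $e_j \sequent f_j$ and $f_j \sequent e_j$ for all $j<i$, the sequents $C \sequent C[\vec f/\vec e]$ and $C[\vec f/\vec e]\sequent C$ have short proofs. For the literal base case $C=p$ this is the axiom $p\sequent p$ (literals are untouched by the substitution); for $C=e_j$ it is a leaf hypothesis; and for $C=(C_1\,p\,C_2)$ I would decompose both sides in parallel, applying \emph{dec-r} to the right formula and \emph{dec-l} to the left, exactly in the shape of the proof of Proposition~\ref{prop:LDTidentity}\ref{item:LDTidentity-A-A}, so that the four resulting leaves reduce either to $p\sequent p$ with weakenings or to the induction hypotheses on $C_1$ and $C_2$. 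This part is cut-free, crucially because the decision literal~$p$ is never an extension variable (Definition~\ref{def:eDTformula}) and hence is preserved by the renaming.

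With the congruence in hand, the induction on~$i$ is short. Assuming the equivalences for all $j<i$, I would obtain $e_i \sequent f_i$ by chaining the extension axiom $e_i \sequent A_i$, the congruence instance $A_i \sequent A_i[\vec f/\vec e]$ (applicable since $A_i$ uses only $e_0,\dots,e_{i-1}$), and the extension axiom $A_i[\vec f/\vec e]\sequent f_i$; the reverse $f_i\sequent e_i$ is dual. The lemma itself is then just the congruence applied to $C=A$, since $B=A[\vec f/\vec e]$. For the size bound and dag-likeness, I would arrange the construction so that the subproof of each $e_i\liff f_i$ is produced once and shared at every leaf where $e_i$ or $f_i$ occurs, with the final congruence adding only $O(|A|)$ decision inferences on top; as there are $n$ extension variables and each $A_i$ has polynomial size, the whole proof is polynomial and dag-like.

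The step I expect to be the real obstacle is precisely the atomic base case $e_i\sequent f_i$. Unlike the situation in Lemma~\ref{lem:eDTAndOr}, where the substituted formula literally contains the original as a subformula and the equivalences go through by pure decision rules, here $e_i$ and $f_i$ are distinct atoms with no containment relation, so the only route between them passes through their common defining shape $A_i$ versus $A_i[\vec f/\vec e]$. That chaining is naturally realised with cuts on the bodies $A_i$ and $A_i[\vec f/\vec e]$, and it is exactly this that forces the construction to be dag-like rather than tree-like. Reconciling it with a strictly cut-free presentation is the delicate point: I would confine all cuts to these extension-axiom bodies (which are subformulas of the given axioms, so the proof remains analytic relative to $\mathcal A\cup\mathcal B$), or alternatively carry the congruence and the unfolding simultaneously over all subformulas of the $A_i$, in the style of the proof sketch of Lemma~\ref{lem:eDTAndOr}, pushing the remaining cuts down to the atomic extension variables. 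My intended write-up is therefore to present the dag-like construction first and then treat the elimination of body-cuts as a separate, routine step.
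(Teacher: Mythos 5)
Your core construction is correct and is essentially the paper's own proof: the paper dismisses this lemma in one line, as ``a straightforward proof that proceeds inductively through all subformulas of the formulas $A_i$ and $A$'', and unfolding your congruence sublemma together with your induction on $i$ yields exactly that induction --- the decision-rule decomposition in the shape of Proposition~\ref{prop:LDTidentity}\ref{item:LDTidentity-A-A}, the chaining through $e_i \sequent A_i$, $A_i \sequent A_i[\vec f/\vec e\,]$, $A_i[\vec f/\vec e\,] \sequent f_i$ at extension variables, and dag-like sharing of the subproofs of $e_j \sequent f_j$.

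The one part of your plan that would fail is the step you defer to the end: eliminating the cuts on the bodies $A_i$ and $A_i[\vec f/\vec e\,]$ is not a routine step, it is impossible, and so is pushing those cuts down to cuts on atomic extension variables. In this calculus every rule except cut only builds formulas up (each formula in a premise of weakening, contraction, \emph{dec-l} or \emph{dec-r} is a subformula of some formula in the conclusion), so a cut-free --- or even atomic-cut-only --- derivation contains only subformulas of its end-sequent, possibly together with atoms. Hence a cut-free proof of the atomic sequent $e_i \sequent f_i$ could never contain an extension axiom $e_i \sequent A_i$ with non-atomic body $A_i$; but without the extension axioms the rules and logical axioms are sound for \emph{arbitrary} assignments to the atoms, and $e_i \sequent f_i$ is falsified by setting $e_i = 1$, $f_i = 0$. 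So the ``cut free'' qualifier in the statement of Lemma~\ref{lem:extNamesEquiv} is an overclaim of the paper itself (the same issue affects the wording of Lemma~\ref{lem:eDTAndOr}, whose sketch also implicitly cuts when unfolding extension variables); any correct proof must cut on the axiom bodies, exactly as your chaining does. Your fallback formulation --- cuts confined to subformulas of the given axioms, i.e.\ the proof is analytic relative to $\mathcal{A} \cup \mathcal{B}$ --- is what is actually achievable, and you should present the lemma that way rather than promising a subsequent cut-elimination.
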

Lemma~\ref{lem:extNamesEquiv} has a straightforward proof that proceeds inductively
through all subformulas of the formulas $A_i$ and $A$. \hfill $\Box$

\section{Simulations for \texorpdfstring{$\eLDT$}{eLDT}, \texorpdfstring{$\eLNDT$}{eLNDT}
    and \texorpdfstring{$\LK$}{LK}}\label{sec:eDTequivs}

\subsection{$\eLDT$ polynomially simulates $\LK$}\label{sec:ELDT_LK}

\begin{theorem}\label{thm:ELDT_LK}
$\eLDT$ polynomially simulates $\LK$. Hence, $\eLNDT$ also polynomially
simulates $\LK$.
\end{theorem}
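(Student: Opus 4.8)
The plan is to translate every Boolean formula $\varphi$ occurring in a given $\LK$ proof into an $\eDT$ formula $\varphi^*$ (over a common, growing set of extension axioms) equivalent to $\varphi$, and then translate the proof line by line. The translation respects the connective structure: a literal $p$ is sent to itself, a conjunction $A\land B$ to $\AndDT(A^*,B^*)$, and a disjunction $A\lor B$ to $\OrDT(A^*,B^*)$. To keep formula sizes down I would always take the second argument of $\AndDT$ and $\OrDT$ to be a single freshly named extension variable abbreviating $B^*$, as suggested in the discussion preceding Definition~\ref{def:andOrDT}. A sequent $\Gamma\sequent\Delta$ of $\LK$ is translated formula-wise to $\Gamma^*\sequent\Delta^*$, and the goal is to convert an $\LK$ proof of $\Gamma\sequent\Delta$ into a polynomial-size $\eLDT$ proof of $\Gamma^*\sequent\Delta^*$.

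The inferences are simulated as follows. The logical axioms $p\sequent p$, $p,\overline p\sequent$ and $\sequent p,\overline p$ of $\LK$ are already $\eLDT$ axioms, while contraction, weakening and cut translate directly, since the translation acts only on formulas. The four Boolean rules are handled using Lemma~\ref{lem:eDTAndOr}, recalling that $A^*[1/B^*]$ is $\AndDT(A^*,B^*)$ and $A^*[0/B^*]$ is $\OrDT(A^*,B^*)$. For an $\landRight$ inference I would cut the two translated premises $\Gamma^*\sequent\Delta^*,A^*$ and $\Gamma^*\sequent\Delta^*,B^*$ against part~(f), $A^*,B^*\sequent A^*[1/B^*]$; an $\landLeft$ inference is handled by cutting against parts (d) and~(e), which give $A^*[1/B^*]\sequent A^*$ and $A^*[1/B^*]\sequent B^*$. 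Dually, $\lorRight$ uses parts (a) and~(b), and $\lorLeft$ uses part~(c). Each step is a constant number of cuts and contractions against the polynomial-size (dag-like) derivations of Lemma~\ref{lem:eDTAndOr}, so every inference is simulated in polynomial size; the result is a dag-like $\eLDT$ proof. The final clause is then immediate: every $\eDT$ formula is an $\eNDT$ formula and every $\eLDT$ rule is an $\eLNDT$ rule, so the same proof is already an $\eLNDT$ proof, giving that $\eLNDT$ also simulates $\LK$.

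The main obstacle is the global size bound, specifically the proliferation of extension variables. Forming $\AndDT(A^*,B^*)=A^*[1/B^*]$ replaces the extension axioms $\mathcal A$ by $\mathcal A[1/B^*]$, which adjoins a primed copy of all of $\mathcal A$ and so roughly doubles the number of extension variables; respecting formula structure therefore nests these constructions to a depth equal to the $\{\land,\lor\}$-depth of $\varphi$, incurring a $2^{\Theta(d)}$ blow-up for a depth-$d$ formula — exponential for formulas that are deep rather than balanced. To avoid this I would ensure, as flagged before Definition~\ref{def:andOrDT}, that the $\AndDT$/$\OrDT$ constructions are nested only logarithmically, by rebalancing every formula to depth $O(\log n)$ (a Spira/Brent-style balancing) so that each translation nests to depth $O(\log n)$ and the extension-variable set, and hence the whole proof, stays polynomial. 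The delicate point is reconciling the rebalancing with the connective-introduction rules above: after balancing, $(A\land B)^*$ is no longer literally $\AndDT(A^*,B^*)$. I would resolve this either by first preprocessing the given $\LK$ proof into a polynomial-size proof using only logarithmic-depth formulas (using that such balancing is itself $\LK$-provable with polynomial-size proofs, and then translating that proof), or by proving the required equivalences between balanced and structure-respecting branching-program representations directly inside $\eLDT$, in the spirit of Lemma~\ref{lem:extNamesEquiv}. Establishing these equivalences with polynomial-size $\eLDT$ proofs, uniformly across the proof, is where the real work lies.
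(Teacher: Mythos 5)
Your proposal is correct and takes essentially the same route as the paper: a structure-respecting translation via $\AndDT$/$\OrDT$ with single-variable second arguments, inference-by-inference simulation using cuts against Lemma~\ref{lem:eDTAndOr}, and the size problem resolved exactly by your first option --- preprocessing the $\LK$ proof so that all formulas have height $O(\log|\pi|)$, which the paper gets by citing known results on rebalancing formulas in Frege/$\LK$ proofs with polynomial overhead. Note that once the rebalancing is done at the $\LK$ level, the ``reconciliation'' difficulty you flag disappears, since every inference in the preprocessed proof already operates on balanced formulas and the structure-respecting translation then nests $\AndDT$/$\OrDT$ only logarithmically; the only remaining (routine) step, which your write-up leaves implicit, is cutting the translated endsequent against short cut-free proofs relating $\DTx$ of the endsequent formulas to the $\Conj$/$\Disj$ forms required by Definition~\ref{def:simulations}.
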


The intuition behind this theorem is that the formulas in an $\LK$ proof
are Boolean formulas, and hence express $\NC^1$ properties, while $\DT$ proofs work
with $\DT$ formulas that express (nonuniform) logspace properties. Since
Boolean formula evaluation can be done in logspace, it is expected that
$\DT$ can directly simulate an $\LK$ proof.  This is indeed how the proof
goes, but it is complicated by the need to the $\AndDT$ and $\OrDT$
constructions.

\begin{proof}
Suppose $\pi$ is an $\LK$ proof of a sequent of Boolean formulas (possibly, but
not necessarily
of the form~(\ref{eq:fromLK})).
We wish to convert $\pi$ into
a $\eLDT$ proof. The main technique is to use the constructions
$\AndDT$ and $\OrDT$ of Definition~\ref{def:andOrDT} to convert
the Boolean formulas in~$\pi$ into $\DT$ formulas over extension axioms.
However, some care is
needed to ensure that the resulting $\DT$ formulas and extension axioms
are polynomial size.

For this, let $L(A)$ denote the \emph{leaf size} of the formula~$A$, 
namely the number of atomic subformulas of~$A$. The leaf size
$L(\mathcal{A})$ of a set of extension axioms is $\sum_i L(A_i)$.
A straightforward analysis shows that Definition~\ref{def:andOrDT}
constructs $\AndDT(A, B)$ to have leaf size $\le L(A) \cdot (L(B) + 1)$,
and $L(\mathcal{A}[1/B]$ to be $\le L(\mathcal{A}) \cdot (L(B) + 2)$. To avoid
too large formulas sizes, we will require that $L(B)=1$. When this
holds, we have $L(\AndDT(A,B)) \le 2 L(A)$ and
$L(\mathcal{A})[1/B] \le 3 L(\mathcal{A})$. The same size bounds hold
for $\OrDT(A, B)$ of course.

The \emph{height}
of a Boolean formula~$A$ is the height of the binary tree
corresponding to the formula~$A$.
Let's assume every formula in the endsequent of~$\pi$ has logarithmic
height. Then by
\cite{Reckhow:thesis,Buss:selfcons}, we
may assume w.l.o.g.\ that every formula in~$\pi$ has
height $O(\log |\pi|)$.\footnote{%
We can also assume without loss of generality that $\pi$ is
a tree-like proof. This, however, does not help form a
tree-like $\DT$ proof, since Lemma~\ref{lem:eDTAndOr} uses
dag-like proofs in an essential way.}
Each formula~$A$ in~$\pi$ is converted into
a $\DT$ formula $\DTx(A)$ with associated extension axioms $\mathcal{A}_A$
as defined next. The formula $\DTx(A)$ will always be either
a literal~$p$ or an extension variable $e_A$.
\begin{alphenumerate}
\item Suppose $A$ is a literal~$p$, then $\DTx(A)$ is just $p$, and
$\mathcal{A}_p$ is empty (no extension axioms).
\item If $A$ is $B\land C$, then let $\DTx(A)$ be the (new) extension
variable~$e_A$. Letting $\mathcal{A}^\prime$ be $\mathcal{A}_B \cup \mathcal{A}_C$,
set $\mathcal{A}_A$ equal to
$\mathcal{A}^\prime [1/C] \cup \{ e_A \liff \AndDT(\DTx(B),\DTx(C))\}$.
\item The case where $B\lor C$ is exactly the same, but with
$\mathcal{A}_A$ equal to
$\mathcal{A}^\prime [0/C] \cup \{ e_A \liff\penalty10000 \OrDT(\DTx(B),\DTx(C))\}$.
\end{alphenumerate}
Recall the convention that the new extension variables introduced in cases (b) and~(c)
depend uniquely on $A$. This implies that every occurrence of a given formula~$A$
in the proof~$\pi$ has the identical translation $\DTx(A)$.  Furthermore, the
formulas $\DTx(A)$ and $\DTx(B)$ share extension variable precisely to the
extent that they share subformulas.  More precisely, if $C$ is a subformula of~$A$,
then $\DTx(A)$ uses the extension variable~$e_C$ to denote the subformula~$C$,
using exactly the same extension axioms $\mathcal{A}_C$.

With these constructions, the $\LK$ proof~$\pi$ is translated
to a $\DT$ proof by replacing every (Boolean) formula~$A$ in~$\pi$ with
the $\DT$ formula $\DTx(A)$ and using
as extension axioms, the set $\bigcup_A \mathcal{A}_A$ where the
union is taken over all formulas~$A$ appearing in~$\pi$.
This yields $\pi^\prime$, and we claim
this can readily be fixed up to be a valid $\DT$ proof.
For instance, an $\lorRight$ in~$\pi$
\begin{prooftree}
\Axiom$\Gamma \fCenter \Delta, A$
\Axiom$\Gamma \fCenter \Delta, B$
\LeftLabel{\emph{$\lorRight$:}}
\BinaryInf$\Gamma \fCenter \Delta, A\land B$
\end{prooftree}
gets transformed to
\begin{prooftree}
\Axiom$\DTx(\Gamma) \fCenter \DTx(\Delta), \DTx(A)$
\Axiom$\DTx(\Gamma) \fCenter \DTx(\Delta), \DTx(B)$
\BinaryInf$\DTx(\Gamma) \fCenter \DTx(\Delta), \DTx(A\land B)$
\end{prooftree}
This can be fixed up to be a valid inference using cuts with the
sequents $\DTx(A), \DTx(B) \sequent \AndDT(\DTx(A),\DTx(B))$
and $\AndDT(\DTx(A),\DTx(B)) \sequent \DTx(A)$
and $\AndDT(\DTx(A),\DTx(B)) \sequent \DTx(A)$. These
three sequents have polynomial size proofs by Lemma~\ref{lem:eDTAndOr}.\footnote{%
As stated in the previous footnote, this use of Lemma~\ref{lem:eDTAndOr} is
the reason the $\DT$ proof ends up dag-like instead of tree-like.}

The $\landLeft$, $\lorLeft$ and $\lorRight$ inferences
in~$\pi$ are handled similarly. Other inferences in~$\pi$ are
trivial to handle.

After fixing up the inferences in~$\pi^\prime$ in this way,
we obtain a valid $\DT$ proof~$\pi_1$ of the sequent
$\DTx(\Gamma) \sequent \DTx(\Delta)$ where $\Gamma \sequent \Delta$ is
the final line of~$\pi$.

For polynomial simulation, the last
line of~$\pi$ is a sequent of the form~(\ref{eq:fromLK}),
namely $\Gamma$ is a multiset of disjunctions of literals,
and $\Delta$ is a multiset of conjunctions of literals. Referring
to Equation~(\ref{eq:fromLK}), a conjunct will
$\bigwedge \vec b_i$ will have the conjunctions nested in a balanced fashion
by our assumption that formulas in~$\pi$ have logarithmic height.
However, it is straightforward to give a polynomial size, cut free
$\DT$ proof of $\DTx(\bigwedge \vec b_i)\sequent \Conj( \vec b_i)$ for an arbitrary nesting
of conjunctions in $\bigwedge \vec b_i$.
Likewise, there are polynomial size, cut-free
$\DT$-proofs of $\Disj(\vec a_i) \sequent \DTx(\bigvee \vec a_i)$.
Adding cuts with these to the end of~$\pi_1$ gives the desired
polynomial size $\DT$ proof of (\ref{eq:toDT}).
\end{proof}

\subsection{$\LK$ quasipolynomially simulates $\eLNDT$}\label{sec:LK_ELNDT}

The intuition for the next simulation is that $\eNDT$ formulas
define nondeterministic logspace properties, and these are
expressible with quasipolynomial size Boolean formulas.

\begin{theorem}\label{thm:LK_ELNDT}
$\LK$ quasipolynomially simulates $\eLNDT$. As a result, $\LK$ also quasipolynomially
simulates $\eLDT$.
\end{theorem}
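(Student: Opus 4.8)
The plan is to translate an $\eLNDT$ proof $\pi$, line by line, into an $\LK$ proof by replacing each $\eNDT$ formula with an equivalent Boolean formula. The key quantitative fact driving the simulation is that a nondeterministic branching program of size $n$ computes an $\NL$ property, and $\NL$ properties are expressible by quasipolynomial-size Boolean formulas. Concretely, an $\eNDT$ formula $A$ together with its extension axioms $\{e_i \liff A_i\}_{i<n}$ describes a (nondeterministic) branching program $G_A$ whose acceptance can be captured by a reachability predicate $\Reach$. The plan is to unfold this reachability predicate into a Boolean formula of quasipolynomial size using the standard ``divide the path in half'' recursion (the Savitch-style / Immerman-style formula for $st$-connectivity), which yields depth $O(\log^2 n)$ and hence size $2^{O(\log^2 n)} = n^{O(\log n)}$.

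First I would make precise the translation $A \mapsto A^*$ sending each $\eNDT$ formula to a Boolean formula. For each extension variable $e_i$ (equivalently, each node of the branching program) and each ``distance bound'' $2^k$, I would introduce a formula $\Reach_k(u,v)$ asserting that node $v$ is reachable from node $u$ by a path of length $\le 2^k$ that is consistent with the underlying literals; the recursion $\Reach_{k+1}(u,v) := \bigvee_w \big(\Reach_k(u,w) \land \Reach_k(w,v)\big)$ halves the path length at each stage, so $O(\log n)$ levels suffice to reach all $n$ nodes. The acceptance formula $A^*$ then asserts reachability from the source of $G_A$ to an accepting sink, with the $\lor$-nodes handled by the nondeterministic branching semantics already built into $\Reach$. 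I would verify that $|A^*| = n^{O(\log n)}$ and, crucially, that $A^*$ has height $O(\log^2 n)$, so that by the balancing results (\cite{Reckhow:thesis,Buss:selfcons}) all intermediate formulas stay manageable.

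Next I would carry out the line-by-line simulation. Each sequent $\Gamma \sequent \Delta$ of $\pi$ becomes $\Gamma^* \sequent \Delta^*$. The axioms and the structural rules (contraction, weakening, cut) translate directly. The main work is the decision rules \emph{dec-l}, \emph{dec-r} and the $\lor$ rules: for these I would establish the ``bridging'' sequents, e.g.\ that $(ApB)^* \sequent A^*, p$ and $p, (ApB)^* \sequent B^*$ and their converses have quasipolynomial-size $\LK$ proofs, so that each $\eLNDT$ inference can be replayed in $\LK$ after inserting a constant number of cuts against these bridging sequents. The extension axioms $e_i \liff A_i$ are discharged by proving $e_i^* \sequent A_i^*$ and $A_i^* \sequent e_i^*$ in $\LK$ — but since $e_i^*$ is \emph{defined} as the reachability formula built from $A_i$, these are essentially instances of the unfolding identity $\Reach_{k+1} \liff \bigvee (\Reach_k \land \Reach_k)$, which have short $\LK$ proofs by the balancing of the recursion. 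Finally, to match Definition~\ref{def:simulations}, the endsequent of the form~(\ref{eq:fromLK}) is obtained by cutting against the easy sequents relating $A^*$ to the prescribed $\bigvee \vec a_i$ and $\bigwedge \vec b_i$.

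The hard part will be controlling the \emph{correctness proofs} of the reachability unfolding inside $\LK$ while keeping them quasipolynomial: one must show, by induction on the recursion depth $k$, that $\LK$ proves the equivalence of $\Reach_{k+1}(u,v)$ with its defining disjunction, and that composing $O(\log n)$ such levels across all $O(n)$ node-pairs does not blow the size past $n^{O(\log n)}$. This amounts to formalizing the Immerman–Szelepcs\'enyi-style reasoning about $\NL$-reachability as short Frege derivations — the genuinely delicate bookkeeping is ensuring the induction on path-length doubling is uniform enough that each of the $O(\log n)$ stages costs only a polynomial factor, so the product stays quasipolynomial rather than exponential. The second statement, that $\LK$ quasipolynomially simulates $\eLDT$, then follows immediately since every $\eLDT$ proof is an $\eLNDT$ proof.
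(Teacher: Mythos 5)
Your overall strategy coincides with the paper's: define quasipolynomial-size Boolean ``truth definitions'' for $\eNDT$ formulas via $st$-connectivity in the underlying branching-program graph, show $\LK$ proves the relevant unfolding identities, and replay the $\eLNDT$ proof line by line with a constant number of cuts against bridging sequents; your size/depth accounting ($n^{O(\log n)}$ size, $O(\log^2 n)$ depth via the path-halving recursion) is also correct. However, there is a genuine gap in your setup. You build a \emph{separate} graph $G_A$ for each formula $A$, so your bridging sequents, e.g.\ $(ApB)^* \sequent A^*, p$, relate reachability formulas over \emph{different} graphs: $(ApB)^*$ speaks about $G_{ApB}$ while $A^*$ speaks about $G_A$. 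The unfolding identity for $G_{ApB}$ only yields $(\overline{p} \land \Reach_{G_{ApB}}(A)) \lor (p \land \Reach_{G_{ApB}}(B))$, in which the reachability predicates still quantify over midpoints ranging over the vertices of $G_{ApB}$; these are \emph{not} syntactically the formulas $A^*$, $B^*$. So the bridging sequents are not ``instances of the unfolding identity'' as you claim --- you would additionally need an $\LK$-provable invariance lemma stating that the Savitch-style reachability formula is unaffected by embedding $G_A$ into the larger graph $G_{ApB}$ (and similarly for the extension axioms $e_i \liff A_i$, where $e_i^*$ and $A_i^*$ again live over different graphs). That lemma is provable, but it is another quasipolynomial-size induction that your proposal never mentions. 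The paper sidesteps this entirely by constructing one \emph{global} graph $G_\pi$ whose vertices are all subformulas occurring anywhere in $\pi$ or in the extension axioms, together with a single sink vertex; every occurrence of a subformula then translates to the \emph{same} formula $\Reach_i$, and all the needed bridging sequents are literally instances of the one unfolding identity~(\ref{eq:reachProp}).

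A secondary point: the ``genuinely delicate'' part you identify is mislabelled. No Immerman--Szelepcs\'enyi-style reasoning (inductive counting, $\NL = \coNL$) is needed anywhere in this simulation; what is needed is only that $\LK$ has quasipolynomial-size proofs of the elementary unfolding property of the reachability formulas, which the paper imports from \cite{Buss:QuasiPolyPHP} rather than re-deriving. Your plan to prove the unfolding identities by induction on the recursion depth is the right shape of argument, but framing the burden as formalizing $\NL$-completeness-style reasoning overstates (and somewhat misdirects) what the theorem requires.
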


\begin{proof}[Proof sketch]
Suppose $\pi$ is an $\eLNDT$ proof of a sequent
$\Gamma \sequent \Delta$ of $\eNDT$ formulas,
and with associated extension axioms
$\mathcal{A} = \{e_i \liff A_i \}_{i\in I}$. We must
construct an LK proof~$\pi^\prime$
quasipolynomially simulating~$\pi$. The idea for forming~$\pi^\prime$
is to give truth definitions for all formulas
appearing in $\pi$, and then prove that all
sequents are in~$\pi$ are true under these truth
definitions. The truth definition will be based on
st-connectivity in a directed graph~$G_\pi$.  The
nodes of~$G_\pi$ will be the subformulas of
formulas in $\pi$ or~$\cal A$; the edges will be defined
in terms of the literals~$p$ used in~$\pi$. It is well-known
that there are quasipolynomial formulas expressing
st-connectivity in~$G_\pi$. Furthermore, by
\cite{Buss:QuasiPolyPHP}, straightforward constructions
of these quasipolynomial formulas can be used in
$\LK$ proofs to prove basic properties
of st-connectivity.\footnote{The analogous
results were earlier formulated within the bounded
arithmetic theory $\Utheory^1_2$ by
Beckmann-Buss~\cite{BeckmannBuss:localImprove}.
$\Utheory^1_2$ has proof theoretic strength corresponding
to polynomial space, or under the RSUV isomorphism to
quasilogarithmic (that is, $(\log n)^{O(1)}$) space. Likewise,
it corresponds to propositional provability with
$2^{n^{O(1)}}$ size $\LK$ proofs, or under the
RSUV isomorphism, with propositional provability
with polynomial size $\LK$ proofs.  This last claim
does not appear explicitly in the literature, but
see Dowd~\cite{Dowd:PSA,Dowd:PhD} and
Beckmann-Buss~\cite{BeckmannBuss:NPsearchFrege}. }

We describe the direct graph $G_\pi$ in more detail. Consider all
distinct subformulas appearing either (a)~in some
formula~$A$ in~$\pi$ or (b)~in some $A_i$ from
the extension axioms.  These subformulas are
vertices of the graph~$G_\pi$. In addition, $G_\pi$
contains one additional vertex, called~1. For example, suppose
that the formula~$A := (e_1\, \overline p\, p)$ appears
in~$\pi$ and that $e_1 \liff (\overline q \, p \, p)$ is
an extension axiom in~$\cal A$. These contribute the following
nodes to~$G_\pi$:
\begin{equation}\label{eq:GpiExample}
 (e_1\, \overline p\, p), \qquad e_1, \qquad
 p, \qquad  (\overline q \, p \, p), \qquad
 \overline q , \qquad\hbox{and} \qquad 1.
\end{equation}
Enumerate the the vertices
of $G_\pi$ in any arbitrary order as
$v_0, v_1, \dots, v_m$, say with
$v_0$ the vertex~1 and the rest of the vertices
in arbitrary order.  Note $m$ is polynomially bounded (in fact,
linearly bounded) by $|\pi|$.

The edges present in~$G_\pi$
are specified by Boolean formulas~$\varphi_{i,j}$
for distinct $i, j$ in $\{0,\dots,m \}$, so that $\varphi_{i,j}$
is true if there is a directed edge from $v_i$ to~$v_j$
in~$G_\pi$. For a vertex $v_i$ of~$G_\pi$ equal to a formula
$(A p B)$, and let the vertices $v_j$ and $v_{j^\prime}$ in~$G_\pi$
be the $\DT$ formulas $A$ and~$B$. Then $\varphi_{i,j}$ is
the Boolean formula~$\overline p$ and $\varphi_{i,j^\prime}$ is
the Boolean formula~$p$. For vertex~$v_i$ equal to some~$e_k$
and vertex~$v_j$ equal to~$A_k$, then $\varphi_{i,j}$ is
the constant Boolean formula $\top$.  Third, if the vertex~$v_i$
is a DT formula~$p$ with $p$ a literal, then $\varphi_{i,0}$ is
the Boolean formula~$p$. All other formulas $\varphi_{i,j}$
are defined to equal the constant Boolean formula~$\perp$.
(Strictly speaking, $\top$ and~$\perp$ are not allowed constants
for Boolean formulas; instead, they stand for $(p\lor \overline p)$
and $(p\land \overline p)$ for some literal~$p$.)

Returning to the example, let $v_1,\dots, v_5$ be the five
formulas in the order indicated in~(\ref{eq:GpiExample}),
and $v_0$ be~1.
Then, $\varphi_{1,2}$~is~$p$;
$\varphi_{1,3}$~is~$\overline p$;
$\varphi_{2,4}$~is~$\top$;
$\varphi_{3,0}$~is~$p$;
$\varphi_{4,5}$~is~$\overline p$;
$\varphi_{4,3}$~is~$p$;
and $\varphi_{5,0}$~is~$\overline q$.

Finally, for $v_i$ a vertex in~$G_\pi$, namely
a subformula used in~$\pi$, define $\Reach_i$
to be a Boolean formula expressing that there
is a path in $G_\pi$ from $v_i$ to~$v_0$. As discussed
in~\cite{Buss:QuasiPolyPHP}, $\Reach_i$ can
be expressed by a quasipolynomial size formula, and
there are quasipolynomial size proofs of elementary
properties of $\Reach_i$, notably of
\begin{equation}\label{eq:reachProp}
Reach_i ~\liff~ \bigvee\nolimits_{j\not=i} \bigl(
                    \varphi_{i,j} \land \Reach_j\bigr)
\end{equation}
Each line in~$\pi$ is a sequent of the form
\[
v_{i_1},\dots, v_{i_k} \sequent
v_{j_1}, \dots, v_{j_\ell}.
\]
To form the $\LK$ proof~$\pi^\prime$, replace each such
sequent with the quasipolynomial size sequent
\[
\Reach_{i_1},\dots, \Reach_{i_k} \sequent
\Reach_{j_1}, \dots, \Reach_{j_\ell}.
\]
It is now easy to fix up $\pi^\prime$ be a valid
$\LK$ proof.  Initial sequents are handled trivially,
since if $v_i$ is $p$ then $\Reach_i$ is also~$p$.
The only non-trivial inferences are
decision rules \emph{dec-l} and \emph{dec-r}
and these are readily handled with the aid of~(\ref{eq:reachProp}).
\end{proof}

\section{Conclusions}
This work presented sequent-style systems $\LDT$, $\LNDT$, $\eLDT$ and $\eLNDT$ that manipulate decision trees, nondeterministic decision trees, branching programs (via extension) and nondeterministic branching programs (also via extension) respectively.
We examined their relative proof complexity and also compared them to (bounded depth) Frege systems (more precisely their representations in the sequent calculus).

In particular, since (nondeterministic) Branching Programs constitute a natural nonuniform version of (nondeterministic) $\logspace$, the system $\eLDT$ ($\eLNDT $) can be seen as a natural propositional system for (nondeterministic) logspace.
This mimics the way that $\LK$ (or the Frege system) is a natural system for $\Alogtime$ (via Boolean formulas) and $\eLK$ (or extended Frege) is a natural system for $\Ptime$ (via Boolean circuits).
\medskip

We did not compare the proof complexity theoretic strength of our systems $\eLDT $ and $\eLNDT$ with the system for $\logspace$ in \cite{Cook:EdinburghSlides} and the systems for $\logspace$ and $\NL$ in \cite{Perron:logspace,Perron:thesis}.
In future work we intend to show that our systems correspond to the bounded arithmetic theories $\VL$ and $\VNL$, in the usual way.
Namely, proofs of $\Pi_1 $ formulas in $\VL $ translate to families of small $\eLDT$ proofs of each instance, and, conversely, $\VL$ proves the soundness of $\eLDT$. Similarly for $\VNL$ and $\eLNDT$.
This would render our systems polynomially equivalent to their respective systems from \cite{Cook:EdinburghSlides,Perron:logspace,Perron:thesis}, though this remains work in progress.

There are two natural open questions arising from this work.
The first concerns the exact relationship between $\LDT$ and low-depth systems:

\begin{question}
	\label{quest:tree-1lk-vs-tree-ldt}
	Does tree-$\oneLK$ polynomially simulate tree-$\LDT$, or is there a quasipolynomial separation between the two?
\end{question}

The second open question is whether tree-like systems for branching programs may polynomially simulate their corresponding dag-like ones.

\begin{question}
	\label{quest:tree-eldt-vs-eldt}
	Does tree-$\eLDT$ polynomially simulate $\eLDT$? Similarly for $\eLNDT$
\end{question}

While well-defined, the systems tree-$\eLDT$ and tree-$\eLNDT$ do not seem very robust, in the sense that it is not immediate how to witness branching program isomorphisms with short proofs, cf.~\ref{lem:extNamesEquiv}.
Nonetheless, it would be interesting to settle their proof complexity theoretic status.

There has been much recent work on the proof complexity of systems that may manipulate OBDDs \cite{Knop:IPSlike,BussIKS18,ItsyksonKRS17}, a special kind of branching program where propositional variables must occur in the same relative order on each path through the dag.
In fact, we could also define an `OBDD fragment' of $\eLDT$ by restricting lines to $\eDT$ formulas expressing OBDDs, as alluded to in Example~\ref{example:bp-to-edt-formula-th42}.
It would be interesting to examine such systems from the point of view of proof complexity in the future, in particular comparing them to existing OBDD systems.

In this work we restricted the expressivity of all lines in a proof in order to define our various systems.
An alternative approach is to restrict only the cut-formulas.
Over conclusions of the appropriate form, this makes no difference to the notion of a proof thanks to the subformula property, but such systems have the advantage of being complete for all classes of formulas (for instance, via cut-free completeness).
In this way we could have rather considered one single ambient system consisting of the connectives and rules for decision literals, disjunction and conjunction.
Our various systems could thence be recovered by only restricting cut formulas.
Many of our results already go through in this setting with respect to the provability of arbitrary formulas.

\bibliographystyle{siam}
\bibliography{localPropDTandBP,logic}
\end{document}